\long\def\switch#1#2{#1}
\def\smallsection#1{\smallskip\noindent\textbf{#1}.}
\newtheorem{theo}{Theorem}
\newtheorem{prop}{Proposition}[section]
\newtheorem{defi}[prop]{Definition}
\newtheorem{lemm}[prop]{Lemma}
\numberwithin{equation}{section}
\DeclareMathOperator{\Res}{Res}
\DeclareMathOperator{\comp}{comp}
\DeclareMathOperator{\const}{const}
\let\Im=\Imag
\let\Re=\Real
\DeclareMathOperator{\sgn}{sgn}
\DeclareMathOperator{\supp}{supp}
\DeclareMathOperator{\Vol}{Vol}
\DeclareMathOperator{\WF}{WF}
\def\WFh{\WF_h}
\title[Asymptotics of waves and resonances for black holes]%
{Asymptotics of linear waves and resonances\\
with applications to black holes}
\author{Semyon Dyatlov}
\email{dyatlov@math.berkeley.edu}
\address{Department of Mathematics, University of California,
Berkeley, CA 94720, USA}
\begin{document}

\begin{abstract}
We apply the results of~\cite{nhp} to describe asymptotic behavior of linear waves on stationary Lorentzian
metrics with $r$-normally hyperbolic trapped sets, in particular
Kerr and Kerr--de Sitter metrics with $|a|<M$ and $M\Lambda a\ll 1$.
We prove that if the initial data is localized
at frequencies $\sim\lambda\gg 1$, then the energy norm of the solution is bounded by
$\mathcal O(\lambda^{1/2} e^{-(\nu_{\min}-\varepsilon)t/2})+\mathcal O(\lambda^{-\infty})$,
for $t\leq C\log\lambda$, where $\nu_{\min}$ is a natural dynamical quantity.
The key tool is a microlocal projector splitting the solution into a component
with controlled rate of exponential decay and an $\mathcal O(\lambda e^{-(\nu_{\min}-\varepsilon)t})+\mathcal O(\lambda^{-\infty})$
remainder; this splitting can be viewed as an analog of resonance expansion.
Moreover, for the Kerr--de Sitter case we study quasi-normal modes; under a dynamical pinching condition,
a Weyl law in a band holds.
\end{abstract}

\maketitle

\addtocounter{section}{1}
\addcontentsline{toc}{section}{1. Introduction}

The subject of this \switch{paper}{chapter} are decay properties of solutions to the wave equation
for the rotating Kerr (cosmological constant $\Lambda=0$) and Kerr--de Sitter ($\Lambda>0$)
black holes, as well as for their stationary perturbations.
In the recent decade, there has been a lot of progress in understanding the upper
bounds on these solutions, producing a polynomial decay rate $\mathcal O(t^{-3})$
for Kerr and an exponential decay rate $\mathcal O(e^{-\nu t})$ for
Kerr--de Sitter (the latter is modulo constant functions). The weaker decay for $\Lambda=0$ is explained by the
presence of an asymptotically Euclidean infinite end; however, this polynomial decay
comes from low frequency contributions.

We instead concentrate on the decay of solutions
with initial data localized at \emph{high frequencies} $\sim \lambda\gg 1$; it is related
to the geometry of the \emph{trapped set} $\widetilde K$, consisting of lightlike geodesics
that never escape to the spatial infinity or through the event horizons. The trapped
set for both Kerr and Kerr--de Sitter metrics is \emph{$r$-normally hyperbolic},
and this dynamical property is stable under stationary perturbations
of the metric~-- see~\S\ref{s:kdsu-stability}.
The key quantities associated to such trapping are the minimal and maximal transversal expansion rates
$0< \nu_{\min}\leq \nu_{\max}$, see~\eqref{e:tilde-nu-min}, \eqref{e:tilde-nu-max}.
Using \switch{our recent work~\cite{nhp},}{Chapter~\ref{c:nhp},} we show the exponential decay rate
$\mathcal O(\lambda^{1/2} e^{-(\nu_{\min}-\varepsilon)t/2})+\mathcal O(\lambda^{-\infty})$, valid for $t=\mathcal O(\log\lambda)$
(Theorem~\ref{t:decay}).
This bound is new for the Kerr case, complementing Price's law.

Our methods give a more precise microlocal description
of long time propagation of high frequency solutions.
In Theorem~\ref{t:res-dec}, we split a solution $u(t)$ into two approximate solutions to the
wave equation, $u_\Pi(t)$ and $u_R(t)$, with the rate of decay of $u_\Pi(t)$ between
$e^{-(\nu_{\max}+\varepsilon)t/2}$ and $e^{-(\nu_{\min}-\varepsilon)t/2}$
and $u_R(t)$ bounded from above by $\lambda e^{-(\nu_{\min}-\varepsilon)t}$, all modulo
$\mathcal O(\lambda^{-\infty})$ errors. This splitting is achieved using the Fourier integral
operator $\Pi$ constructed in\switch{~\cite{nhp}}{ Chapter~\ref{c:nhp}} using the global dynamics of the flow.

For the $\Lambda>0$ case, we furthermore study resonances, or \emph{quasi-normal modes},
the complex frequencies $z$ of solutions to the wave equation of the form $e^{-izt}v(x)$. Under a pinching
condition $\nu_{\max}<2\nu_{\min}$ which is numerically verified to be true for a large range of parameters
(see Figure~\ref{f:kdsu-nums}(a)), we show existence of a band of quasi-normal modes
satisfying a Weyl law~-- Theorem~\ref{t:kds-weyl}. In particular, this provides a large family of
exact high frequency solutions to the wave equation that decay no faster than $e^{-(\nu_{\max}+\varepsilon)t/2}$.
We finally compare
our theoretical prediction on the imaginary parts of resonances in the band with
the exact quasi-normal modes for Kerr computed by the authors of~\cite{bcs}, obtaining
remarkable agreement~-- see Figure~\ref{f:kdsu-nums}(b).

Theorems~\ref{t:decay}--\ref{t:kds-weyl}
are related to the resonance expansion and quantization condition proved for the slowly rotating Kerr--de Sitter
in\switch{~\cite{zeeman}.}{ Chapter~\ref{c:zeeman}.}
In this \switch{paper}{chapter}, however, we use dynamical assumptions stable
under perturbations, rather than complete integrability of geodesic flow on Kerr(--de Sitter),
and do not recover the precise results of\switch{~\cite{zeeman}.}{ Chapter~\ref{c:zeeman}.}

\smallsection{Statement of results}
The Kerr(--de Sitter) metric, described in detail in~\S\ref{s:kdsu-the-metric}, depends
on three parameters, $M$ (mass), $a$ (speed of rotation), and $\Lambda$ (cosmological constant).
We assume that the dimensionless quantities $a/M$ and $\Lambda M^2$ lie in a small
neighborhood (see Figure~\ref{f:kdsu-gaps}(a)) of either the Schwarzschild(--de Sitter) case,
\begin{equation}
  \label{e:sds-bound}
a=0,\quad
9\Lambda M^2<1,
\end{equation}
or the subextremal Kerr case
\begin{equation}
  \label{e:kerr-bound}
\Lambda=0,\quad
|a|<M.
\end{equation}
Our results apply as long as certain dynamical conditions are satisfied, and
likely hold for a larger range of parameters, see the remark following Proposition~\ref{l:kds-trapping}.
To facilitate the discussion of perturbations, we adopt the abstract framework
of~\S\ref{s:kdsu-assumptions}, with the spacetime $\widetilde X_0=\mathbb R_t\times X_0$
and a Lorentzian metric $\tilde g$
on $\widetilde X_0$ which is stationary in the sense that $\partial_t$
is Killing. The space slice $X_0$ is noncompact because
of the spatial infinity and/or event horizon(s); to measure
the distance to those, we use
a function $\mu\in C^\infty(X_0;(0,\infty))$, such that
$X_{\delta}:=\{\mu>\delta\}$ is compact for each $\delta>0$. For the exact
Kerr(--de Sitter metric), the function $\mu$ is defined in~\eqref{e:mu-defined}.

We study solutions to the wave equation in $\widetilde X_0$,
\begin{equation}
  \label{e:kds-we}
\Box_{\tilde g}u(t)=0,\ t\geq 0;\quad
u|_{t=0}=f_0,\
\partial_t u|_{t=0}=f_1,
\end{equation}
with $f_0,f_1\in C_0^\infty(X_0)$ and
the time variable shifted so that the metric continues smoothly past
the event horizons~-- see~\eqref{e:wewewe}.
To simplify the statements, and because our work focuses on the phenomena driven by
the trapped set, we only study the behavior of solutions
in $X_{\delta_1}$ for some small $\delta_1>0$. 
Define the energy norm%
\begin{equation}
  \label{e:kdsu-energy}
\|u(t)\|_{\mathcal E}:=\|u(t)\|_{H^1(X_{\delta_1})}+\|\partial_t u(t)\|_{L^2(X_{\delta_1})}.
\end{equation}
\begin{theo}
  \label{t:decay}
Fix $T,N>0$, $\varepsilon,\delta_1>0$, and 
let $(\widetilde X_0,\tilde g)$ be the Kerr(--de Sitter) metric
with $M,a,\Lambda$ near one of the cases \eqref{e:sds-bound}
or~\eqref{e:kerr-bound}, or its small stationary perturbation as discussed
in~\S\ref{s:kdsu-stability} (the maximal size of the perturbation depending on $T,N$).

Assume that $f_0(\lambda),f_1(\lambda)\in C_0^\infty(X_{\delta_1})$ are localized at frequency~$\sim\lambda\to\infty$
in the sense of~\eqref{e:kdsu-localization}.
Then the solution~$u_\lambda$ to~\eqref{e:kds-we} with initial data $f_0,f_1$
satisfies the bound
\begin{equation}
  \label{e:decay}
\|u_\lambda (t)\|_{\mathcal E}\leq C (\lambda^{1/2}e^{-(\nu_{\min}-\varepsilon)t/2}+\lambda^{-N})\|u_\lambda(0)\|_{\mathcal E},\quad
0\leq t\leq T\log\lambda.
\end{equation}
\end{theo}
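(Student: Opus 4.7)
The plan is to reduce Theorem~\ref{t:decay} to the microlocal splitting produced by Theorem~\ref{t:res-dec} and to bridge the short-time regime with a crude energy estimate. Theorem~\ref{t:res-dec} will furnish an approximate decomposition $u_\lambda(t)=u_\Pi(t)+u_R(t)$ modulo $\mathcal{O}(\lambda^{-\infty})\|u_\lambda(0)\|_{\mathcal{E}}$, where
$\|u_\Pi(t)\|_{\mathcal{E}}\le C\lambda^{1/2}e^{-(\nu_{\min}-\varepsilon)t/2}\|u_\lambda(0)\|_{\mathcal{E}}$
(the $\lambda^{1/2}$ factor reflecting the mapping properties of the projector $\Pi$ from $\mathcal{E}$ onto its range, which is microlocally concentrated on the codimension-one trapped set) and
$\|u_R(t)\|_{\mathcal{E}}\le C\lambda\, e^{-(\nu_{\min}-\varepsilon)t}\|u_\lambda(0)\|_{\mathcal{E}}$.

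Set $a=\nu_{\min}-\varepsilon$ and $t_0=a^{-1}\log\lambda$. For $t\in[t_0,T\log\lambda]$ we have $\lambda^{1/2}e^{-at/2}\le 1$, and therefore
\begin{equation*}
\lambda\, e^{-at}=\bigl(\lambda^{1/2}e^{-at/2}\bigr)^{2}\le \lambda^{1/2}e^{-at/2}.
\end{equation*}
Together with the bound on $u_\Pi$, this yields \eqref{e:decay} on this subinterval, absorbing the $\mathcal{O}(\lambda^{-\infty})$ remainder into the $\lambda^{-N}$ term. For $t\in[0,t_0]$ the claimed right-hand side already satisfies $\lambda^{1/2}e^{-at/2}\ge 1$, so it suffices to establish a crude bound $\|u_\lambda(t)\|_{\mathcal{E}}\le C\|u_\lambda(0)\|_{\mathcal{E}}$ on this interval. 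I would obtain this from finite speed of propagation, localizing $u_\lambda(t)$ restricted to $X_{\delta_1}$ to depend only on initial data in some fixed $X_{\delta_2}$, $\delta_2<\delta_1$, together with a coercive local energy inequality on the compact set $X_{\delta_2}$ using a timelike extension of $\partial_t$ away from the ergoregion and the horizons.

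The main obstacle I anticipate is ensuring that the conclusions of \cite{nhp}, and in particular Theorem~\ref{t:res-dec}, apply cleanly in the Lorentzian energy-norm setting of \eqref{e:kdsu-energy}. This requires reducing the wave equation $\Box_{\tilde g}u=0$ to a semiclassical first-order system $(hD_t-P(h))v=0$ with $h=\lambda^{-1}$, identifying the abstract trapped set of that framework with $\widetilde K$, checking the $r$-normal hyperbolicity with the specified expansion rates $\nu_{\min},\nu_{\max}$, and translating the high-frequency localization hypothesis on $(f_0,f_1)$ into the microlocal hypotheses required as input. A secondary difficulty is keeping all implicit constants uniform up to the Ehrenfest time $T\log\lambda$; this should be built into Theorem~\ref{t:res-dec} provided the perturbation size is taken small depending on $T$ and $N$, as already allowed by the statement.
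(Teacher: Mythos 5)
Your reduction for times $t\geq t_0:=(\nu_{\min}-\varepsilon)^{-1}\log\lambda$ coincides with the paper's: both combine \eqref{e:decay-1}, \eqref{e:decay-2}, \eqref{e:decay-4} and observe $\lambda e^{-at}\leq \lambda^{1/2}e^{-at/2}$ there. The genuine gap is in the short-time regime $t\in[0,t_0]$.

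Your proposed crude bound $\|u_\lambda(t)\|_{\mathcal E}\leq C\|u_\lambda(0)\|_{\mathcal E}$ on $[0,t_0]$ does not follow from finite speed of propagation plus a local energy inequality with a ``timelike extension of $\partial_t$.'' Since $\partial_t$ is spacelike inside the ergoregion and the wave must pass through the ergoregion to reach the trapped set, any timelike multiplier vector field you use there is not Killing; the resulting energy current is not conserved, and the standard energy argument yields only $\|u_\lambda(t)\|_{\mathcal E}\leq Ce^{Ct}\|u_\lambda(0)\|_{\mathcal E}$ with a constant $C$ in the exponent that you cannot tune (this is precisely \eqref{e:crude-energy-bound} in the paper). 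Since $t_0\sim(\nu_{\min}-\varepsilon)^{-1}\log\lambda\to\infty$, this produces a power of $\lambda$ that need not be $\leq\lambda^{1/2}$, so the claimed uniform bound is not obtained. Uniform boundedness over log-times is in fact a substantive statement: it is not far from the full boundedness results for subextremal Kerr, and the paper explicitly states that it avoids invoking such results (see the remark following Theorem~\ref{t:res-dec}).

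The paper's actual mechanism for the short-time interval is the subexponential growth estimate \eqref{e:ini-5}, $\|u(t)\|_{\mathcal E}\leq Ce^{\varepsilon t}\|u(0)\|_{\mathcal E}$, which is obtained microlocally by the same positive-commutator argument near the trapped set that produces the decomposition, not by an elementary energy estimate. Note that \eqref{e:ini-5} is part of Theorem~\ref{t:internal-waves} but does not appear in the statement of Theorem~\ref{t:res-dec}, so you cannot deduce Theorem~\ref{t:decay} purely from Theorem~\ref{t:res-dec} as your proposal attempts; you must go back to the intermediate Theorem~\ref{t:internal-waves} (or equivalently reprove \eqref{e:ini-5}). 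With \eqref{e:ini-5} in hand, the short-time estimate reads $e^{\varepsilon t}\leq\lambda^{1/2}e^{-(\nu_{\min}-3\varepsilon)t/2}$ whenever $(\nu_{\min}-\varepsilon)t\leq\log\lambda$, giving the claimed bound with $\varepsilon$ replaced by $3\varepsilon$, which is harmless.
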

Here we say that $f=f(\lambda)$ is localized at frequencies $\sim\lambda$,
if for each coordinate neighborhood $U$ in $X_0$ and each
$\chi\in C_0^\infty(U)$, the Fourier transforms $\widehat{\chi f}(\xi)$ in the corresponding
coordinate system satisfy for each $N$,
\begin{equation}
  \label{e:kdsu-localization}
\int_{\mathbb R^3\setminus\{C_{U,\chi}^{-1}\leq |\xi|\leq C_{U,\chi}\}}\langle\xi\rangle^N|\widehat{\chi f}(\xi)|^2\,d\xi
=\mathcal O(\lambda^{-N}),
\end{equation}
where $C_{U,\chi}>0$ is a constant independent of $\lambda$.
For the proof, it is more convenient to use \emph{semiclassical rescaling} of frequencies
$\xi\mapsto h\xi$, where $h=\lambda^{-1}\to 0$ is the semiclassical parameter,
and the notion of $h$-wavefront set $\WFh(f)\subset T^*X_0$.
The requirement that $f_j$ is microlocalized at frequencies $\sim h^{-1}$
is then equivalent to stating that $\WFh(f_j)$ is contained in a fixed compact subset of $T^*X_0\setminus 0$,
with $0$ denoting the zero section; see~\S\ref{s:kdsu-prelims} for details.

The main component of the proof of Theorem~\ref{t:decay} is the following
\begin{theo}
  \label{t:res-dec}
Under the assumptions of Theorem~\ref{t:decay}, for each families $f_0(h),f_1(h)\in C_0^\infty(X_{\delta_1})$
with $\WFh(f_j)$ contained in a fixed compact subset of $T^*X_0\setminus 0$ and $u(h)$ the corresponding solution to~\eqref{e:kds-we},
for $t_0$ large enough there exists a decomposition
$$
u(t,x)=u_\Pi(t,x)+u_R(t,x),\quad
t_0\leq t\leq T\log(1/h),\
x\in X_{\delta/2},
$$
such that $\Box_{\tilde g}u_\Pi(t),\Box_{\tilde g}u_R(t)$
are $\mathcal O(h^N)_{H^N_h}$ on $X_{\delta_1}$ uniformly in $t\in [t_0,T\log(1/h)]$,
and we have uniformly in $t_0\leq t\leq T\log(1/h)$,
\begin{align}
  \label{e:decay-1}
\|u_\Pi(t_0)\|_{\mathcal E}&\leq Ch^{-1/2}\|u(0)\|_{\mathcal E},\\
  \label{e:decay-2}
\|u_\Pi(t)\|_{\mathcal E}&\leq Ce^{-(\nu_{\min}-\varepsilon)t/2}\|u_\Pi(t_0)\|_{\mathcal E}+Ch^N\|u(0)\|_{\mathcal E},\\
  \label{e:decay-3}
\|u_\Pi(t)\|_{\mathcal E}&\geq C^{-1}e^{-(\nu_{\max}+\varepsilon)t/2}\|u_\Pi(t_0)\|_{\mathcal E}-Ch^N\|u(0)\|_{\mathcal E},\\
  \label{e:decay-4}
\|u_R(t)\|_{\mathcal E}&\leq C(h^{-1}e^{-(\nu_{\min}-\varepsilon)t}+h^N)\|u(0)\|_{\mathcal E}.
\end{align}
\end{theo}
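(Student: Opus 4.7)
The plan is to construct the decomposition by reducing the Cauchy problem to a first-order semiclassical evolution and then applying the Fourier integral projector $\Pi$ of~\cite{nhp} associated to the $r$-normally hyperbolic trapped set. Since $\tilde g$ is stationary, microlocally near the characteristic set of $\Box_{\tilde g}$ one can factor
\[
\Box_{\tilde g} = c(x)(hD_t - A_+)(hD_t - A_-) + \mathcal O(h^\infty),
\]
where $A_\pm$ are self-adjoint semiclassical first-order operators on $X_0$ whose principal symbols are the two branches of the null cone; the Hamilton flows of $\sigma(A_\pm)$ project to the forward/backward lightlike geodesic flow, turning $\widetilde K$ into an $r$-normally hyperbolic trapped set $K_\pm\subset T^*X_0$ on each branch. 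The high-frequency data then split, via microlocal positive/negative-frequency cutoffs, into components $v_\pm$ evolving by $hD_t v_\pm = A_\pm v_\pm$, and the energy norm is equivalent on $X_{\delta_1}$ (modulo $\mathcal O(h^\infty)$) to the sum of the $H^1_h$ norms of $v_\pm$.

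Apply the projector $\Pi$ from~\cite{nhp} on each branch: it is a Fourier integral operator satisfying $\Pi^2=\Pi+\mathcal O(h^\infty)$, microlocally commuting with the propagator $e^{-itA_\pm/h}$, whose image is microlocalized in a shrinking neighborhood of $K_\pm$. Choose $t_0$ large enough that forward propagation for time $t_0$ has evacuated the wavefront-set points of $v_\pm(0)$ lying outside the forward-stable manifold out of a fixed neighborhood of $K_\pm$. Set $u_\Pi(t) := \Pi u(t)$ and $u_R(t):= (I-\Pi)u(t)$, understood via the frequency-branch decomposition and then recombined to real data. Because $\Pi$ commutes with $hD_t-A_\pm$ modulo $h^\infty$-smoothing, one automatically gets $\Box_{\tilde g}u_\Pi,\Box_{\tilde g}u_R=\mathcal O_{H^N_h}(h^N)$. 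The bound~\eqref{e:decay-1} is the $L^2\to L^2$ estimate for $\Pi$ with loss $h^{-1/2}$, arising from the defect in one direction in the canonical relation of $\Pi$.

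The upper bound~\eqref{e:decay-2} and lower bound~\eqref{e:decay-3} follow from the propagation estimates of~\cite{nhp} on the range of $\Pi$: the Hamilton flow expands in the unstable direction at a rate in $[\nu_{\min}-\varepsilon,\nu_{\max}+\varepsilon]$, and a volume/Jacobian argument converts this expansion into the corresponding contraction rates for $e^{-itA_\pm/h}$ acting on $\Pi$-localized states; the factor of $1/2$ in the exponents reflects the passage from symbol-level expansion to the wave energy, which is obtained as the square root of the quadratic form in $A_\pm$. For~\eqref{e:decay-4}, the wavefront set of $(I-\Pi)u(t_0)$ is disjoint from $K_\pm$, so standard forward propagation combined with the normally hyperbolic resolvent bound $\|\chi(A_\pm-z)^{-1}\chi\|_{L^2\to L^2}=\mathcal O(h^{-1})$ from~\cite{nhp} yields decay at the full hyperbolic rate $e^{-(\nu_{\min}-\varepsilon)t}$ with the stated $h^{-1}$ prefactor.

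The main obstacle is the lower bound~\eqref{e:decay-3}: one must rule out cancellations that would make $\|u_\Pi(t)\|_{\mathcal E}$ decay faster than the maximal contraction rate allows. Following the strategy of~\cite{nhp}, this requires a duality argument testing $u_\Pi(t)$ against a WKB packet concentrated on the backward-evolved unstable manifold, with the absence of decay guaranteed by a volume bound saturating $\nu_{\max}$. A secondary nuisance is tracking the $h$-losses through the FIO calculus for $\Pi$ and verifying that recombining the two frequency branches preserves all four estimates in the real energy norm on $X_{\delta_1}$; this uses the fact that reality of the data corresponds to a symmetry between the $+$ and $-$ branches that is respected by $\Pi$ modulo $\mathcal O(h^\infty)$.
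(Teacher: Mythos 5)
Your high-level strategy — factor $\Box_{\tilde g}$ microlocally near the light cone into a first-order semiclassical evolution, apply the projector $\Pi$ of~\cite{nhp}, and derive the four estimates from the commutation and propagation properties of $\Pi$ — is the same as the paper's. (The paper uses the single-branch factorization $h^2\Box_{\tilde g}=\widetilde{\mathcal S}(hD_t+P)\widetilde{\mathcal S}+\mathcal O(h^\infty)$ valid microlocally near $\widetilde{\mathcal U}\subset\mathcal C_+$, handling the $\mathcal C_-$ branch by complex conjugation rather than by a second factor; this is a cosmetic difference.) However, there are several substantive gaps.

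First, the definition $u_\Pi := \Pi u$ does not work as stated. The projector $\Pi$ satisfies $[P,\Pi]=\mathcal O(h^\infty)$ only microlocally near $\widehat W\times\widehat W$, i.e.\ near the trapped set, so $\Pi u(t)$ solves the wave equation microlocally near $K$ but not on all of $X_{\delta_1}$; the resulting $\Box_{\tilde g}u_\Pi$ is $\mathcal O(1)$ away from the trapped set, not $\mathcal O(h^N)$. Moreover, $\WFh(\Pi u(t))$ would be confined to a small neighborhood of $K$, making $\|u_\Pi(t_0)\|_{\mathcal E}$ dominated by the part of $u$ near the trapped set only, which would invalidate the lower bound~\eqref{e:decay-3} relative to the full $\|u(0)\|_{\mathcal E}$. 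The paper avoids this by defining $u_\Pi(t):=\widetilde{\mathcal S}'\big(e^{-it_1 P/h}\Pi\mathbf u(t-t_1)\big)$: the $\Pi$-projection is taken at the shifted time $t-t_1$, and then propagated forward by the exact Schr\"odinger propagator $e^{-it_1P/h}$, which spreads the microsupport over $\Gamma_+\cap X_{\delta_1}$ and makes $u_\Pi$ an $\mathcal O(h^\infty)$-approximate solution on all of $X_{\delta_1}$. Your construction needs an analogous correction.

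Second, a large part of the paper's proof — and the entire passage from the ``outgoing'' internal Theorem~\ref{t:internal-waves} to Theorem~\ref{t:res-dec} — is the verification that the escaping part of $u$ becomes $\mathcal O(h^\infty)$ on $X_{\delta_1}$ for $t\geq T_0$, so that $u$ satisfies the outgoing condition of Definition~\ref{d:we-outgoing}. The initial data are allowed to have $\WFh$ anywhere in a compact subset of $T^*X_0\setminus 0$, including on geodesics that escape to the event horizon(s) or, in the Kerr case $\Lambda=0$, to the asymptotically Euclidean end. Proposition~\ref{l:eh-escape} handles this, and for $\Lambda=0$ it genuinely requires long-range scattering estimates: Melrose's scattering calculus, the nontrapping resolvent bound of Vasy--Zworski, and Datchev's semiclassically-outgoing lemma. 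Your proposal omits this step entirely, yet without it the splitting $u=u_\Pi+u_R$ with the stated bounds simply does not hold.

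Finally, for the lower bound~\eqref{e:decay-3}, the paper does not use a duality argument against WKB packets; it uses a Gronwall/positive-commutator argument with the escape function $\mathcal X_+$ (following~\cite[Prop.~8.2]{nhp}), producing the differential inequality $(\partial_t+(\nu_{\max}+\varepsilon))\langle\mathcal X_+\Pi\mathbf u(t),\Pi\mathbf u(t)\rangle\geq -\mathcal O(h^\infty)$. Your proposed test-function argument might be viable in principle, but as sketched it is too imprecise to assess, while the commutator route is what makes the estimate uniform for $t$ up to $T\log(1/h)$.
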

The decomposition $u=u_\Pi+u_R$ is achieved in~\S\ref{s:kdsu-decay} using the Fourier integral operator $\Pi$
constructed for $r$-normally hyperbolic trapped sets in\switch{~\cite{nhp}}{ Chapter~\ref{c:nhp}}. The component $u_\Pi$
enjoys additional microlocal properties, such as localization on the outgoing tail
and approximately solving a pseudodifferential equation~-- see the proof
of Theorem~\ref{t:internal-waves} in~\S\ref{s:kdsu-decay}
and~\switch{\cite[\S8.5]{nhp}}{\S\ref{s:resonant-states}}.
We note that~\eqref{e:decay-3} gives a \emph{lower bound} on the rate of decay of the approximate
solution $u_\Pi$,
if $\|u_\Pi(t_0)\|_{\mathcal E}$ is not
too small compared to $\|u(0)\|_{\mathcal E}$, and the existence of a large family
of solutions with the latter property follows from the construction of $u_\Pi$.
\begin{figure}
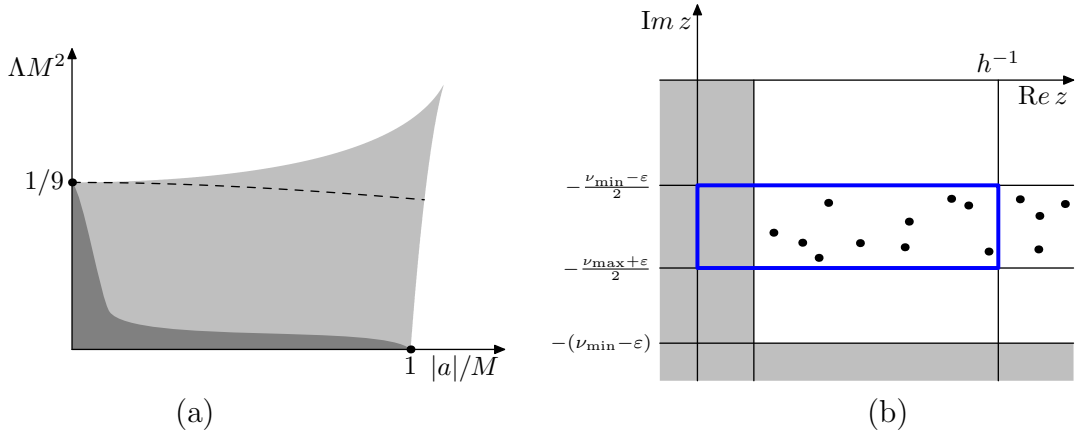

\includegraphics{kdsu.12}
\quad
\includegraphics{kdsu.8}
\hbox to\hsize{\hss (a) \hss\hss\hss (b)\hss}
\caption[(a) The numerically computed admissible range of parameters for the subextremal
Kerr--de Sitter black hole (b) An illustration of Theorem~\ref{t:kds-weyl}.]%
{(a) The numerically computed admissible range of parameters for the subextremal
Kerr--de Sitter black hole (light shaded) and a schematic depiction of the range of parameters to which
our results apply (dark shaded). The region above the dashed line is where the
resolvent does not admit a meromorphic continuation, see~\eqref{e:weird-region}.
(b) An illustration of Theorem~\ref{t:kds-weyl}; \eqref{e:kdsu-weyl} counts resonances in the outlined box
and the unshaded regions above and below the box represent~\eqref{e:kdsu-gap}.}
\label{f:kdsu-gaps}
\end{figure}
We remark that Theorems~\ref{t:decay} and~\ref{t:res-dec} are completely independent
from the behavior of linear waves at low frequency. In fact, we do not even
use the boundedness in time of solutions for the wave equation, assuming merely
that they grow at most exponentially (which is trivially true in our case); this suffices since $\mathcal O(h^\infty)$
remainders overcome such growth for $t=\mathcal O(\log(1/h))$. If a boundedness
statement is available, then our results can be extended to all times, though
the corresponding rate of decay stays fixed for $t\gg\log(1/h)$ because
of the $\mathcal O(h^\infty)$ term.

To formulate the next result, we restrict to the case $\Lambda>0$,
or its small stationary perturbation. In this case, the metric has two event horizons
and we consider the discrete set $\Res$ of \emph{resonances},
as defined for example in~\cite{vasy1}.
As a direct application of\switch{~\cite[Theorems~1 and~2]{nhp},}{ Theorems~\ref{t:gaps} and~\ref{t:weyl-law},}
we obtain two gaps and a band of resonances in between with a Weyl law (see Figure~\ref{f:kdsu-gaps}(b)):
\begin{theo}
  \label{t:kds-weyl}
Let $(\widetilde X_0,\tilde g)$ be the Kerr--de Sitter metric
with $M,a,\Lambda$ near one of the cases \eqref{e:sds-bound}
or~\eqref{e:kerr-bound} and $\Lambda>0$, or its small stationary perturbation as discussed
in~\S\ref{s:kdsu-stability}. Fix $\varepsilon>0$. Then:

1. For $h$ small enough, there are no resonances in the region
\begin{equation}
  \label{e:kdsu-gap}
\{|\Re z|\geq h^{-1},\
\Im z\in [-(\nu_{\min}-\varepsilon), 0]\setminus {\textstyle{1\over 2}}(-(\nu_{\max}+\varepsilon),-(\nu_{\min}-\varepsilon))\}
\end{equation}
and the corresponding semiclassical scattering resolvent,
namely the inverse of the operator~\eqref{e:semicr}, is bounded by $Ch^{-2}$
for $z$ in this region.

2. Under the pinching condition
\begin{equation}
  \label{e:kds-pinching}
\nu_{\max}<2\nu_{\min}
\end{equation}
and for $\varepsilon$ small enough so that $\nu_{\max}+\varepsilon<2(\nu_{\min}-\varepsilon)$,
we have the Weyl law
\begin{equation}
  \label{e:kdsu-weyl}
\#(\Res\cap \{0\leq\Re z\leq h^{-1},\ \Im z\in [-(\nu_{\min}-\varepsilon),0]\})=\\
(2\pi h)^{1-n}(c_{\widetilde K}+o(1))
\end{equation}
as $h\to 0$, where $c_{\widetilde K}$ is the symplectic volume of a certain part of the trapped set $\widetilde K$,
see~\eqref{e:c-k-tilde} and~\eqref{e:c-tilde-k-2}.
\end{theo}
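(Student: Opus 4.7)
The plan is to recognize Theorem~\ref{t:kds-weyl} as a direct application of the abstract gap and band Weyl law results \cite[Theorems~1 and~2]{nhp} for semiclassical operators whose trapped set is $r$-normally hyperbolic, specialized to the resonance problem for Kerr--de Sitter. No new dynamical or microlocal input beyond what has already been set up in~\S\ref{s:kdsu-the-metric}--\S\ref{s:kdsu-stability} is needed.

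First I would set up the semiclassical resonance problem. Following Vasy's approach~\cite{vasy1} (which is used in the very definition of $\Res$ recalled before the statement), one conjugates $\Box_{\tilde g}$ by $e^{-izt}$ to obtain a stationary family $P(z)$ on the spatial slice $X_0$, extended past both event horizons and through the de Sitter end so that $P(z)^{-1}$ is Fredholm on the appropriate anisotropic Sobolev spaces and admits a meromorphic continuation; the poles define $\Res$. Writing $z=h^{-1}\omega$ with $\omega$ in a fixed compact subset of $\mathbb{C}$ and rescaling, one obtains a semiclassical operator $P_h(\omega)$ (this is the operator inverted in~\eqref{e:semicr}) whose principal symbol near the trapped set agrees with the dual metric function and whose global structure at the horizons (radial sources/sinks) and at the de Sitter end fits the axiomatic framework of \cite{nhp}.

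Second, I would verify the dynamical hypotheses. By Proposition~\ref{l:kds-trapping}, the trapped set $\widetilde K$ for the exact Kerr--de Sitter metric is $r$-normally hyperbolic for all $r$, with minimal and maximal transversal expansion rates $\nu_{\min}\leq\nu_{\max}$ given by~\eqref{e:tilde-nu-min}--\eqref{e:tilde-nu-max}. Structural stability of $r$-normal hyperbolicity (\S\ref{s:kdsu-stability}) guarantees that this persists under the small stationary perturbations admitted in the hypotheses. Combined with the non-trapping behavior of the flow away from $\widetilde K$ (all other trajectories escaping through the horizons or to the de Sitter end), this verifies exactly the assumptions imposed on the trapped set in \cite{nhp}.

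With the dictionary in place, Part~1 is now an immediate consequence of the gap theorem \cite[Theorem~1]{nhp}: $r$-normal hyperbolicity yields the absence of resonances in the region~\eqref{e:kdsu-gap} together with the polynomial resolvent bound $\mathcal{O}(h^{-2})$. Part~2 follows from the band Weyl law \cite[Theorem~2]{nhp}: the pinching condition~\eqref{e:kds-pinching}, with $\varepsilon$ taken small enough that $\nu_{\max}+\varepsilon<2(\nu_{\min}-\varepsilon)$, is exactly the separation hypothesis required by that theorem, and the leading constant $c_{\widetilde K}$ produced is the symplectic volume of the portion of $\widetilde K$ specified by~\eqref{e:c-k-tilde}--\eqref{e:c-tilde-k-2}. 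The main obstacle, and the only place where there is real content, lies in establishing this dictionary between the global Vasy-type setup (radial sources/sinks at the horizons, behavior at the de Sitter end, choice of anisotropic Sobolev spaces) and the abstract class of semiclassical operators treated in \cite{nhp}; once this identification is made, both parts of Theorem~\ref{t:kds-weyl} follow without additional work.
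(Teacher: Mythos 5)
Your proposal matches the paper's proof: the paper likewise defines resonances via Vasy's construction (embedding $X_{-\delta_1}$ into a compact manifold with a complex absorbing operator, giving the Fredholm family \eqref{e:semicr}), notes that the geometric/dynamical assumptions of~\cite{nhp} were already verified in~\S\ref{s:kdsu-space} and~\S\S\ref{s:kdsu-the-metric}--\ref{s:kdsu-trapping}, checks the remaining outgoing-parametrix assumption by adapting~\cite[Theorem~2.15]{vasy1}, and then invokes~\cite[Theorems~1 and~2]{nhp} verbatim. The only cosmetic difference is that the paper's implementation uses complex absorption and the spaces $\mathcal X^s$ rather than the anisotropic-Sobolev/radial-point language you describe, but the content is the same.
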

The pinching condition~\eqref{e:kds-pinching} is true for the non-rotating case $a=0$,
since $\nu_{\min}=\nu_{\max}$ there (see Proposition~\ref{l:special-sds}).
However, it is violated for the nearly extremal case $M-|a|\ll M$, at least for $\Lambda$
small enough; in fact, as $|a|/M\to 1$, $\nu_{\max}$ stays bounded away from zero,
while $\nu_{\min}$ converges to zero~-- see Proposition~\ref{l:special-kerr} and Figure~\ref{f:kdsu-nums}(a).
\begin{figure}
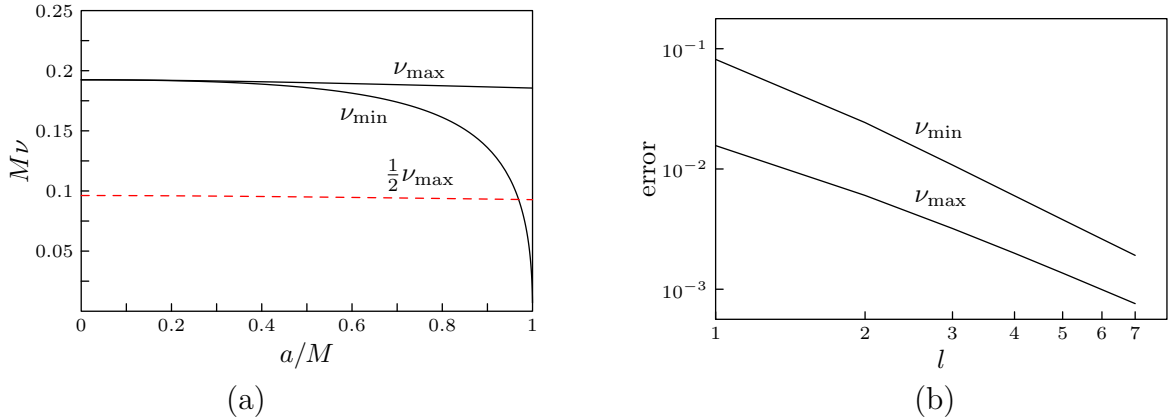

\includegraphics{kdsu.1}
\qquad\quad
\includegraphics{kdsu.7}
\hbox to\hsize{\hss (a)\hss\hss\hss (b)\hss}
\caption[(a) Numerically computed minimal and maximal transversal expansion rates
(b) A log-log plot of the relative errors of minimal/maximal imaginary parts of resonances.]%
{(a) Numerically computed minimal and maximal transversal expansion rates for $\Lambda=0$
and the range of $a$ for which~\eqref{e:kds-pinching} holds.
(b) A log-log plot of the relative errors ${|\nu_{\min}^R(l)-\nu_{\min}/2|\over \nu_{\min}^R(l)},
{|\nu_{\max}^R(l)-\nu_{\max}/2|\over\nu_{\max}^R(l)}$, where $\nu_{\min}^R(l),\nu_{\max}^R(l)$ are
the minimal/maximal imaginary
parts of resonances in the first band defined by~\eqref{e:nu-R}.}
\label{f:kdsu-nums}
\end{figure}
Note that $(\nu_{\min}-\varepsilon)/2$ is the size of the resonance free strip
and thus gives the minimal rate of exponential decay of linear waves on Kerr--de Sitter,
modulo terms coming from finitely many resonances, by means of a resonance expansion~-- see for
example~\cite[Lemma~3.1]{vasy1}.

To demonstrate the sharpness of the size of the band of resonances
$\{\Im\omega\in {1\over 2}[-\nu_{\max}-\varepsilon,-\nu_{\min}+\varepsilon]\}$,
we use the exact quasi-normal modes for the Kerr metric computed (formally, since
one cannot meromorphically continue the resolvent in the $\Lambda=0$ case; however,
one could consider the case of a very small positive $\Lambda$) by
Berti--Cardoso--Starinets~\cite{bcs}. Similarly to the quantization
condition of\switch{~\cite{zeeman},}{ Chapter~\ref{c:zeeman},} these resonances $\omega_{mlk}$ are indexed by
three integer parameters $m\geq 0$ (depth), $l\geq 0$ (angular energy),
and $k\in [-l,l]$ (angular momentum). The parameter $l$ roughly corresponds
to the real part of the resonance and the parameter $m$, to its imaginary part.
We define
\begin{equation}
  \label{e:nu-R}
\nu_{\min}^R(l):=\min_{k\in [-l,l]} (-\Im \omega_{0lk}),\quad
\nu_{\max}^R(l):=\max_{k\in [-l,l]} (-\Im \omega_{0lk}).
\end{equation}
We compare $\nu_{\min}^R(l),\nu_{\max}^R(l)$ with $\nu_{\min}/2,\nu_{\max}/2$
and plot the supremum of the relative error over $a/M\in [0,0.95]$
for different values of $l$; the error decays like $\mathcal O(l^{-1})$~-- see Figure~\ref{f:kdsu-nums}(b).

\smallsection{Previous work}
We give an overview of results on decay and non-decay on black hole backgrounds;
for a more detailed discussion of previous results on normally hyperbolic trapped sets and
resonance asymptotics, see \switch{the introduction to~\cite{nhp}.}{\S\ref{s:nhp-intro}.}

The study of boundedness of solutions to the wave equation for the Schwarzschild ($\Lambda=a=0$)
black hole was initiated in~\cite{wald,kay-wald} and decay results for this case
have been proved in~\cite{bl-st,da-ro-redshift,mmtt,luk1}. The slowly rotating Kerr case ($\Lambda=0,|a|\ll M$)
was considered in~\cite{an-bl,da-ro-boundedness,da-ro-lectures,ta-price,ta-to,tohaneanu,me-ta-to,luk2},
and the full subextremal Kerr case ($\Lambda=0,|a|<M$) in~\cite{fksy,fksy:erratum,da-ro-smalla,da-ro-proceedings,ysr1,ysr2}~--
see~\cite{da-ro-proceedings} for a more detailed overview. In either case the decay is polynomial in time,
with the optimal decay rate $\mathcal O(t^{-3})$. A decay rate of $\mathcal O(t^{-2l-3})$, known as~\emph{Price's Law},
was proved in~\cite{dss,dss1} for linear waves on the Schwarzschild black hole for solutions living
on the $l$-th spherical harmonic; the constant in the $\mathcal O(\cdot)$ depends on $l$.
Our Theorem~\ref{t:decay} improves on these decay rates in the high frequency regime $l=\lambda\gg 1$, for times $\mathcal O(\log\lambda)$.

The extremal Kerr case ($\Lambda=0,|a|=M$) was recently studied for axisymmetric solutions
in~\cite{aretakis3}, with a weaker upper bound due to the degeneracy of the event horizon.
The earlier work~\cite{aretakis1,aretakis2} suggests that one cannot expect the $\mathcal O(t^{-3})$ decay
to hold in the extremal case. In the high frequency regime studied here,
we do not expect to get exponential decay due to the presence of slowly damped geodesics near
the event horizon, see Figure~\ref{f:kdsu-nums}(a) above.

The Schwarzschild--de Sitter case ($\Lambda>0,a=0$) was considered in~\cite{sb-zw,bo-ha,da-ro-sds,me-sb-va},
proving an exponential decay rate at all frequencies, a quantization condition for resonances, and a resonance expansion,
all relying on separation of variables techniques. \switch{In~\cite{skds,xpd,zeeman},}{In Chapters~\ref{c:skds}
and~\ref{c:zeeman} and~\cite{xpd},} a same flavor of results
was proved for the slowly rotating Kerr--de Sitter ($\Lambda>0,|a|\ll M$). The problem was then studied
from a more geometric perspective, aiming for results that do not depend on symmetries and apply to
perturbations of the metric~-- the resonance free strip of~\cite{wu-zw} for normally hyperbolic trapping,
the gluing method of~\cite{da-va1}, and the analysis of the event horizons and low frequencies of~\cite{vasy1}
together give an exponential decay rate which is stable under perturbations, for $\Lambda>0,|a|<{\sqrt{3}\over 2}M$,
provided that there are no resonances in the upper half-plane except for the resonance at zero.
Our Theorem~\ref{t:kds-weyl} provides detailed information on the behavior of resonances below
the resonance free strip of~\cite{wu-zw}, without relying on the symmetries of the problem.

Finally, we mention the the Kerr--AdS case ($\Lambda<0$). The metric in this case exhibits strong
(elliptic) trapping, which suggests that the decay of linear waves is very slow because of the high frequency
contributions. A logarithmic upper bound was proved in~\cite{ho-sm}, and existence of resonances exponentially
close to the real axis and a logarithmic lower bound were established in~\cite{oran,ho-sm-new}.

Quasi-normal modes (QNMs) of black holes have a rich history of study in the physics literature,
see~\cite{ko-sc}. The exact QNMs of Kerr black holes were computed in~\cite{bcs},
which we use for Figure~\ref{f:kdsu-nums}(b). The high-frequency approximation for QNMs,
using separation of variables and WKB techniques, has been obtained in~\cite{ynzzzc,yzznbc,hod}.
In particular, for the nearly extremal Kerr case their size of the resonance free
strip agrees with Proposition~\ref{l:special-kerr}; moreover, they find
a large number of QNMs with small imaginary parts, which correspond to a positive
proportion of the Liouville tori on the trapped set lying close to the event horizon.
See~\cite{ynzzzc} for an overview of the recent physics literature on the topic.
We finally remark that the speed of rotation of an astrophysical black hole
(NGC 1365) has recently been accurately measured in~\cite{nature}, yielding
a high speed of rotation: $a/M\geq 0.84$ at 90\% confidence.

\smallsection{Structure of the \switch{paper}{chapter}}
In~\S\ref{s:kdsu-general}, we study semiclassical properties
of solutions to the wave equation on stationary Lorentzian metrics
with noncompact space slices. We operate under the geometric and dynamical assumptions
of~\S\ref{s:kdsu-assumptions}; while
these assumptions are motivated by Kerr(--de Sitter) metrics and their stationary
perturbations, no explicit mention of these metrics is made.
The analysis of~\S\ref{s:kdsu-general} works in a fixed compact subset of the space
slice, and the results apply under microlocal assumptions in this compact subset (namely, outgoing
property of solutions to the wave equation for Theorems~\ref{t:decay}--\ref{t:res-dec}
and meromorphic continuation of the scattering resolvent with an outgoing parametrix
for Theorem~\ref{t:kds-weyl})
which are verified for our specific applications in~\S\ref{s:kdsu-waves} and~\S\ref{s:kdsu-resonances}. In~\S\ref{s:kdsu-space},
we reduce the problem to the space slice via the stationary d'Alembert--Beltrami operator
and show that some of the assumptions of~\switch{\cite[\S\S4.1, 5.1]{nhp}}{\S\S\ref{s:framework-assumptions}, \ref{s:dynamics}} are satisfied.
In~\S\ref{s:kdsu-decay}, we use the methods of\switch{~\cite{nhp}}{ Chapter~\ref{c:nhp}} to prove asymptotics of outgoing
solutions to the wave equation.

Next, \S\ref{s:kds-metrix} contains the applications of\switch{~\cite{nhp}}{ Chapter~\ref{c:nhp}} and~\S\ref{s:kdsu-general}
to the Kerr(--de Sitter) metrics and their perturbations. In~\S\ref{s:kdsu-the-metric},
we define the metrics and establish their basic properties, verifying
in particular the geometric assumptions of~\S\ref{s:kdsu-assumptions}.
In~\S\ref{s:kdsu-trapping}, we show that the trapping is $r$-normally hyperbolic,
verifying the dynamical assumptions of~\S\ref{s:kdsu-assumptions}.
In~\S\ref{s:kdsu-trapping-special}, we study in greater detail trapping in the Schwarzschild(--de Sitter)
case $a=0$ and in the nearly extremal Kerr case $\Lambda=0,a=M-\epsilon$, in particular
showing that the pinching condition~\eqref{e:kds-pinching} is violated for the latter case;
we also study numerically some properties of the trapping for the general Kerr case,
and give a formula for the constant in the Weyl law.
In~\S\ref{s:kdsu-waves}, we study solutions to the wave equation on Kerr(--de Sitter),
using the results of~\S\ref{s:kdsu-decay} to prove Theorems~\ref{t:decay} and~\ref{t:res-dec}.
In~\S\ref{s:kdsu-resonances}, we use the results of\switch{~\cite{nhp}}{ Chapter~\ref{c:nhp}} and~\cite{vasy1}
to prove Theorem~\ref{t:kds-weyl} for Kerr--de Sitter. Finally,
in~\S\ref{s:kdsu-stability}, we explain why our results apply to small smooth
stationary perturbations of Kerr(--de Sitter) metrics.

\section{General framework for linear waves}
  \label{s:kdsu-general}

\subsection{Semiclassical preliminaries}
  \label{s:kdsu-prelims}

We start by briefly reviewing some notions of semiclassical analysis,
following~\switch{\cite[\S3]{nhp}}{\S\ref{s:prelim}}. For a detailed introduction to the subject, the reader
is directed to~\cite{e-z}.

Let $X$ be an $n$-dimensional manifold without boundary. Following~\switch{\cite[\S3.1]{nhp},}{\S\ref{s:prelim-basics},}
we consider the class $\Psi^k(X)$ of all semiclassical pseudodifferential
operators with classical symbols of order $k$. If $X$ is noncompact,
we impose no restrictions on how fast the corresponding symbols can grow
at spatial infinity. The microsupport of a pseudodifferential operator $A\in\Psi^k(X)$,
also known as its $h$-wavefront set $\WFh(A)$, is a closed subset of the
fiber-radially compactified cotangent bundle $\overline T^*X$. We denote
by $\Psi^{\comp}(X)$ the class of all pseudodifferential operators
whose wavefront set is a compact subset of $T^*X$ (and in particular
lies away from the fiber infinity). Finally, we say that $A=\mathcal O(h^\infty)$
microlocally in some open set $U\subset \overline T^*X$, if $\WFh(A)\cap U=\emptyset$;
similar notions apply to tempered distributions and operators below.

Using pseudodifferential operators, we can study microlocalization of $h$-tempered
distributions, namely families of distributions $u(h)\in\mathcal D'(X)$
having a polynomial in $h$ bound in some Sobolev norms on compact sets,
by means of the wavefront set $\WFh(u)\subset \overline T^*X$.
Using Schwartz kernels, we can furthermore study $h$-tempered operators
$B(h):C_0^\infty(X_1)\to \mathcal D'(X_2)$ and their wavefront sets
$\WFh(B)\subset\overline T^*(X_1\times X_2)$. Besides pseudodifferential
operators (whose wavefront set is this framework is the image under the diagonal embedding
$\overline T^*X\to\overline T^*(X\times X)$ of the wavefront set used in the previous paragraph)
we will use the class $I^{\comp}(\Lambda)$ of compactly supported
and compactly microlocalized Fourier integral operators associated to some
canonical relation $\Lambda\subset T^*(X\times X)$, see~\switch{\cite[\S3.2]{nhp};}{\S\ref{s:prelim-fio};} for
$B\in I^{\comp}(\Lambda)$, $\WFh(B)\subset\Lambda$ is compact.

The $h$-wavefront set of an $h$-tempered family of distributions $u(h)$
can be characterized using the semiclassical Fourier transform
$$
\mathcal F_h v(\xi)=(2\pi h)^{-n/2}\int_{\mathbb R^n} e^{-{i\over h}x\cdot\xi} v(x)\,dx,\
v\in \mathscr S'(\mathbb R^n).
$$
We have $(x,\xi)\not\in\WFh(u)$ if and only if there exists a coordinate neighborhood
$U_x$ of $x$ in $X$, a function $\chi\in C_0^\infty(U_x)$
with $\chi(x)\neq 0$, and a neighborhood $U_\xi$ of $\xi$ in $\overline T_x^*X$
such that if we consider $\chi u$ as a function on $\mathbb R^n$ using the corresponding
coordinate system, then for each $N$,
\begin{equation}
  \label{e:wf-fourier}
\int_{U_\xi} \langle\xi\rangle^N|\mathcal F_h(\chi u)(\xi)|^2\,d\xi=\mathcal O(h^N).
\end{equation}
The proof is done analogously to~\cite[Theorem~18.1.27]{ho3}.

One additional concept that we need is microlocalization of distributions
depending on the time variable that varies in a set whose size can grow with $h$.
Assume that $u(t;h)$ is a family of distributions on $(-\varepsilon,T(h)+\varepsilon)\times X$,
where $\varepsilon>0$ is fixed and $T(h)>0$ depends on $h$. For $s\in [0,T(h)]$, define
the shifted function
$$
u_s(t;h)=u(s+t;h),\quad
t\in (-\varepsilon,\varepsilon),
$$
so that $u_s\in \mathcal D'((-\varepsilon,\varepsilon)\times X)$ is a distribution
on a time interval independent of $h$. We then say that
$u$ is $h$-tempered uniformly in $t$, if $u_s$ is $h$-tempered uniformly in $s$,
that is, for each $\chi\in C_0^\infty((-\varepsilon,\varepsilon)\times X)$, there exist
constants $C$ and $N$ such that $\|\chi u_s\|_{H_h^{-N}}\leq Ch^{-N}$ for all
$s\in [0,T(h)]$. Next, we define the projected wavefront set $\widetilde\WFh(u)\subset \overline{T^*X\times \mathbb R_\tau}$,
where $\tau$ is the momentum corresponding to $t$ and $\overline{T^*X\times\mathbb R_\tau}$
is the fiber-radial compactification of the vector bundle $T^*X\times\mathbb R_\tau$, with
$\mathbb R_\tau$ part of the fiber, as follows: $(x,\xi,\tau)$ does not lie in $\widetilde\WFh(u)$
if and only if there exists a neighborhood $U$ of $(x,\xi,\tau)$ in $\overline{T^*X\times \mathbb R_\tau}$
such that
$$
\sup_{s\in [0,T(h)]}\|Au_s\|_{L^2}=\mathcal O(h^\infty)
$$
for each compactly supported $A\in\Psi^{\comp}((-\varepsilon,\varepsilon)\times X)$
such that $\WFh(A)\cap ((-\varepsilon,\varepsilon)_t\times U)=\emptyset$. If $T(h)$ is independent
of $h$, then $\widetilde\WFh(u)$ is simply the closure of the projection of $\WFh(u)$ onto
the $(x,\xi,\tau)$ variables. The notion of $\widetilde\WFh$ makes it possible to talk
about $u$ being microlocalized inside, or being $\mathcal O(h^\infty)$, on subsets
of $\overline T^*((-\varepsilon,T(h)+\varepsilon)\times X)$ independent of $t$.

We now discuss restrictions to space slices.
Assume that $u(h)\in\mathcal D'((-\varepsilon,T(h)+\varepsilon)\times X)$
is $h$-tempered uniformly in $t$ and moreover, $\widetilde\WFh(u)$ does not intersect the spatial
fiber infinity $\{\xi=0,\tau=\infty\}$. Then $u$ (as well as all its derivatives in $t$)
is a smooth function of $t$ with values in $\mathcal D'(X)$,
$u(t)$ is $h$-tempered uniformly in $t\in [0,T(h)]$, and
$$
\WFh(u(t))\subset\{(x,\xi)\mid \exists\tau: (x,\xi,\tau)\in \widetilde\WFh(u)\},
$$
uniformly in $t\in [0,T(h)]$. One can see this using~\eqref{e:wf-fourier}
and the formula for the Fourier transform of the restriction $w$
of $v\in\mathcal S'(\mathbb R^{n+1})$ to the hypersurface $\{t=0\}$:
$$
\mathcal F_h w(\xi)=(2\pi h)^{-1/2}\int_{\mathbb R} \mathcal F_hv(\xi,\tau)\,d\tau.
$$

\subsection{General assumptions}
  \label{s:kdsu-assumptions}

\newcounter{wassumptions}

In this section, we study Lorentzian metrics whose space slice is noncompact,
and define $r$-normal hyperbolicity and the dynamical quantities $\nu_{\min},\nu_{\max}$
in this case.

\smallsection{Geometric assumptions}
We assume that:
\switch{\begin{enumerate}}{\begin{enumerate}[(1)]}
\item \label{w:basic}
$(\widetilde X_0,\tilde g)$ is an $n+1$ dimensional Lorentzian manifold
of signature $(1,n)$, and $\widetilde X_0=\mathbb R_t\times X_0$, where
$X_0$, the space slice, is a manifold without boundary;
\item \label{w:stationary}
the metric $\tilde g$ is stationary in the sense that its coefficients do not depend
on $t$, or equivalently, $\partial_t$ is a Killing field;
\item \label{w:spacelike}
the space slices $\{t=\const\}$ are spacelike, or equivalently, the covector
$dt$ is timelike with respect to the dual metric $\tilde g^{-1}$ on $T^*\widetilde X_0$;
\setcounter{wassumptions}{\value{enumi}}
\end{enumerate}
The (nonsemiclassical) principal symbol of the d'Alembert--Beltrami operator $\Box_{\tilde g}$
(without the negative sign), denoted by $\tilde p(\tilde x,\tilde\xi)$, is
\begin{equation}
  \label{e:tilde-p}
\tilde p(\tilde x,\tilde\xi)=-\tilde g^{-1}_{\tilde x}(\tilde\xi,\tilde\xi),
\end{equation}
here $\tilde x=(t,x)$ denotes a point in $\widetilde X_0$ and
$\tilde\xi=(\tau,\xi)$ a covector in $T^*_{\tilde x}\widetilde X_0$.
The Hamiltonian flow of $\tilde p$ is the (rescaled) geodesic flow on $T^*_{\tilde x}\widetilde X_0$;
we are in particular interested in nontrivial lightlike geodesics,
i.e. the flow lines of $H_{\tilde p}$ on the set $\{\tilde p=0\}\setminus 0$,
where $0$ denotes the zero section.

Note that we do not assume that the vector field $\partial_t$ is timelike,
since this is false inside the ergoregion for rotating black holes.
Because of this, the intersections of the sets $\{\tau=\const\}$, invariant
under the geodesic flow, with the energy surface $\{\tilde p=0\}$ need not be
compact in the $\xi$ direction, and it is possible that $\xi$ will blow up in finite time
along a flow line of $H_{\tilde p}$, while $x$ stays in a compact subset of $X_0$.%
\footnote{The simplest example of such behavior is $\tilde p=x\xi^2+2\xi\tau-\tau^2$,
considering the geodesic starting at $x=t=\tau=0,\xi=1$.}
We consider instead the rescaled flow
\begin{equation}
  \label{e:rescaled-flow}
\tilde\varphi^s:=\exp(sH_{\tilde p}/\partial_\tau\tilde p)\quad\text{on }
\{\tilde p=0\}\setminus 0.
\end{equation}
Here $\partial_\tau\tilde p(\tilde x,\tilde\xi)
=-2\tilde g^{-1}_{\tilde x}(\tilde \xi,dt)$ never vanishes on $\{\tilde p=0\}\setminus 0$ by
assumption~\eqref{w:spacelike}.
Since $H_{\tilde p}t=\partial_\tau\tilde p$, the variable $t$ grows linearly with unit rate
along the flow $\tilde\varphi^s$.
The flow lines of~\eqref{e:rescaled-flow} exist for all $s$ as long as $x$ stays
in a compact subset of $X_0$. The flow is homogeneous, which makes
it possible to define it on the cosphere bundle $S^*\widetilde X_0$, which
is the quotient of $T^*\widetilde X_0\setminus 0$ by the action of dilations.
Finally, the flow preserves the restriction of the symplectic form to the tangent
bundle of $\{\tilde p=0\}$.

We next assume the existence of a `defining function of infinity' $\mu$
on the space slice with a concavity property:
\switch{\begin{enumerate}}{\begin{enumerate}[(1)]}
\setcounter{enumi}{\value{wassumptions}}
\item \label{w:convex}
there exists a function $\mu\in C^\infty(X_0)$
such that $\mu>0$ on $X_0$, for $\delta>0$ the set
\begin{equation}
  \label{e:X-delta}
X_\delta:=\{\mu>\delta\}\subset X_0
\end{equation}
is compactly contained in $X_0$, and there exists $\delta_0>0$ such that
for each flow line $\gamma(s)$ of~\eqref{e:rescaled-flow}, and with $\mu$
naturally defined on $T^*\widetilde X_0$,
\begin{equation}
  \label{e:concavity}
\mu(\gamma(s))<\delta_0,\
\partial_s\mu(\gamma(s))=0\quad\Longrightarrow\quad
\partial_s^2\mu(\gamma(s))<0.
\end{equation}
\setcounter{wassumptions}{\value{enumi}}
\end{enumerate}
We now define the trapped set:
\begin{defi}
  \label{d:time-trapping}
Let $\gamma(s)$ be a maximally extended flow line of~\eqref{e:rescaled-flow}.
We say that $\gamma(s)$ is trapped as $s\to +\infty$, if there exists
$\delta>0$ such that $\mu(\gamma(s))>\delta$ for all $s\geq 0$
(and as a consequence, $\gamma(s)$ exists for all $s\geq 0$).
Denote by $\widetilde\Gamma_-$ the union of all $\gamma$ trapped as $s\to +\infty$;
similarly, we define the union $\widetilde\Gamma_+$ of all $\gamma$ trapped as $s\to -\infty$.
Define the trapped set $\widetilde K:=\widetilde\Gamma_+\cap\widetilde\Gamma_-\subset \{\tilde p=0\}\setminus 0$.
\end{defi}
If $\mu(\gamma(s))<\delta_0$ and $\partial_s\mu(\gamma(s))\leq 0$ for some $s$, then
it follows from assumption~\eqref{w:convex} that $\gamma(s)$ is not trapped
as $s\to +\infty$. Also, if $\gamma(s)$ is not trapped as $s\to +\infty$, then
$\mu(\gamma(s))<\delta_0$ and $\partial_s\mu(\gamma(s))<0$ for $s>0$ large enough.
It follows that $\widetilde\Gamma_\pm$ are closed conic subsets of $\{\tilde p=0\}\setminus 0$,
and $\widetilde K\subset\{\mu\geq \delta_0\}$.

We next split the light cone $\{\tilde p=0\}\setminus 0$ into the sets
$\mathcal C_+$ and $\mathcal C_-$ of positively and negatively time oriented covectors:
\begin{equation}
  \label{e:time-orientation}
\mathcal C_\pm=\{\tilde p=0\}\cap \{\pm \partial_{\tau}\tilde p>0\}.
\end{equation}
Since $\partial_{\tau}\tilde p$ never vanishes on $\{\tilde p=0\}\setminus 0$
by assumption~\eqref{w:spacelike}, we have $\{\tilde p=0\}\setminus 0=\mathcal C_+\sqcup\mathcal C_-$.

We fix the sign of $\tau$ on the trapped set, in particular requiring
that $\widetilde K\subset \{\tau\neq 0\}$:
\switch{\begin{enumerate}}{\begin{enumerate}[(1)]}
\setcounter{enumi}{\value{wassumptions}}
\item \label{w:positive}
$\widetilde K\cap \mathcal C_\pm\subset \{\pm\tau <0\}$.
\setcounter{wassumptions}{\value{enumi}}
\end{enumerate}
\smallsection{Dynamical assumptions}
We now formulate the assumptions on the dynamical structure of the
flow~\eqref{e:rescaled-flow}. They are analogous to the assumptions of~\switch{\cite[\S5.1]{nhp}}{\S\ref{s:dynamics}}
and related to them in~\S\ref{s:kdsu-space} below. We start by requiring that
$\widetilde\Gamma_\pm$ are regular:
\switch{\begin{enumerate}}{\begin{enumerate}[(1)]}
\setcounter{enumi}{\value{wassumptions}}
\item \label{w:cr}
for a large constant $r$,
$\widetilde\Gamma_\pm$ are codimension 1 orientable $C^r$ submanifolds of $\{\tilde p=0\}\setminus 0$;
\item \label{w:symplectic}
$\widetilde\Gamma_\pm$ intersect transversely inside $\{\tilde p=0\}\setminus 0$,
and the intersections $\widetilde K\cap\{t=\const\}$ are symplectic submanifolds
of $T^*\widetilde X_0$.
\setcounter{wassumptions}{\value{enumi}}
\end{enumerate}
We next define a natural invariant decomposition of the tangent space to $\{\tilde p=0\}$
at $\widetilde K$. Let $(T\widetilde\Gamma_\pm)^{\perp}$ be the symplectic
complement of the tangent space to $\widetilde\Gamma_\pm$. Since $\widetilde\Gamma_\pm$
has codimension 2 and is contained in $\{\tilde p=0\}$,
$(T\widetilde\Gamma_\pm)^{\perp}$ is a two-dimensional vector subbundle
of $T(T^*\widetilde X_0)$ containing $H_{\tilde p}$. Since
$H_{\tilde p}t\neq 0$ on $\{\tilde p=0\}\setminus 0$, we can define the one-dimensional
vector subbundles of $T(T^*\widetilde X_0)$
\begin{equation}
  \label{e:tilde-v-pm}
\widetilde{\mathcal V}_\pm:=(T\widetilde\Gamma_\pm)^\perp\cap\{dt=0\}.
\end{equation}
Since $\widetilde\Gamma_\pm$ is a codimension 1 submanifold of $\{\tilde p=0\}$
and $H_{\tilde p}$ is tangent to $\widetilde\Gamma_\pm$, we see that
$\widetilde\Gamma_\pm$ is coisotropic and then $\widetilde{\mathcal V}_\pm$
are one-dimensional subbundles of $T\widetilde\Gamma_\pm$;
moreover, since $\partial_t\in T\widetilde\Gamma_\pm$, we find $\widetilde{\mathcal V}_\pm\subset\{d\tau=0\}$. Since
$\widetilde K\cap \{t=\const\}$ is symplectic, we have
\begin{equation}
  \label{e:tk-split}
T_{\widetilde K}\widetilde\Gamma_\pm=T\widetilde K\oplus \widetilde{\mathcal V}_\pm|_{\widetilde K},\quad
T_{\widetilde K}\tilde p^{-1}(0)=T\widetilde K\oplus \widetilde{\mathcal V}_-|_{\widetilde K}\oplus\widetilde{\mathcal V}_+|_{\widetilde K}.
\end{equation}
Since the flow
$\tilde\varphi^s$ from~\eqref{e:rescaled-flow} maps the space slice $\{t=t_0\}$ to $\{t=t_0+s\}$ and
$H_{\tilde p}$ is tangent to $T\widetilde\Gamma_\pm$, we see
that the splittings~\eqref{e:tk-split} are invariant under $\tilde\varphi^s$.

We now formulate the dynamical assumptions on the linearization of the flow $\tilde\varphi^s$
with respect to the splitting~\eqref{e:tk-split}.
Define the minimal expansion rate in the transverse direction $\nu_{\min}$
as the supremum of all $\nu$ for which there exists a constant $C$ such that
\begin{equation}
  \label{e:tilde-nu-min}
\sup_{\tilde\rho\in\widetilde K}\|d\tilde\varphi^{\mp s}(\tilde\rho)|_{\mathcal V_\pm}\|\leq Ce^{-\nu s},\quad
s\geq 0,
\end{equation}
with $\|\cdot\|$ denoting any smooth $t$-independent norm on the fibers of $T(T^*\widetilde X_0)$,
homogeneous of degree zero with respect to dilations on $T^*\widetilde X_0$.
Similarly, define $\nu_{\max}$ as the infimum of all $\nu$ for which these exists a constant $c>0$ such that
\begin{equation}
  \label{e:tilde-nu-max}
\inf_{\tilde\rho\in\widetilde K}\|d\tilde\varphi^{\mp s}(\tilde\rho)|_{\mathcal V_\pm}\|\geq ce^{-\nu s},\quad
s\geq 0.
\end{equation}
We now formulate the dynamical assumption of $r$-normal hyperbolicity:
\switch{\begin{enumerate}}{\begin{enumerate}[(1)]}
\setcounter{enumi}{\value{wassumptions}}
\item \label{w:nhp}
$\nu_{\min}>r\mu_{\max}$, where $\mu_{\max}$ is the maximal expansion rate of the flow along
$\widetilde K$, defined as the infimum of all $\nu$ for which there exists a constant $C$ such that
\begin{equation}
  \label{e:tilde-mu-max}
\sup_{\tilde\rho\in\widetilde K}\|d\tilde\varphi^s(\tilde\rho)|_{T\widetilde K}\|\leq Ce^{\nu s},\quad
s\in\mathbb R.
\end{equation}
\setcounter{wassumptions}{\value{enumi}}
\end{enumerate}
The large constant $r$ determines
how many terms we need to obtain in semiclassical expansions, and how many
derivatives of these terms need to exist~-- see\switch{~\cite{nhp}}{ Chapter~\ref{c:nhp}}. Theorem~\ref{t:kds-weyl}
simply needs $r$ to be large (in principle, depending on the dimension),
while Theorems~\ref{t:decay} and~\ref{t:res-dec} require $r$ to be large enough
depending on $N,T$. For exact Kerr(--de Sitter)
metrics, our assumptions are satisfied for all $r$, but a small perturbation
will satisfy them for some fixed large $r$ depending on the size of the perturbation.

\subsection{Reduction to the space slice}
  \label{s:kdsu-space}

We now put a Lorentzian manifold $(\widetilde X_0,\tilde g)$
satisfying assumptions of~\S\ref{s:kdsu-assumptions}
into the framework of\switch{~\cite{nhp}}{ Chapter~\ref{c:nhp}}. Consider the
stationary d'Alembert--Beltrami operator $P_{\tilde g}(\omega)$,
$\omega\in\mathbb C$, the second order semiclassical differential operator on the
space slice $X_0$ obtained by replacing $hD_t$ by $-\omega$ in the semiclassical d'Alembert--Beltrami
operator $h^2\Box_{\tilde g}$. The principal symbol
of $P_{\tilde g}(\omega)$ is given by
$$
\mathbf p(x,\xi;\omega)=\tilde p(t,x,-\omega,\xi),
$$
where $\tilde p$ is defined in~\eqref{e:tilde-p} and
the right-hand side does not depend on $t$. We will show that the operator
$P_{\tilde g}(\omega)$ satisfies a subset of the assumptions of~\switch{\cite[\S\S4.1, 5.1]{nhp}}{\S\S\ref{s:framework-assumptions},
\ref{s:dynamics}}.

First of all, we need to understand the solutions in $\omega$ to the equation
$\mathbf p=0$. Let
$$
p(x,\xi)\in C^\infty(T^*X_0\setminus 0)
$$
be the unique real solution $\omega$ to the equation $\mathbf p(x,\xi;\omega)=0$ such that
$(t,x,-\omega,\xi)\in \mathcal C_+$, with the positive time oriented light cone $\mathcal C_+$ defined in~\eqref{e:time-orientation}.
The existence and uniqueness of such solution follows from assumption~\eqref{w:spacelike} in~\S\ref{s:kdsu-assumptions},
and we also find from the definition of $\mathcal C_+$ that
\begin{equation}
  \label{e:rex}
\partial_{\omega}\mathbf p(x,\xi;p(x,\xi))<0,\quad
(x,\xi)\in T^*X_0\setminus 0.
\end{equation}
We can write $\mathcal C_+$ as the graph of $p$:
$$
\mathcal C_+=\{(t,x,-p(x,\xi),\xi)\mid t\in\mathbb R,\ (x,\xi)\in T^*X_0\setminus 0\}.
$$
The level sets of $p$ are not compact if $\partial_t$ is not timelike. To avoid dealing
with the fiber infinity, we use assumption~\eqref{w:positive} in~\S\ref{s:kdsu-assumptions}
to identify a bounded region in $T^*X_0$ invariant under the flow and containing the trapped set:
\begin{lemm}
  \label{l:W-construction}
There exists an open conic subset $\mathcal W\subset \mathcal C_+$, independent of $t$,
such that $\widetilde K\cap \mathcal C_+\subset \mathcal W$, the closure of $\mathcal W$
in $\mathcal C_+$ is contained in $\{\tau<0\}$, and $\mathcal W$ is invariant
under the flow~\eqref{e:rescaled-flow}.
\end{lemm}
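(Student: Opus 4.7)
The plan is to construct $\mathcal W$ as the flow orbit of a suitable open conic $t$-invariant neighborhood of $\widetilde K\cap\mathcal C_+$, using that $\tau$ is a first integral of $\tilde\varphi^s$ (because $\partial_t$ is Killing) and that $\tilde\varphi^s$ is degree-$0$ homogeneous in the fibers.

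I would fix a smooth $t$-invariant Riemannian fiber norm $|\cdot|_{g_0}$ on $T^*\widetilde X_0$. Because $\widetilde K\cap\mathcal C_+$ is closed, conic, and $t$-invariant with base projection in the compact set $\{\mu\geq\delta_0\}$, its image in the cosphere bundle modulo $\mathbb R_t$-translation is compact; combined with assumption~(5) this yields $c_0>0$ with $\tau/|\tilde\xi|_{g_0}\leq-c_0$ on $\widetilde K\cap\mathcal C_+$. I would then set
\[
V_0:=\bigl\{\tilde\rho\in\mathcal C_+\;:\;\tau(\tilde\rho)/|\tilde\xi(\tilde\rho)|_{g_0}<-c_0/2,\ \mu(\tilde\rho)>\delta_0/2\bigr\}
\]
and
\[
\mathcal W:=\bigcup_{s\in\mathbb R}\tilde\varphi^s(V_0).
\]
Degree-$0$ homogeneity and time-translation invariance of $\tilde\varphi^s$ (from stationarity of $\tilde g$) make $\mathcal W$ open, conic, $t$-invariant, and flow-invariant; it contains $\widetilde K\cap\mathcal C_+$, and conservation of $\tau$ along orbits yields $\mathcal W\subset\{\tau<0\}$ immediately.

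The crux is verifying that $\overline{\mathcal W}\cap\mathcal C_+\subset\{\tau<0\}$ strictly. For a limit $\tilde\rho^*=\lim\tilde\rho_n$ in $\mathcal C_+$ with $\tilde\rho_n=\tilde\varphi^{s_n}(\tilde\sigma_n)$, $\tilde\sigma_n\in V_0$, conservation of $\tau$ gives $\tau(\tilde\rho^*)=\lim\tau(\tilde\sigma_n)$, and the defining bound on $V_0$ gives $|\tilde\xi(\tilde\sigma_n)|_{g_0}\leq 2|\tau(\tilde\sigma_n)|/c_0$. If $\tau(\tilde\rho^*)=0$, writing $\tilde\sigma_n=\lambda_n\tilde\sigma_n'$ with $|\tilde\sigma_n'|_{g_0}=1$ and $\lambda_n\to 0$, degree-$0$ homogeneity yields $\tilde\rho_n=\lambda_n\tilde\varphi^{s_n}(\tilde\sigma_n')$, so convergence of $\tilde\rho_n$ to a nonzero $\tilde\rho^*$ would force $|\tilde\varphi^{s_n}(\tilde\sigma_n')|_{g_0}\to\infty$.

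The main obstacle is thus ruling out this fiber-infinity blow-up along orbits starting at unit-cosphere points in $V_0$. My plan is to handle it by combining the spatial precompactness of $V_0\cap\{|\tilde\xi|_{g_0}=1\}$ modulo $\mathbb R_t$ (guaranteed by $\mu>\delta_0/2$) with the monotone-escape property from the concavity assumption~(4): an orbit from a unit-cosphere point in $V_0$ either stays forever in a compact subset of $X_0$, along which $|\tilde\xi|_{g_0}$ remains bounded by smoothness of $g_0$ and boundedness of the null-geodesic coordinate derivatives, or escapes monotonically to $\mu\to 0$, in which case the base point leaves every compact subset of $X_0$ and cannot contribute to a limit in $\mathcal C_+$ (whose limit must live over $X_0$). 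Either alternative contradicts the required blow-up, completing the proof.
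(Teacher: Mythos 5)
Your construction of $\mathcal W$ coincides with the paper's: your $V_0$ plays the role of the paper's $\mathcal W_0$, a conic $t$-independent neighborhood of $\widetilde K\cap\mathcal C_+$ with closure inside $\{\mu>\delta_0/2\}\cap\{\tau<0\}$, and $\mathcal W$ is its flow saturation. The gap is in your closure argument. You correctly locate the crux --- ruling out $|\tilde\xi|_{g_0}\to\infty$ along orbits that stay over a compact part of $X_0$ --- but the claim that this follows ``by smoothness of $g_0$ and boundedness of the null-geodesic coordinate derivatives'' is false. Since $\partial_t$ need not be timelike (inside the ergoregion it is not), the sets $\{\tilde p=0\}\cap\{\tau=\const\}$ over a compact $x$-region are in general \emph{not} compact in the $\xi$-direction; this is exactly the phenomenon flagged in~\S\ref{s:kdsu-assumptions}. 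Along $\tilde\varphi^s$ the rate $\partial_s\log|\tilde\xi|_{g_0}$ is a continuous degree-$0$-homogeneous function of the cosphere projection, hence bounded while $x$ stays compact --- this rules out finite-time blow-up but still permits exponential growth as $s\to\infty$, and nothing in your argument excludes it. Your treatment of the escape alternative also does not close: escape is asymptotic, so at the finite times $s_n$ (with $\tilde\sigma_n'$ varying with $n$) the orbit need not yet have left a given compact set.

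The missing ingredient is the dynamical structure of $\widetilde\Gamma_\pm$, which the paper invokes via \cite[Lemma~4.1]{nhp}: a trajectory lying in $\widetilde\Gamma_+\cup\widetilde\Gamma_-$ has cosphere projection converging to $\widetilde K/\mathbb R_+$, where $\tau/|\tilde\xi|_{g_0}\leq-c_0$ by assumption~\eqref{w:positive}; since $\tau$ is conserved, this is what ultimately controls $|\tilde\xi|$ on trapped orbits. The paper's route also avoids chasing sequences: it suffices to show every $\tilde\rho\in\mathcal C_+\cap\{\tau\geq 0\}$ has a neighborhood disjoint from $\mathcal W$, and the orbit through such a $\tilde\rho$ cannot lie in $\widetilde\Gamma_\pm$ (its normalized $\tau$ is $\geq 0$ but would have to converge into the range of $\widetilde K\subset\{\tau<0\}$), hence escapes in both time directions, hence it and all nearby orbits miss $\overline{\mathcal W_0}$.
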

\begin{proof}
Consider a conic neighborhood $\mathcal W_0$ of $\widetilde K\cap \mathcal C_+$
in $\mathcal C_+$ independent of $t$ and
such that the closure of $\mathcal W_0$ is contained in $\{\mu>\delta_0/2\}\cap
\{\tau<0\}$; this is possible by assumption~\eqref{w:positive}
and since $\widetilde K$ is contained in $\{\mu\geq\delta_0\}$.
Let $\mathcal W\subset \mathcal C_+$ be the union of all
maximally extended flow lines of~\eqref{e:rescaled-flow} passing through
$\mathcal W_0$.
Then $\mathcal W$ is an open conic subset of $\mathcal C_+$
containing $\widetilde K\cap\mathcal C_+$ and invariant under the flow~\eqref{e:rescaled-flow}.
It remains to show that each point $(\tilde x,\tilde\xi)\in \mathcal C_+\cap \{\tau\geq 0\}$
has a neighborhood that does not intersect $\mathcal W$. To see this, note that
the corresponding trajectory $\gamma(s)$ of~\eqref{e:rescaled-flow} does not
lie in $\widetilde\Gamma_+\cup\widetilde\Gamma_-$ (as otherwise, the projection of $\gamma(s)$
onto the cosphere bundle
would converge to $\widetilde K$ as $s\to+\infty$ or $s\to -\infty$,
by~\switch{\cite[Lemma~4.1]{nhp}}{Lemma~\ref{l:the-flow}}; it remains to use assumption~\eqref{w:positive}
and the fact that $\tau$ is constant on $\gamma(s)$).
We then see that
$\gamma(s)$ escapes for both $s\to+\infty$ and $s\to-\infty$ and
does not intersect the closure of $\mathcal W_0$ and same is true
for nearby trajectories; therefore, a neighborhood of $(\tilde x,\tilde\xi)$ does not
intersect $\mathcal W$.
\end{proof}
Arguing similarly (using an open conic subset $\mathcal W'_0$ of $\mathcal C_+$ such
that $\overline{\mathcal W_0}\subset \mathcal W'_0$
and $\overline{\mathcal W'_0}\subset \{\mu>\delta_0/2\}\cap \{\tau<0\}$),
we construct an open conic subset $\mathcal W'$ of $\mathcal C_+$ independent of $t$ and
such that
$$
\widetilde K\cap \mathcal C_+\subset \mathcal W,\quad
\overline{\mathcal W}\subset \mathcal W',\quad
\overline{\mathcal W'}\subset \{\tau<0\},
$$
and $\mathcal W,\mathcal W'$ are invariant under the flow~\eqref{e:rescaled-flow}. Now, take
small $\delta_1>0$ and define
\begin{equation}
  \label{e:fUnz}
\begin{aligned}
\widetilde{\mathcal U}&:=\mathcal C_+\cap \{|1+\tau|<\delta_1\}\cap \mathcal W\cap \{\mu>\delta_1\},\\
\widetilde{\mathcal U}'&:=\mathcal C_+\cap \{|1+\tau|<2\delta_1\}\cap \mathcal W'\cap \{\mu>\delta_1/2\}.
\end{aligned}
\end{equation}
Then $\widetilde{\mathcal U},\widetilde{\mathcal U}'$ are open subsets of $\mathcal C_+$ convex under the flow~\eqref{e:rescaled-flow},
$\widetilde K\cap \{|1+\tau|<\delta_1\}\subset\widetilde{\mathcal U}$
(note that $\widetilde K\cap \{\tau<0\}\subset \mathcal C_+$
by assumption~\eqref{w:positive}),
and the closure of $\widetilde{\mathcal U}$ is contained in $\widetilde{\mathcal U}'$.
Moreover, the projections of $\mathcal U,\mathcal U'$ onto the $(x,\tau,\xi)$ variables are bounded
because $\mathcal W,\mathcal W'$ are conic and $\overline{\mathcal W},\overline{\mathcal W'}\subset \{\tau\neq 0\}$.

Let $\mathcal U\Subset\mathcal U'\Subset T^* X_0$ be the projections of $\widetilde{\mathcal U},\widetilde{\mathcal U}'$
onto the $(x,\xi)$ variables, so that
$$
\widetilde{\mathcal U}=\{(t,x,-p(x,\xi),\xi)\mid t\in\mathbb R,\ (x,\xi)\in\mathcal U\},
$$
and similarly for $\mathcal U'$. Note that $\mathcal U\subset \{|p-1|<\delta_1\}$
and $\mathcal U'\subset \{|p-1|<2\delta_1\}$.
Since $\mathcal U'$ is bounded, and by~\eqref{e:rex}, for $\delta_1>0$ small enough and $(x,\xi)\in\mathcal U'$,
$p(x,\xi)$ is the only solution to the equation
$\mathbf p(x,\xi;\omega)=0$ in $\{\omega\in\mathbb C\mid |\omega-1|<2\delta_1\}$.

We now study the Hamiltonian flow of $p$. Since
$$
\partial_{x,\xi}p(x,\xi)=-{\partial_{x,\xi} \mathbf p(x,\xi,p(x,\xi))\over \partial_\omega \mathbf p(x,\xi,p(x,\xi))},
$$
and for each $t$,
$$
-\partial_\omega\mathbf p(x,\xi,p(x,\xi))
=\partial_\tau\tilde p(t,x,-p(x,\xi),\xi),
$$
we see that the flow of $H_p$ is the projection of the rescaled geodesic flow~\eqref{e:rescaled-flow}
on $\mathcal C_+$: for $(x,\xi)\in T^*X_0\setminus 0$,
\begin{equation}
  \label{e:rescaled-flow-2}
\tilde\varphi^s(t,x,-p(x,\xi),\xi)=(t+s,x(s),-p(x,\xi),\xi(s)),\quad
(x(s),\xi(s))=e^{sH_p}(x,\xi).
\end{equation}
We now verify some of the assumptions of~\switch{\cite[\S4.1]{nhp}}{\S\ref{s:framework-assumptions}}. We let $X$ be an $n$-dimensional manifold
containing $X_0$ (for the Kerr--de Sitter metric it is constructed in~\S\ref{s:kdsu-resonances})
and consider the volume form $d\Vol$ on $X_0$ related to the volume form $d\,\widetilde{\Vol}$
on $\widetilde X_0$ generated by $\tilde g$ by the formula
$d\,\widetilde{\Vol}=dt\wedge d\Vol$. The operator $P_{\tilde g}(\omega)$
is a semiclassical pseudodifferential operator depending holomorphically
on $\omega\in\Omega:=\{|\omega-1|<2\delta_1\}$ and $\mathbf p$ is its semiclassical
principal symbol. We do not specify the spaces $\mathcal H_1,\mathcal H_2$ here
and do not establish any mapping or Fredholm properties of $P_{\tilde g}(\omega)$;
for our specific applications it is done in~\S\ref{s:kdsu-resonances}.
Except for these mapping properties, the assumptions~(1), (2), and~(5)--(9) of~\switch{\cite[\S4.1]{nhp}}{\S\ref{s:framework-assumptions}}
are satisfied, with $\mathcal U,\mathcal U'$ defined above,
$[\alpha_0,\alpha_1]:=[1-\delta_1/2,1+\delta_1/2]$, and the incoming/outgoing
tails $\Gamma_\pm$ on the space slice given by
(for each $t$)
\begin{equation}
  \label{e:gamma-pm-new}
\Gamma_\pm=\{(x,\xi)\mid (t,x,-p(x,\xi),\xi)\in \widetilde\Gamma_\pm \cap \{|1+\tau|\leq \delta_1\}\cap \overline{\mathcal W}
\cap \{\mu\geq\delta_1\}\},
\end{equation}
and similarly for the trapped set $K=\Gamma_+\cap\Gamma_-$.

Finally, the dynamical assumptions of~\switch{\cite[\S5.1]{nhp}}{\S\ref{s:dynamics}}
are also satisfied, as follows directly from \eqref{e:rescaled-flow-2}
and the dynamical assumptions of~\S\ref{s:kdsu-assumptions}. Note that the subbundles
$\mathcal V_\pm$ of $T\Gamma_\pm$ defined in~\switch{\cite[\S5.1]{nhp}}{\S\ref{s:dynamics}} coincide with the subbundles
$\widetilde{\mathcal V}_\pm$ of $T\widetilde\Gamma_\pm$ defined in~\S\ref{e:tilde-v-pm}
under the identification $T_{(x,\xi)}(T^*X_0)\simeq T_{(t,x,-p(x,\xi),\xi)}(T^*\widetilde X_0)\cap \{dt=d\tau=0\}$,
and the expansion rates $\nu_{\min},\nu_{\max},\mu_{\max}$ defined in~\eqref{e:tilde-nu-min}--\eqref{e:tilde-mu-max}
coincide with those defined in~\switch{\cite[(5.1)--(5.3)]{nhp}}{\eqref{e:nu-min}--\eqref{e:mu-max}}.

To relate the constants for the Weyl laws in Theorem~\ref{t:kds-weyl} and\switch{~\cite[Theorem~2]{nhp}}{ Theorem~\ref{t:weyl-law}},
we note that for $[a,b]\subset (1-\delta_1/2,1+\delta_1/2)$,
$$
\Vol_\sigma(K\cap p^{-1}[a,b])=\Vol_{\tilde\sigma}(\widetilde K\cap \{a\leq -\tau\leq b\}\cap\{t=\const\}).
$$
Here $\Vol_\sigma$ and $\Vol_{\tilde\sigma}$ stand for symplectic volume forms
of order $2n-2$ on $T^*X_0$ and $T^*\widetilde X_0$, respectively. The constant $c_{\widetilde K}$
from Theorem~\ref{t:kds-weyl} is then given by
\begin{equation}
  \label{e:c-k-tilde}
c_{\widetilde K}=\Vol_{\tilde\sigma}(\widetilde K\cap \{0\leq\tau\leq 1\}\cap \{t=\const\}).
\end{equation}

\subsection{Applications to linear waves}
  \label{s:kdsu-decay}

In this section, we apply the results of~\switch{\cite{nhp}}{Chapter~\ref{c:nhp}} to understand the decay properties
of linear waves; Theorem~\ref{t:internal-waves} below forms the base for the proofs
of Theorems~\ref{t:decay} and~\ref{t:res-dec} in~\S\ref{s:kdsu-waves}.

Consider a family of approximate solutions $u(h)\in\mathcal D'((-1,T(h)+1)_t\times X_0)$ to the wave equation
\begin{equation}
  \label{e:wewe}
h^2\Box_{\tilde g}u(h)=\mathcal O(h^\infty)_{C^\infty}.
\end{equation}
Here $h\ll 1$ is the semiclassical parameter and $T(h)>0$ depends on $h$ (for our particular
application, $T(h)=T\log(1/h)$ for some constant $T$). We assume that
$u$ is $h$-tempered uniformly in $t$, as defined in~\S\ref{s:kdsu-prelims}. Then
by the elliptic estimate (see for instance\switch{~\cite[Proposition~3.2]{nhp}}{ Proposition~\ref{l:elliptic}}), $u$ is microlocalized on the
light cone:
\begin{equation}
  \label{e:woof}
\widetilde\WFh(u)\subset \{\tilde p=0\},
\end{equation}
where $\widetilde\WFh(u)$ is defined in~\S\ref{s:kdsu-prelims}.
By the restriction statement in~\S\ref{s:kdsu-prelims}, $u$ is
a smooth function of $t$ with values in $h$-tempered distributions on $X_0$.
Moreover, we obtain for $0<\delta_1<\delta_2$ small enough and each $t_0\in [0,T(h)]$,
\begin{equation}
  \label{e:equivl2}
\begin{gathered}
\|u(t_0)\|_{H^1_x(X_{\delta_2})}\leq C\|u\|_{H^1_{t,x}([t_0-1,t_0+1]\times X_{\delta_1})}+\mathcal O(h^\infty),\\
\|u\|_{L^2_t[t_0-1,t_0+1]L^2_x(X_{\delta_1})}\leq C\|u\|_{L^\infty_t[t_0-1,t_0+1]L^2_x(X_{\delta_1})}+\mathcal O(h^\infty).
\end{gathered}
\end{equation}
The second of these inequalities is trivial; the first one is done by applying
the standard energy estimate for the wave equation to the function
$\chi(t-t_0)u$, with $\chi\in C_0^\infty(-\epsilon,\epsilon)$ equal to 1 near $0$
and $\epsilon>0$ small depending on $\delta_1,\delta_2$.

We furthermore restrict ourselves to the following class of outgoing solutions, see Figure~\ref{f:outgoing}(a):
\begin{defi}
  \label{d:we-outgoing}
Fix small $\delta_1>0$.
A solution $u$ to~\eqref{e:wewe}, $h$-tempered uniformly in $t\in (-1,T(h)+1)$,
is called outgoing, if its projected wavefront set $\widetilde\WFh(u)$, defined in~\S\ref{s:kdsu-prelims},
satisfies (for $\widetilde{\mathcal U}$ defined in~\eqref{e:fUnz})
\begin{align}
\label{e:we-outgoing-1}
\widetilde\WFh(u)\cap \{\mu>\delta_1\}&\subset \widetilde{\mathcal U}\cap \{|\tau+1|<\delta_1/4\},\\
\label{e:we-outgoing-2}
\widetilde\WFh(u)\cap \{\delta_1\leq\mu\leq 2\delta_1\}&\subset \{H_{\tilde p}\mu\leq 0\}.
\end{align}
\end{defi}
%
\begin{figure}
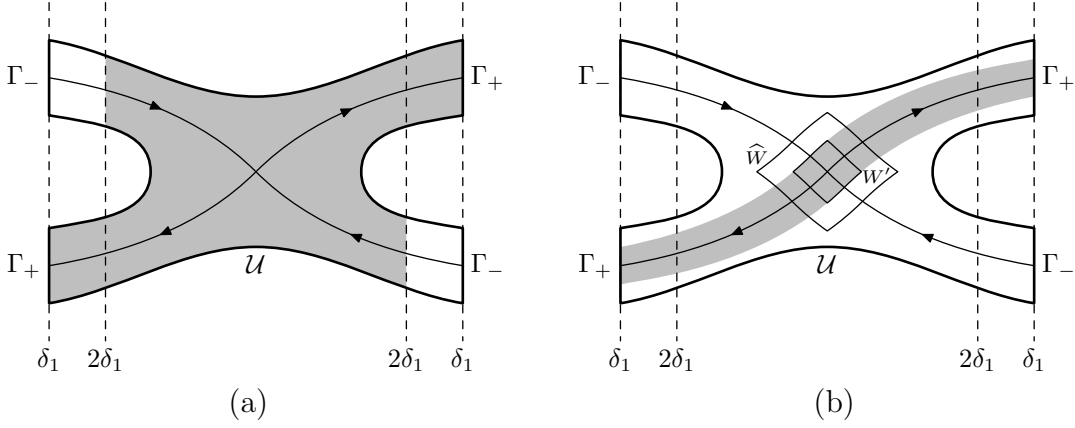

\includegraphics{kdsu.10}
\qquad
\includegraphics{kdsu.11}
\hbox to\hsize{\hss (a) \hss\hss (b)\hss}
\caption[The phase space picture of the flow, showing shaded
$\WFh(\mathbf u(t))$ for (a) all $t$ and (b) all $t\geq t_1$,
for $u$ satisfying Definition~\ref{d:we-outgoing}.]%
{The phase space picture of the flow, showing shaded
$\WFh(\mathbf u(t))$ for (a) all $t$ and (b) all $t\geq t_1$,
for $u$ satisfying Definition~\ref{d:we-outgoing}. The horizontal axis corresponds
to $\mu$.}
\label{f:outgoing}
\end{figure}
The main result of this section is
\begin{theo}\label{t:internal-waves}
Fix $T,N,\varepsilon>0$ and let the assumptions of~\S\ref{s:kdsu-assumptions} hold,
including $r$-normal hyperbolicity with $r$ large depending on $T,N$.
Assume that $u$ is an outgoing solution to~\eqref{e:wewe}, for
$t\in (-1,T\log(1/h)+1)$, and $\|u(t)\|_{H^{-N}_h(X_{\delta_1/2})}=\mathcal O(h^{-N})$
uniformly in $t$.
Then for $t_0$ large enough and independent of $h$, we can write
$$
u(t,x)=u_\Pi(t,x)+u_R(t,x),\quad
t_0\leq t\leq T\log(1/h),
$$
such that $h^2\Box_{\tilde g}u_\Pi,h^2\Box_{\tilde g}u_R$ are $\mathcal O(h^N)_{H^N_h}$
on $X_{\delta_1}$ and, with $\|\cdot\|_{\mathcal E}$
defined in~\eqref{e:kdsu-energy},
\begin{align}
  \label{e:ini-1}
\|u_\Pi(t_0)\|_{\mathcal E}&\leq Ch^{-1/2}\|u(0)\|_{\mathcal E}+\mathcal O(h^N),\\
  \label{e:ini-2}
\|u_\Pi(t)\|_{\mathcal E}&\leq Ce^{-(\nu_{\min}-\varepsilon)t/2}\|u_\Pi(t_0)\|_{\mathcal E}+\mathcal O(h^N),\\
  \label{e:ini-3}
\|u_\Pi(t)\|_{\mathcal E}&\geq C^{-1}e^{-(\nu_{\max}+\varepsilon)t/2}\|u_\Pi(t_0)\|_{\mathcal E}-\mathcal O(h^N),\\
  \label{e:ini-4}
\|u_R(t)\|_{\mathcal E}&\leq Ch^{-1}e^{-(\nu_{\min}-\varepsilon)t}\|u(0)\|_{\mathcal E}+\mathcal O(h^N),\\
  \label{e:ini-5}
\|u(t)\|_{\mathcal E}&\leq Ce^{\varepsilon t}\|u(0)\|_{\mathcal E}+\mathcal O(h^N),
\end{align}
all uniformly in $t\in [t_0,T\log(1/h)]$.
\end{theo}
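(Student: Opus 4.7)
The strategy is to use the microlocal projector $\Pi$ constructed in \cite{nhp} for $r$-normally hyperbolic trapping, applied to the space-slice reduction $P_{\tilde g}(\omega)$ of \S\ref{s:kdsu-space}. The verifications in that subsection (matching $\nu_{\min}, \nu_{\max}, \mu_{\max}$, the tails $\Gamma_\pm$, and the sets $\mathcal U, \mathcal U'$) mean that all hypotheses of \cite{nhp} are met, so $\Pi$ exists as a compactly supported, compactly microlocalized FIO with $\Pi^2 = \Pi + \mathcal O(h^\infty)$ and wavefront relation contained in a neighborhood of $\Gamma_+ \times \Gamma_-$.

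Given an outgoing $u$ as in the theorem, I would first choose $t_0$ large enough that the outgoing property \eqref{e:we-outgoing-1}--\eqref{e:we-outgoing-2}, combined with the concavity assumption \eqref{e:concavity} and propagation of singularities away from the incoming tail, forces $\widetilde\WFh(u)\cap\{\mu>\delta_1\}$ at time $t_0$ to be concentrated in a small neighborhood of $\Gamma_+$; this is exactly the picture of Figure~\ref{f:outgoing}(b). I would then decompose the Cauchy data at $t=t_0$ by setting $(u_\Pi(t_0),\partial_t u_\Pi(t_0)) := \Pi(u(t_0),\partial_t u(t_0))$ and $u_R(t_0) := u(t_0) - u_\Pi(t_0)$, and propagate each piece forward as an approximate solution to $h^2\Box_{\tilde g} = 0$ (truncated microlocally to live away from $\{\mu\le\delta_1/2\}$ so as to remain inside the region where the stationary framework applies). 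The bound \eqref{e:ini-1} then follows from boundedness of $\Pi$ on the appropriate $L^2$-type space together with the trace-style estimate \eqref{e:equivl2}, the $h^{-1/2}$ loss coming from passing between the spacetime energy norm and the $L^2$ norm natural on the space slice.

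The band bounds \eqref{e:ini-2} and \eqref{e:ini-3} come from the microlocal conjugation of the flow furnished by $\Pi$: on $\operatorname{Im}\Pi$, the wave equation reduces to a model for which the energy norm decays at rates in the interval $[(\nu_{\max}+\varepsilon)/2,(\nu_{\min}-\varepsilon)/2]$, because the transversal expansion rates of $\tilde\varphi^s$ on $\widetilde K$ lie by definition in $[\nu_{\min}-\varepsilon,\nu_{\max}+\varepsilon]$; the upper bound uses the slowest contraction, and the lower bound a quantity comparable from below to $\|d\tilde\varphi^s|_{\widetilde{\mathcal V}_\pm}\|$ that is preserved by the normal form. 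The remainder bound \eqref{e:ini-4} uses that $\widetilde\WFh(u_R)$ is microlocally disjoint from $\widetilde K$, since $(I-\Pi)$ kills wavefront on the trapped set modulo $\mathcal O(h^\infty)$; propagation of singularities along $\tilde\varphi^s$, combined with the resolvent gap $\|P_{\tilde g}(\omega)^{-1}\|=\mathcal O(h^{-1})$ in the strip $\Im z > -(\nu_{\min}-\varepsilon)/2$ established in \cite{nhp}, flushes $u_R$ out of $\{\mu>\delta_1\}$ at the squared rate $e^{-(\nu_{\min}-\varepsilon)t}$. The growth bound \eqref{e:ini-5} is then a standard $h$-tempered energy estimate iterated on unit time intervals.

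The main obstacle is matching the FIO error terms against the Ehrenfest-scale window $t\le T\log(1/h)$. A single application of $\Pi$ (or any step of the symbolic construction from \cite{nhp}) produces an $\mathcal O(h^{N'})$ remainder, but we need $h^2\Box_{\tilde g}u_\Pi$ and $h^2\Box_{\tilde g}u_R$ to stay $\mathcal O(h^N)_{H^N_h}$ after propagation over times along which the wave flow can a priori grow by $h^{-CT}$. This is exactly the role of $r$-normal hyperbolicity with $r$ large depending on $T,N$: it allows the symbolic iteration of \cite{nhp} to push $N'$ as far as needed, at the cost of finitely many derivatives of $\widetilde\Gamma_\pm$. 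The delicate bookkeeping — arranging $u_\Pi$ as a genuine approximate solution to the spacetime wave equation rather than merely to the stationary equation on the space slice, so that the commutator $[\Pi,\partial_t]$-type errors do not destroy \eqref{e:ini-2}--\eqref{e:ini-4} — is where most of the work of the proof will go.
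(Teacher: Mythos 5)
Your overall strategy --- reduce to the space slice, apply the projector $\Pi$ of~\cite{nhp}, and prove the band bounds by positive-commutator arguments --- is the right one and matches the paper's. But there are two genuine gaps. First, your mechanism for the remainder bound~\eqref{e:ini-4} is wrong: $1-\Pi$ does \emph{not} kill the wavefront set on the trapped set (if it did, $u_R$ would be $\mathcal O(h^\infty)$ on $X_{\delta_1}$ after a bounded time by propagation of singularities, a far stronger and false statement). In the paper, $(1-\Pi)\mathbf u$ still lives on $\widetilde K$ but is written as $\Theta_-\Xi\mathbf u$ modulo $\mathcal O(h^\infty)$, where $\Theta_-$ vanishes to first order transversally to $\Gamma_-$; the function $\mathbf v=\Xi\mathbf u$ then solves a damped equation $(hD_t+P-ihZ_-)\mathbf v=\mathcal O(h^\infty)$ with $\sigma(Z_-)\geq\nu_{\min}-\varepsilon$, which is what produces the doubled decay rate $e^{-(\nu_{\min}-\varepsilon)t}$, and the $h^{-1}$ loss is exactly the norm of $\Xi$. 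A frequency-domain resolvent gap does not convert into this time-domain estimate on the window $t\leq T\log(1/h)$ without further argument. Relatedly, the $h^{-1/2}$ in~\eqref{e:ini-1} is the operator norm bound $\|\Pi\|_{L^2\to L^2}=\mathcal O(h^{-1/2})$, not a loss from passing between spacetime and space-slice norms.

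Second, your construction of $u_\Pi$ by projecting the Cauchy data once at $t=t_0$ and propagating is not the paper's and is not obviously adequate; moreover you have skipped the reduction that makes $\Pi$ applicable at all. $\Pi$ is an operator on the space slice adapted to a first-order equation, so one must first factor $h^2\Box_{\tilde g}=\widetilde{\mathcal S}(hD_t+P)\widetilde{\mathcal S}+\mathcal O(h^\infty)$ microlocally near $\widetilde{\mathcal U}$ and set $\mathbf u=\widetilde{\mathcal S}u$, which solves $(hD_t+P)\mathbf u=\mathcal O(h^\infty)$ near $X_{\delta_1}$. The paper then defines $u_\Pi(t):=\widetilde{\mathcal S}'\bigl(e^{-it_1P/h}\Pi\mathbf u(t-t_1)\bigr)$ --- a sliding construction, projecting at each time $t-t_1$ and propagating for a \emph{fixed} time $t_1$ --- precisely because $[P,\Pi]=\mathcal O(h^\infty)$ and $(hD_t+P)\Pi\mathbf u=\mathcal O(h^\infty)$ hold only microlocally near the neighborhood $\widehat W$ of the trapped set, and the fixed-time propagator is what extends the approximate solution from $\widehat W$ outward along $\Gamma_+$ to cover $X_{\delta_1}$. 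With a one-time projection you would have to control the accumulated commutator and localization errors over the full logarithmic window, which is exactly the difficulty you flag in your last paragraph but do not resolve; the paper's construction is designed to avoid it.
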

For the proof, we assume that the metric is $r$-normally hyperbolic for all $r$,
and prove the bounds for all $T,N$ (so that $\mathcal O(h^N)$ becomes $\mathcal O(h^\infty)$);
since semiclassical arguments require finitely many derivatives to work, the statement
will be true for $r$ large depending on $T$ and $N$.

We first recall the factorization of\switch{~\cite[Lemma~4.3]{nhp}}{ Lemma~\ref{l:resolution}}:
\begin{equation}
  \label{e:fac-1}
P_{\tilde g}(\omega)=\mathcal S(\omega)(P-\omega)\mathcal S(\omega)+\mathcal O(h^\infty)\quad
\text{microlocally near }\mathcal U,
\end{equation}
where $\mathcal S(\omega)$ is a family of pseudodifferential operators elliptic near $\mathcal U$,
and such that $\mathcal S(\omega)^*=\mathcal S(\omega)$ for $\omega\in\mathbb R$, and
$P$ is a self-adjoint pseudodifferential operator, moreover we assume that it is compactly
supported and compactly microlocalized.
If we define the self-adjoint pseudodifferential operator $\widetilde{\mathcal S}$ on $\widetilde X_0$
by replacing $\omega$ by $-hD_t$ in $\mathcal S(\omega)$, then we get
\begin{equation}
\label{e:fac-2}
h^2\Box_{\tilde g}=\widetilde{\mathcal S}(hD_t+P)\widetilde{\mathcal S}+\mathcal O(h^\infty)\quad
\text{microlocally near }\widetilde{\mathcal U}.
\end{equation}
We define
$$
\mathbf u(t):=(\widetilde{\mathcal S}u)(t),\quad
0\leq t\leq T\log(1/h),
$$
note that $\mathbf u(t)$ and its $t$-derivatives are bounded
uniformly in $t$ with values in $h$-tempered distributions
on $X_0$ by the discussion of restrictions to space slices in~\S\ref{s:kdsu-prelims}
and by~\eqref{e:woof}. We have by~\eqref{e:wewe}, \eqref{e:we-outgoing-1}, \eqref{e:we-outgoing-2}, and~\eqref{e:fac-2},
\begin{gather}
  \label{e:kentucky-1}
(hD_t+P)\mathbf u(t)=\mathcal O(h^\infty)\quad\text{microlocally near }X_{\delta_1},\\
  \label{e:kentucky-1.5}
\WFh(\mathbf u(t))\cap X_{\delta_1}\subset \{|p-1|<\delta_1/4\},\\
  \label{e:kentucky-1.75}
\WFh(\mathbf u(t))\cap \{\delta_1\leq \mu\leq 2\delta_1\}\subset \{H_p\mu\leq 0\},
\end{gather}
uniformly in $t\in [0, T\log(1/h)]$.

We next use the construction of\switch{~\cite[Lemma~5.1]{nhp}}{ Lemma~\ref{l:phi-pm}},
which (combined with the homogeneity of the flow)
gives functions $\varphi_\pm$ defined in a conic neighborhood of $K$ in $T^*X_0$,
such that $\Gamma_\pm=\{\varphi_\pm=0\}$ in this neighborhood, $\varphi_\pm$
are homogeneous of degree zero, and
\begin{equation}
H_p\varphi_\pm=\mp c_\pm\varphi_\pm,\quad
\nu_{\min}-\varepsilon<c_\pm<\nu_{\max}+\varepsilon,
\end{equation}
where $c_\pm$ are some smooth functions on the domain of $\varphi_\pm$. Then
for small $\delta>0$,
$$
U_\delta:=\{|\varphi_+|\leq\delta,\ |\varphi_-|\leq\delta\}
$$
is a small closed conic neighborhood of $K$ in $T^*X_0\setminus 0$.

We now fix $\delta$ small enough so
that\switch{~\cite[Theorem~3 in~\S7.1 and Proposition~7.1]{nhp}}{ Theorem~\ref{t:our-Pi} in~\S\ref{s:construction-1}
and Proposition~\ref{l:ideals}}
apply, giving a Fourier integral operator $\Pi\in I_{\comp}(\Lambda^\circ)$ which satisfies the
equations
\begin{equation}
  \label{e:magic-Pi}
\Pi^2=\Pi+\mathcal O(h^\infty),\quad
[P,\Pi]=\mathcal O(h^\infty)
\end{equation}
microlocally near the set $\widehat W\times\widehat W$, with
\begin{equation}
  \label{e:kdsu-hat-w}
\widehat W:=U_\delta\cap \{|p-1|\leq \delta_1/2\}.
\end{equation}
Here $\Lambda^\circ\subset\Gamma_-\cap\Gamma_+$ is the canonical relation defined in\switch{~\cite[(5.12)]{nhp}}{~\eqref{e:the-Lambda}}. Also, we define
\begin{equation}
  \label{e:kdsu-w'}
W':=U_{\delta/2}\cap \{|p-1|\leq\delta_1/4\}.
\end{equation}

We now derive certain conditions on the microlocalization of~$u$ for large enough times,
see Figure~\ref{f:outgoing}(b) (compare with\switch{~\cite[Figure~5]{nhp}}{ Figure~\ref{f:reduction}}):
\begin{prop}
\label{l:kentucky}
For $t_1$ large enough independent of $h$, the function
$\mathbf u(t)$ satisfies
\begin{gather}
  \label{e:kentucky-2}
\WFh(\mathbf u(t))\cap \widehat W\subset\{|\varphi_+|<\delta/2\},\\
  \label{e:kentucky-2.5}
\WFh(\mathbf u(t))\cap \Gamma_-\subset W',
\end{gather}
uniformly in $t\in [t_1,T\log(1/h)]$.
\end{prop}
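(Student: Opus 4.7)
The strategy is to combine propagation of singularities for the operator $hD_t+P$, using \eqref{e:kentucky-1}, with the outgoing microlocal hypothesis \eqref{e:kentucky-1.75} and the concavity assumption \eqref{e:concavity}. By the standard propagation of singularities theorem, for any $\rho_0=(x_0,\xi_0)\in\WFh(\mathbf{u}(t_0))$ the backward flow satisfies $e^{-sH_p}(\rho_0)\in\WFh(\mathbf{u}(t_0-s))$ as long as $s\in[0,t_0]$ and the trajectory remains in the region $X_{\delta_1}$ where \eqref{e:kentucky-1} holds. I will prove both inclusions by contradiction: if a wavefront point violates either conclusion, then its backward orbit eventually crosses the belt $\{\delta_1\le\mu\le 2\delta_1\}$ in a manner incompatible with the outgoing condition.

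For \eqref{e:kentucky-2}, suppose $\rho_0\in\WFh(\mathbf{u}(t_0))\cap \widehat W$ with $|\varphi_+(\rho_0)|\ge \delta/2$. Since $H_p\varphi_+=-c_+\varphi_+$ with $c_+\ge \nu_{\min}-\varepsilon>0$, the function $|\varphi_+|$ grows exponentially along the backward trajectory, exiting $U_\delta$ through $\{|\varphi_+|=\delta\}$ in bounded time. As $\rho_0\notin\Gamma_+$, the backward orbit subsequently escapes from $\{\mu\ge 2\delta_1\}$; by continuity and homogeneity, and since $\widehat W\cap\{|\varphi_+|\ge \delta/2\}$ is compact modulo dilations and disjoint from $\Gamma_+$, the total backward escape time $s_1$ is uniform in $\rho_0$ and independent of $h$. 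Let $s_*\in[0,s_1]$ be the first time at which $\gamma(s):=e^{-sH_p}(\rho_0)$ reaches $\mu=2\delta_1$. By propagation of singularities $\gamma(s_*)\in\WFh(\mathbf{u}(t_0-s_*))$, so \eqref{e:kentucky-1.75} forces $H_p\mu(\gamma(s_*))\le 0$. But $\gamma$ enters $\{\mu<2\delta_1\}$ under backward flow at $s_*$, so $\tfrac{d}{ds}\mu(\gamma(s))|_{s_*}\le 0$, i.e.\ $H_p\mu(\gamma(s_*))\ge 0$. Hence $H_p\mu=0$ there, and since $2\delta_1<\delta_0$, the concavity assumption \eqref{e:concavity}, transferred to $T^*X_0$ via \eqref{e:rescaled-flow-2}, yields $H_p^2\mu(\gamma(s_*))<0$. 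Thus $\mu\circ\gamma$ has a strict local maximum at $s_*$, contradicting $\mu(\gamma(s))>2\delta_1$ for $s<s_*$ close to $s_*$. Taking $t_1\ge s_1$ prevents this configuration and proves \eqref{e:kentucky-2}.

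The inclusion \eqref{e:kentucky-2.5} follows by the same scheme. The constraint $|p-1|\le\delta_1/4$ is automatic from \eqref{e:kentucky-1.5} since $\Gamma_-\subset X_{\delta_1}$, so it remains to show that every $\rho_0\in\WFh(\mathbf{u}(t_0))\cap\Gamma_-$ lies in $U_{\delta/2}$. A point $\rho_0\in\Gamma_-\setminus U_{\delta/2}$ either lies in the conic neighborhood of $K$ where $\varphi_\pm$ are defined, in which case $\varphi_-=0$ on $\Gamma_-$ forces $|\varphi_+(\rho_0)|>\delta/2$ and we are back in the previous situation, or it lies outside that neighborhood, in which case compactness of $\Gamma_-\setminus U_{\delta/2}$ (modulo dilations) together with the fact that points of $\Gamma_-\setminus K$ are not backward-trapped yields a uniform backward escape time $s_2$ to $\{\mu\le 2\delta_1\}$. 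Repeating the belt-crossing argument of the previous paragraph at this crossing gives a contradiction, and setting $t_1:=\max(s_1,s_2)$ finishes the proof.

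The principal obstacle is the borderline case $H_p\mu=0$ at the crossing: this equality is consistent with the outgoing hypothesis \eqref{e:kentucky-1.75} alone and is excluded only by invoking the strict concavity \eqref{e:concavity}. A secondary technical issue is that $\varphi_\pm$ are only defined near $K$, so the analysis of $\WFh(\mathbf{u}(t))\cap\Gamma_-$ must be split into the piece lying in the domain of $\varphi_\pm$ and a global piece treated by purely dynamical compactness.
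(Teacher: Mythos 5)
Your argument is correct and follows the same route as the paper's: both rest on propagation of singularities for $hD_t+P$, the outgoing microlocal condition in the belt $\{\delta_1\le\mu\le 2\delta_1\}$, and the concavity assumption~\eqref{e:concavity} (transferred via~\eqref{e:rescaled-flow-2}) to rule out the borderline crossing, together with compactness to make escape times uniform in $h$ and $t$. The paper organizes this slightly differently: it first establishes once and for all that any wavefront point in $X_{2\delta_1}$ has its backward trajectory staying in $X_{2\delta_1}$ for time $t_1$ (exactly your belt-crossing contradiction, just stated as a lemma rather than a proof-by-contradiction step), and then reads off both~\eqref{e:kentucky-2} and~\eqref{e:kentucky-2.5} from two dynamical facts. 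For~\eqref{e:kentucky-2.5} in particular, the paper uses forward contraction $e^{t_1H_p}(\Gamma_-\cap\{|p-1|<\delta_1/4\}\cap X_{2\delta_1})\subset W'$ (citing the convergence of forward flow on $\Gamma_-$ toward $K$), whereas you reuse backward escape from $\Gamma_-\setminus U_{\delta/2}$ together with the belt-crossing argument. Both work; the forward-contraction route in the paper sidesteps the case split over whether $\rho_0$ lies in the domain of $\varphi_\pm$, which makes that step a little cleaner, but your version is sound since $\Gamma_-\setminus U_{\delta/2}$ is compact and disjoint from $\Gamma_+$.
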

\begin{proof}
Consider $(x,\xi)\in \WFh(\mathbf u(t))\cap X_{2\delta_1}$ for some $t\in [t_1,T\log(1/h)]$.
Put $\gamma(s)=e^{sH_p}(x,\xi)$. Then by propagation of singularities
(see for example\switch{~\cite[Proposition~3.4]{nhp}}{ Proposition~\ref{l:microhyperbolic}}) for the equation~\eqref{e:kentucky-1},
we see that either there exists $s_0\in [-t_1,0]$ such that
$\gamma(s_0)\in\{\delta_1\leq \mu\leq 2\delta_1\}\cap \WFh(u(t+s_0))$,
or $\gamma(s)\in X_{2\delta_1}$ for all $s\in [-t_1,0]$.
However, in the first of these two cases, by~\eqref{e:kentucky-1.75}
we have $\gamma(s_0)\in \{\mu\leq 2\delta_1\}\cap \{H_p\mu\leq 0\}$,
which implies that $\gamma(0)\in \{\mu\leq 2\delta_1\}$ by assumption~\eqref{w:convex}
in~\S\ref{s:kdsu-assumptions}, a contradiction. Therefore,
$$
e^{tH_p}(x,\xi)\in X_{2\delta_1},\quad
t\in [-t_1,0].
$$
It remains to note that for $t_1$ large enough,
$$
\begin{gathered}
e^{-t_1H_p}(\widehat W\cap \{|\varphi_+|\geq \delta/2\})\cap X_{2\delta_1}=\emptyset;\\
e^{t_1H_p}(\Gamma_-\cap \{|p-1|<\delta_1/4\}\cap X_{2\delta_1})\subset W';
\end{gathered}
$$
the first of these statements follows from the fact that $\widehat W\cap \{|\varphi_+|\geq\delta/2\}$
is a compact set not intersecting $\Gamma_+$, and the second one, from\switch{~\cite[Lemma~4.1]{nhp}}{ Lemma~\ref{l:the-flow}}.
\end{proof}
By~\eqref{e:kentucky-1}, \eqref{e:magic-Pi}, and~\eqref{e:kentucky-2.5}, and since $\WFh(\Pi)\subset\Gamma_-\times\Gamma_+$
we have uniformly in $t\in [t_1,T\log(1/h)]$,
\begin{equation}
  \label{e:pier}
(hD_t+P)\Pi\mathbf u(t)=\mathcal O(h^\infty)\quad\text{microlocally near }
\widehat W.
\end{equation}
By\switch{~\cite[Proposition~6.1 and~\S6.2]{nhp}}{ Proposition~\ref{l:bund} and \S\ref{s:general}}, we have
\begin{equation}
  \label{e:inj-1}
\|\Pi \mathbf u(t)\|_{L^2}\leq Ch^{-1/2}\|\mathbf u(t)\|_{L^2}.
\end{equation}
We now use the methods of~\switch{\cite[\S8]{nhp}}{\S\ref{s:resolvent-bounds}} to prove a microlocal
version of Theorem~\ref{t:internal-waves} near the trapped set:
\begin{prop}
  \label{l:internal-waves-1}
There exist compactly supported $A_0,A_1\in\Psi^{\comp}(X_0)$
microlocalized inside $\widehat W$, elliptic on $W'$, and such that
for $t\in [t_1,T\log(1/h)]$,
\begin{align}
  \label{e:inj-2}
\|A_0\Pi\mathbf u(t)\|_{L^2}&\leq Ce^{-(\nu_{\min}-\varepsilon)t/2}\|A_0\Pi\mathbf u(t_1)\|_{L^2}+\mathcal O(h^\infty),\\
  \label{e:inj-3}
\|A_0\Pi\mathbf u(t)\|_{L^2}&\geq C^{-1}e^{-(\nu_{\max}+\varepsilon)t/2}\|A_0\Pi \mathbf u(t_1)\|_{L^2}-\mathcal O(h^\infty),\\
  \label{e:inj-4}
\|A_1(1-\Pi)\mathbf u(t)\|_{L^2}&\leq Ch^{-1}e^{-(\nu_{\min}-\varepsilon)t}\|A_0\mathbf u(t_1)\|_{L^2}+\mathcal O(h^\infty),\\
  \label{e:inj-5}
\|A_1\mathbf u(t)\|_{L^2}&\leq Ce^{\varepsilon t}\|A_0\mathbf u(t_1)\|_{L^2}+\mathcal O(h^\infty).
\end{align}
\end{prop}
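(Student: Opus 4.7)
The strategy is to adapt the resolvent-bound analysis of \cite[\S8]{nhp} to the time-dependent setup, exploiting the two approximately invariant subspaces defined by the projector $\Pi$ and its complement $1-\Pi$. First I choose $A_0,A_1\in\Psi^{\comp}(X_0)$ microlocalized in $\widehat W$ and elliptic on $W'$, arranged so that $\WFh(A_0)\Subset\WFh(A_1)\Subset\widehat W$. The wavefront-set restrictions on $\mathbf u(t)$ provided by Proposition~\ref{l:kentucky}, combined with $\WFh(\Pi)\subset\Gamma_-\times\Gamma_+$ and~\eqref{e:kentucky-1.5}, ensure that over the interval $[t_1,T\log(1/h)]$ the quantities $A_0\Pi\mathbf u(t)$, $A_1(1-\Pi)\mathbf u(t)$, and $A_1\mathbf u(t)$ capture their respective objects microlocally modulo $\mathcal O(h^\infty)$.

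For the paired bounds \eqref{e:inj-2} and \eqref{e:inj-3}, I start from the evolution equation $(hD_t+P)\Pi\mathbf u(t)=\mathcal O(h^\infty)$ microlocally near $\widehat W$, namely~\eqref{e:pier}. Differentiating $\|A_0\Pi\mathbf u(t)\|_{L^2}^2$ in $t$ and using self-adjointness of $P$ produces a quadratic form in $\Pi\mathbf u(t)$ associated to the commutator $\tfrac{i}{h}[A_0^*A_0,P]$, plus $\mathcal O(h^\infty)$. The key input is the normal form description of $P\Pi$ from \cite[\S7]{nhp}: on the range of $\Pi$, $P$ is conjugate modulo $\mathcal O(h^\infty)$ to a model operator whose imaginary subprincipal part takes values in $[-h(\nu_{\max}+\varepsilon)/2,-h(\nu_{\min}-\varepsilon)/2]$. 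Inserting this pinching and applying Gronwall forward in $t$ delivers the upper bound~\eqref{e:inj-2}; applying Gronwall backward delivers the lower bound~\eqref{e:inj-3}.

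For \eqref{e:inj-4}, since $[P,\Pi]=\mathcal O(h^\infty)$ microlocally by~\eqref{e:magic-Pi}, subtracting \eqref{e:pier} from the equation for $\mathbf u$ gives $(hD_t+P)(1-\Pi)\mathbf u(t)=\mathcal O(h^\infty)$ microlocally near $\widehat W$. Off the range of $\Pi$ the effective transverse decay rate doubles to $\nu_{\min}-\varepsilon$, reflecting that $1-\Pi$ projects away from the marginally trapped eigenmodes; the quantitative content of this is the sharp resolvent estimate $\|(P-\omega)^{-1}(1-\Pi)\|_{L^2\to L^2}\le Ch^{-1}$ in the wider strip $\Im\omega\ge -h(\nu_{\min}-\varepsilon)$ established in \cite[\S8]{nhp}. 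Translating this to a time estimate (either by a Fourier--Laplace contour against the exponential-in-time a priori bound on the evolution, or by a positive-commutator argument replicating the stationary proof in the time variable), and using the injectivity estimate~\eqref{e:inj-1} to dominate $\|(1-\Pi)\mathbf u(t_1)\|$ in terms of $\|\mathbf u(t_1)\|$, produces~\eqref{e:inj-4} with the stated $h^{-1}$ loss. Finally \eqref{e:inj-5} follows from the splitting $A_1\mathbf u=A_1\Pi\mathbf u+A_1(1-\Pi)\mathbf u$ combined with \eqref{e:inj-2}, \eqref{e:inj-4}, and \eqref{e:inj-1}, absorbing harmless polynomial factors into $e^{\varepsilon t}$.

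The hard part is the rigorous passage from the stationary spectral picture of \cite[\S8]{nhp} to the time-dependent decay statements above, carried out with a single consistent choice of cutoffs $A_0,A_1$ across all four bounds. The normal-form conjugation must be executed precisely enough that the upper bound $\nu_{\max}+\varepsilon$ and the lower bound $\nu_{\min}-\varepsilon$ on the effective imaginary part of $P|_{\mathrm{Range}(\Pi)}$ transfer faithfully into the exponents of \eqref{e:inj-2} and \eqref{e:inj-3}; everything else is standard microlocal propagation together with the a priori bound~\eqref{e:inj-1}.
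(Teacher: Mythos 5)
Your overall architecture---positive-commutator/Gronwall arguments in time near $\widehat W$, driven by the structure of $\Pi$ and the ideals $\Theta_\pm,\Xi$ from~\cite{nhp}---is the paper's approach for \eqref{e:inj-2}--\eqref{e:inj-4}. But your derivation of \eqref{e:inj-5} has a genuine gap. You propose to get it from the splitting $A_1\mathbf u=A_1\Pi\mathbf u+A_1(1-\Pi)\mathbf u$ together with \eqref{e:inj-2}, \eqref{e:inj-4}, and \eqref{e:inj-1}. Any such route passes through $\|\Pi\mathbf u(t_1)\|\leq Ch^{-1/2}\|\mathbf u(t_1)\|$ (and, for the $(1-\Pi)$ piece, an $h^{-1}$), so the best it yields is $\|A_1\mathbf u(t)\|\leq Ch^{-1/2}e^{-(\nu_{\min}-\varepsilon)t/2}\|A_0\mathbf u(t_1)\|+\cdots$. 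For $t$ comparable to $t_1$ the factor $h^{-1/2}$ is not a ``harmless polynomial factor absorbable into $e^{\varepsilon t}$'': \eqref{e:inj-5} is a loss-free estimate, and it is needed precisely in the regime $t\lesssim \log(1/h)/(\nu_{\min}-\varepsilon)$ where the other bounds are too lossy (it is what produces \eqref{e:ini-5} and hence the small-time half of Theorem~\ref{t:decay}). The paper proves \eqref{e:inj-5} by a separate direct argument: the same positive-commutator identity with the weight $\mathcal X_-$, applied to $\mathbf u$ itself via \eqref{e:kentucky-1} rather than to any conjugated or projected function; the commutator term $\mathcal Y'_-=\tfrac{1}{2ih}[P,\mathcal X_-]$ then only satisfies $\sigma(\mathcal Y'_-)\geq-\varepsilon\sigma(\mathcal X_-)$ on the relevant wavefront set, giving $e^{\varepsilon t}$ growth with no $h$-loss.

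Two further points. For \eqref{e:inj-4} you have not identified the mechanism producing the doubled rate, and your bookkeeping of the $h^{-1}$ is off: the equation $(hD_t+P)(1-\Pi)\mathbf u=\mathcal O(h^\infty)$ alone gives no decay ($P$ is self-adjoint, so the microlocal propagator is unitary). The decay at rate $\nu_{\min}-\varepsilon$ comes from writing $(1-\Pi)\mathbf u=\Theta_-\mathbf v+\mathcal O(h^\infty)$ near $\widehat W$ with $\mathbf v=\Xi\mathbf u$, and running the commutator argument on the damped equation $(hD_t+P-ihZ_-)\mathbf v=\mathcal O(h^\infty)$, where $\sigma(Z_-)=c_-\geq\nu_{\min}-\varepsilon$; the $h^{-1}$ is $\|\Xi\|=\mathcal O(h^{-1})$ applied at time $t_1$, not a consequence of \eqref{e:inj-1} (which would only give $h^{-1/2}$). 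Finally, for \eqref{e:inj-2}--\eqref{e:inj-3}, differentiating $\|A_0\Pi\mathbf u\|_{L^2}^2$ with a generic $A_0$ gives a commutator $\tfrac{1}{ih}[A_0^*A_0,P]$ with no sign; the paper instead uses the specific weight $\mathcal X_+$ of~\cite[\S8.3]{nhp}, whose associated $\mathcal Z_+$ is pinched between $\tfrac{\nu_{\min}-\varepsilon}{2}\sigma(\mathcal X_+)$ and $\tfrac{\nu_{\max}+\varepsilon}{2}\sigma(\mathcal X_+)$ near $\WFh(\Pi\mathbf u(t))$, and only at the end compares $\langle\mathcal X_+\Pi\mathbf u,\Pi\mathbf u\rangle$ with $\|A_0\Pi\mathbf u\|_{L^2}^2$. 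Your ``normal form'' phrasing gestures at the same pinching, but as written it does not close the Gronwall inequality.
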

\begin{proof}
We will use the operators $\Theta_\pm,\Xi$ constructed in\switch{~\cite[Proposition~7.1]{nhp}}{ Proposition~\ref{l:ideals}}.
The microlocalization statements we make will be uniform in $t\in [t_1,T\log(1/h)]$.

We first prove~\eqref{e:inj-4}, following the proof of\switch{~\cite[Proposition~8.1]{nhp}}{ Proposition~\ref{l:estimate-kernel}}.
Put
$$
\mathbf v(t):=\Xi\mathbf u(t).
$$
Then similarly to~\switch{\cite[(8.14)]{nhp}}{\eqref{e:kernel-int-1.5}}, we find
$$
(1-\Pi)\mathbf u(t)=\Theta_-\mathbf v(t)+\mathcal O(h^\infty)\quad\text{microlocally near }\widehat W.
$$
By~\eqref{e:kentucky-1} and~\eqref{e:pier},
$$
(hD_t+P)(1-\Pi)\mathbf u(t)=\mathcal O(h^\infty)\quad\text{microlocally near }\widehat W.
$$
Similarly to\switch{~\cite[Proposition~8.3]{nhp}}{ Proposition~\ref{l:estimate-kernel-int}},
we use the commutation relation $[P,\Theta_-]=-ih\Theta_-Z_-+\mathcal O(h^\infty)$
together with propagation of singularities for the operator $\Theta_-$ to find
\begin{equation}
  \label{e:veq}
(hD_t+P-ihZ_-)\mathbf v(t)=\mathcal O(h^\infty)\quad\text{microlocally near }\widehat W.
\end{equation}
Here $Z_-\in\Psi^{\comp}(X_0)$ satisfies $\sigma(Z_-)=c_-$ near $\widehat W$.

Let $\mathcal X_-\in\Psi^{\comp}(X_0)$ be the operator used in~\switch{\cite[\S8.2]{nhp}}{\S\ref{s:estimate-kernel}},
satisfying $\WFh(\mathcal X_-)\Subset \widehat W$, $\sigma(\mathcal X_-)\geq 0$
everywhere, and $\sigma(\mathcal X_-)>0$ on $W'$.
Similarly to~\switch{\cite[(8.18)]{nhp}}{\eqref{e:pos-comm-1}}, we get
\begin{equation}
  \label{e:xeq}
{1\over 2}\partial_t\langle \mathcal X_- \mathbf v(t),\mathbf v(t)\rangle
+\langle \mathcal Y_- \mathbf v(t),\mathbf v(t)\rangle=\mathcal O(h^\infty),
\end{equation}
where
$$
\mathcal Y_-={1\over 2}(Z_-^*\mathcal X_-+\mathcal X_-Z_-)+{1\over 2ih}[P,\mathcal X_-]
$$
satisfies $\WFh(\mathcal Y_-)\Subset \widehat W$,
and similarly to~\switch{\cite[(8.19)]{nhp}}{\eqref{e:y-minus-positive}} we have
$$
\sigma(\mathcal Y_-)\geq (\nu_{\min}-\varepsilon)\sigma(\mathcal X_-)\quad\text{near }\WFh(\mathbf v(t)),
$$
and the inequality is strict on $W'$. Similarly to\switch{~\cite[Lemma~8.4]{nhp}}{ Lemma~\ref{l:kernel-garding}},
by sharp G\r arding inequality we get
\begin{equation}
  \label{e:fhtagn}
\langle(\mathcal Y_- -(\nu_{\min}-\varepsilon)\mathcal X_-)\mathbf v(t),\mathbf v(t)\rangle\geq
\|A_1\mathbf v(t)\|_{L^2}^2-Ch\|A'_0\mathbf v(t)\|_{L^2}^2-\mathcal O(h^\infty)
\end{equation}
for an appropriate choice of $A_1$ and some
$A'_0\in\Psi^{\comp}(X_0)$ microlocalized inside $\widehat W$.
Also similarly to\switch{~\cite[Lemma~8.4]{nhp}}{ Lemma~\ref{l:kernel-garding}}, by propagation of singularities
for the equation~\eqref{e:veq} we get for $t_1$ large enough,
\begin{gather}
  \label{e:jessica2}
\|A'_0\mathbf v(t)\|_{L^2}^2\leq C\|A_0\mathbf v(t_1)\|_{L^2}^2+\mathcal O(h^\infty),\quad
t\in [t_1,2t_1],\\
  \label{e:jessica}
\|A'_0\mathbf v(t)\|_{L^2}^2\leq C\|A_1\mathbf v(t-t_1)\|_{L^2}^2+\mathcal O(h^\infty),\quad
t\geq 2t_1,
\end{gather}
for an appropriate choice of $A_0$.
By~\eqref{e:xeq} and~\eqref{e:fhtagn}, we see that
$$
\begin{gathered}
\langle\mathcal X_-\mathbf v(t),\mathbf v(t)\rangle
\leq Ce^{-2(\nu_{\min}-\varepsilon)t}\langle \mathcal X_-\mathbf v(t_1),\mathbf v(t_1)\rangle
-C^{-1}\int_{t_1}^t e^{-2(\nu_{\min}-\varepsilon)(t-s)}\|A_1\mathbf v(s)\|_{L^2}^2\,ds\\
+Ch\int_{t_1}^t e^{-2(\nu_{\min}-\varepsilon)(t-s)}\|A'_0\mathbf v(s)\|_{L^2}^2\,ds
+\mathcal O(h^\infty).
\end{gathered}
$$
Breaking the second integral on the right-hand side in two pieces and estimating
each of them separately by~\eqref{e:jessica2} and~\eqref{e:jessica}, we get
for an appropriate choice of $A_0$,
$$
\langle\mathcal X_-\mathbf v(t),\mathbf v(t)\rangle
\leq Ce^{-2(\nu_{\min}-\varepsilon)t}\|A_0\mathbf v(t_1)\|_{L^2}^2+\mathcal O(h^\infty).
$$
We can moreover assume that $\mathcal X_-$ has the form
$A_1^*A_1+\mathcal X_1^*\mathcal X_1+\mathcal O(h^\infty)$ for some
pseudodifferential operator $\mathcal X_1$; this can be arranged
since $\sigma(\mathcal X_-)>0$ on $\WFh(A_1)$ and
the argument of~\switch{\cite[\S8.2]{nhp}}{\S\ref{s:estimate-kernel}} only depends on the
principal symbol of $\mathcal X_-$, which can be taken to be
the square of a smooth function. Then
$\|A_1\mathbf v(t)\|_{L^2}^2\leq \langle\mathcal X_-\mathbf v(t),\mathbf v(t)\rangle+\mathcal O(h^\infty)$
and we get
\begin{equation}
  \label{e:inj-4.1}
\|A_1\mathbf v(t)\|_{L^2}\leq Ce^{-(\nu_{\min}-\varepsilon)t}\|A_0\mathbf v(t_1)\|_{L^2}
+\mathcal O(h^\infty).
\end{equation}
To prove~\eqref{e:inj-4}, it remains to note that $(1-\Pi)\mathbf u(t)=\mathbf v(t)+\mathcal O(h^\infty)$
microlocally near $\widehat W$ and $\|\mathbf v(t_1)\|_{L^2}\leq Ch^{-1}\|\mathbf u(t_1)\|_{L^2}$
by part~1 of\switch{~\cite[Proposition~6.13]{nhp}}{ Proposition~\ref{l:Xi-0}}.

To prove~\eqref{e:inj-5}, we argue similarly to~\eqref{e:xeq}, but use the equation~\eqref{e:kentucky-1}
instead of~\eqref{e:veq}. We get
$$
{1\over 2}\partial_t\langle\mathcal X_-\mathbf u(t),\mathbf u(t)\rangle
+\langle\mathcal Y'_-\mathbf u(t),\mathbf u(t)\rangle=\mathcal O(h^\infty),
$$
where
$$
\mathcal Y'_-={1\over 2ih}[P,\mathcal X_-]
$$
satisfies $\WFh(\mathcal Y'_-)\Subset \widehat W$ and
$$
\sigma(\mathcal Y'_-)\geq -\varepsilon\sigma(\mathcal X_-)\quad\text{near }\WFh(\mathbf u(t)),
$$
and the inequality is strict on $W'$.
The remainder of the proof of~\eqref{e:inj-5} proceeds exactly as the proof
of~\eqref{e:inj-4.1}.

Finally, we prove~\eqref{e:inj-2} and~\eqref{e:inj-3}, following the proof of\switch{~\cite[Proposition~8.2]{nhp}}{ Proposition~\ref{l:estimate-image}}.
Let $\mathcal X_+\in\Psi^{\comp}(X_0)$ be the operator defined in~\switch{\cite[\S8.3]{nhp}}{\S\ref{s:estimate-image}},
satisfying in particular $\WFh(\mathcal X_+)\Subset \widehat W$,
$\sigma(\mathcal X_+)\geq 0$ everywhere, and $\sigma(\mathcal X_+)>0$ on $W'$.
Similarly to~\switch{\cite[(8.33)]{nhp}}{\eqref{e:image-ultimate}}, we get from~\eqref{e:pier} that
for an appropriate choice of $A_0$,
\begin{equation}
  \label{e:bear}
{1\over 2}\partial_t\langle\mathcal X_+\Pi\mathbf u(t),\Pi\mathbf u(t)\rangle
+\langle\mathcal Z_+\Pi\mathbf u(t),\Pi\mathbf u(t)\rangle=\mathcal O(h)\|A_0\Pi\mathbf u(t)\|_{L^2}^2
+\mathcal O(h^\infty),
\end{equation}
where $\mathcal Z_+\in\Psi^{\comp}(X_0)$, $\WFh(\mathcal Z_+)\Subset\widehat W$,
$$
{\nu_{\min}-\varepsilon\over 2}\sigma(\mathcal X_+)\leq \sigma(\mathcal Z_+)
\leq{\nu_{\max}+\varepsilon\over 2}\sigma(\mathcal X_+)\quad\text{near }\WFh(\Pi\mathbf u(t)),
$$
and both inequalities are strict on $W'\cap\WFh(\Pi\mathbf u(t))$.
By\switch{~\cite[Lemma~8.7]{nhp}}{ Lemma~\ref{l:image-garding}}, we deduce that
$$
\begin{gathered}
\langle \mathcal Z_+\Pi\mathbf u(t),\Pi\mathbf u(t)\rangle\geq {\nu_{\min}-\varepsilon\over 2}\langle\mathcal X_+\Pi\mathbf u(t),\Pi\mathbf u(t)\rangle
+\|A_0\Pi\mathbf u(t)\|_{L^2}^2-\mathcal O(h^\infty),\\
\langle\mathcal Z_+\Pi\mathbf u(t),\Pi\mathbf u(t)\rangle\leq {\nu_{\max}+\varepsilon\over 2}\langle\mathcal X_+\Pi\mathbf u(t),\Pi\mathbf u(t)\rangle
-\|A_0\Pi\mathbf u(t)\|_{L^2}^2+\mathcal O(h^\infty) 
\end{gathered}
$$
By~\eqref{e:bear}, we find
$$
\begin{gathered}
(\partial_t +(\nu_{\min}-\varepsilon))\langle\mathcal X_+\Pi\mathbf u(t),\Pi\mathbf u(t)\rangle
\leq\mathcal O(h^\infty),\\
(\partial_t +(\nu_{\max}+\varepsilon))\langle\mathcal X_+\Pi\mathbf u(t),\Pi\mathbf u(t)\rangle
\geq-\mathcal O(h^\infty).
\end{gathered}
$$
Therefore,
$$
\begin{gathered}
\langle\mathcal X_+\Pi\mathbf u(t),\Pi\mathbf u(t)\rangle\leq Ce^{-(\nu_{\min}-\varepsilon)t}
\langle\mathcal X_+\Pi\mathbf u(t_1),\Pi\mathbf u(t_1)\rangle+\mathcal O(h^\infty),\\
\langle\mathcal X_+\Pi\mathbf u(t),\Pi\mathbf u(t)\rangle\geq C^{-1}e^{-(\nu_{\max}+\varepsilon)t}
\langle\mathcal X_+\Pi\mathbf u(t_1),\Pi\mathbf u(t_1)\rangle-\mathcal O(h^\infty).
\end{gathered}
$$
To prove~\eqref{e:inj-1} and~\eqref{e:inj-2},
it remains to note that
$$
\begin{gathered}
\langle\mathcal X_+\Pi\mathbf u(t),\Pi\mathbf u(t)\rangle\geq C^{-1}\|A_0\Pi\mathbf u(t)\|_{L^2}^2-\mathcal O(h^\infty),\\
\langle\mathcal X_+\Pi\mathbf u(t),\Pi\mathbf u(t)\rangle\leq C\|A_0\Pi\mathbf u(t)\|_{L^2}^2+\mathcal O(h^\infty);
\end{gathered}
$$
the first of these statements follows by\switch{~\cite[Lemma~8.7]{nhp}}{ Lemma~\ref{l:image-garding}}
and the second one is arranged by choosing $A_0$ to be elliptic on $\WFh(\mathcal X_+)$.
\end{proof}

\begin{proof}[Proof of Theorem~\ref{t:internal-waves}]
To construct the component $u_\Pi(t)$, we use $\Pi\mathbf u(t)$ together
with the Schr\"odinger propagator $e^{-itP/h}$. Since $P^*=P$ and $P$ is compactly
supported and compactly microlocalized, the operator $e^{-itP/h}$ quantizes
the flow $e^{tH_p}$ in the sense of\switch{~\cite[Proposition~3.1]{nhp}}{ Proposition~\ref{l:schrodinger}}.
Since $\WFh(\Pi\mathbf u(t))\subset\Gamma_+$, we have by~\eqref{e:pier},
\begin{equation}
  \label{e:u-pi-sol0}
(hD_t+P)e^{-it_1P/h}\Pi\mathbf u(t)=\mathcal O(h^\infty)\quad\text{on }X_{\delta_1},\quad
t\geq t_1,
\end{equation}
if $t_1$ is large enough so that
\begin{equation}
  \label{e:panisse}
e^{-t_1H_p}(\Gamma_+\cap X_{\delta_1}\cap \{|p-1|<\delta_1/4\})\subset W';
\end{equation}
such $t_1$ exists by\switch{~\cite[Lemma~4.1]{nhp}}{ Lemma~\ref{l:the-flow}}.
We then take
an elliptic parametrix $\widetilde{\mathcal S}'$ of $\widetilde{\mathcal S}$
near $\widetilde{\mathcal U}$ (see\switch{~\cite[Proposition~3.3]{nhp}}{ Proposition~\ref{l:eparametrix}}) and define 
\begin{equation}
  \label{e:u-Pi}
u_\Pi(t):=\widetilde{\mathcal S}'(e^{-it_1P/h}\Pi\mathbf u(t-t_1)),\quad
t\in [t_0-1,T\log(1/h)],\quad
t_0:=2t_1+1.
\end{equation}
Then by~\eqref{e:fac-2} and~\eqref{e:u-pi-sol0} we get
$$
h^2\Box_{\tilde g}u_\Pi=\mathcal O(h^\infty)\quad\text{on }X_{\delta_1},
$$
uniformly in $t\in [t_0,T\log(1/h)]$.
Put
$$
u_R(t):=u(t)-u_\Pi(t),\quad
t\in [t_0,T\log(1/h)],
$$
then $h^2\Box_{\tilde g}u_R=\mathcal O(h^\infty)$ on $X_{\delta_1}$ as well.

It remains to prove~\eqref{e:ini-1}--\eqref{e:ini-5}. Since $\WFh(\Pi\mathbf u(t))\subset\Gamma_+$
and by~\eqref{e:panisse}, we find
$$
\|\widetilde{\mathcal S}u_\Pi(t)\|_{L^2(X_{\delta_1})}\leq C\|A_0\Pi\mathbf u(t-t_1)\|_{L^2}+\mathcal O(h^\infty);
$$
here $A_0$ is the operator from Proposition~\ref{l:internal-waves-1}.
Since $[P,\Pi]=\mathcal O(h^\infty)$ microlocally near $\widehat W\times\widehat W$, and by~\eqref{e:kentucky-1}
and~\eqref{e:kentucky-2.5}
(replacing $t_1$ by $s\in [0,t_1]$ in the definition of $u_\Pi$ and differentiating in $s$)
we get
\begin{equation}
  \label{e:empire}
\widetilde{\mathcal S}u_\Pi(t)=\Pi\mathbf u(t)+\mathcal O(h^\infty)\quad\text{microlocally near }\widehat W.
\end{equation}
Therefore,
$$
\|A_0\Pi\mathbf u(t)\|_{L^2}\leq C\|\widetilde{\mathcal S}u_\Pi(t)\|_{L^2(X_{\delta_1})}+\mathcal O(h^\infty).
$$
Next, by~\eqref{e:we-outgoing-2} each backwards flow line of $e^{tH_p}$ starting in $X_{2\delta_1}$ either stays forever
in $X_{2\delta_1}$ or reaches the complement of $\WFh(\mathbf u(t))$~-- see the proof of Proposition~\ref{l:kentucky}.
By propagation of singularities for the equation~\eqref{e:kentucky-1}, we find
$$
\|A\mathbf u(t_1)\|_{L^2}\leq C\|\mathbf u(0)\|_{L^2(X_{\delta_1})}+\mathcal O(h^\infty),\quad
A\in\Psi^{\comp}(X_{2\delta_1}).
$$
Also, for $t_1$ large enough, each flow line $\gamma(t)$, $t\in [-t_1,0]$, of $H_p$
such that $\gamma(0)\in X_{\delta_1}$ either satisfies $\gamma(-t_1)\in W'$
and $\gamma([-t_1,0])\subset X_{\delta_1}$, or there exists $s\in [-t_1,0]$ such that
$\gamma(s)\not\in\WFh(\mathbf u(t))$ for $t\in [t_1,T\log(1/h)]$ and $\gamma([s,0])\subset X_{\delta_1}$.
This is true since if $\gamma(s)\in\WFh(\mathbf u(t))$, then $\gamma([s-t_1,s])\subset X_{\delta_1}$,
see the proof of Proposition~\ref{l:kentucky}.
By propagation of singularities for~\eqref{e:kentucky-1}, we get
\begin{equation}
  \label{e:alabama}
\|\mathbf u(t)\|_{L^2(X_{\delta_1})}\leq C\|A_1\mathbf u(t-t_1)\|_{L^2}+\mathcal O(h^\infty),\quad
t\in [2t_1,T\log(1/h)].
\end{equation}
By~\eqref{e:kentucky-1} and~\eqref{e:u-pi-sol0}, we have
$(hD_t+P)(\widetilde {\mathcal S} u_R(t))=\mathcal O(h^\infty)$ on $X_{\delta_1}$.
Using propagation of singularities for this equation in
a manner similar to~\eqref{e:alabama}, we obtain by~\eqref{e:empire}
$$
\|\widetilde{\mathcal S} u_R(t)\|_{L^2(X_{\delta_1})}\leq C\|A_1(1-\Pi)\mathbf u(t-t_1)\|_{L^2}+\mathcal O(h^\infty),\quad
t\in [2t_1,T\log(1/h)].
$$
Combining these estimates with~\eqref{e:inj-1}--\eqref{e:inj-5}, we arrive to
$$
\begin{aligned}
\|\widetilde{\mathcal S}u_\Pi(t_0)\|_{L^2(X_{\delta_1})}&\leq Ch^{-1/2}\|\mathbf u(0)\|_{L^2(X_{\delta_1})}+\mathcal O(h^\infty),\\
\|\widetilde{\mathcal S}u_\Pi(t)\|_{L^2(X_{\delta_1})}&\leq Ce^{-(\nu_{\min}-\varepsilon)t/2}\|\widetilde{\mathcal S}u_\Pi(t_0)\|_{L^2(X_{\delta_1})}
+\mathcal O(h^\infty),\\
\|\widetilde{\mathcal S}u_\Pi(t)\|_{L^2(X_{\delta_1})}&\geq C^{-1}e^{-(\nu_{\max}+\varepsilon)t/2}\|\widetilde{\mathcal S}u_\Pi(t_0)\|_{L^2(X_{\delta_1})}
-\mathcal O(h^\infty),\\
\|\widetilde{\mathcal S}u_R(t)\|_{L^2(X_{\delta_1})}&\leq Ch^{-1}e^{-(\nu_{\min}-\varepsilon)t}\|\mathbf u(0)\|_{L^2(X_{\delta_1})}
+\mathcal O(h^\infty),\\
\|\mathbf u(t)\|_{L^2(X_{\delta_1})}&\leq Ce^{\varepsilon t}\|\mathbf u(0)\|_{L^2(X_{\delta_1})}+\mathcal O(h^\infty),
\end{aligned}
$$
holding uniformly in $t\in [t_0-1,T\log(1/h)]$. To obtain~\eqref{e:ini-1}--\eqref{e:ini-5} from here,
we need to remove the operator $\widetilde{\mathcal S}$ from the estimates; for that,
we can use the fact that $\widetilde S$ is bounded uniformly in $h$ on $L^2_{t,x}$
together with
the equivalency of the norms $h\|\cdot\|_{L^\infty_t\mathcal E_x}$ and
$\|\cdot\|_{L^2_{t,x}}$  for solutions
of the wave equation~\eqref{e:wewe} given by~\eqref{e:equivl2}
and the functions of interest being microlocalized at frequencies $\sim h^{-1}$.
\end{proof}

\section{Applications to Kerr(--de Sitter) metrics}
  \label{s:kds-metrix}

\subsection{General properties}
  \label{s:kdsu-the-metric}

The Kerr(--de Sitter) metric in the Boyer--Lindquist coordinates is given by the formulas~\cite{carter}
$$
\begin{gathered}
g=-\rho^2\Big({dr^2\over\Delta_r}+{d\theta^2\over\Delta_\theta}\Big)
-{\Delta_\theta\sin^2\theta\over (1+\alpha)^2\rho^2}
(a\,dt-(r^2+a^2)d\varphi)^2\\
+{\Delta_r\over (1+\alpha)^2\rho^2}(dt-a\sin^2\theta\,d\varphi)^2.
\end{gathered}
$$
Here $M>0$ denotes the mass of the black hole, $a$ its angular speed of rotation,
and $\Lambda\geq 0$ is the cosmological constant (with $\Lambda=0$ in
the Kerr case and $\Lambda>0$ in the Kerr--de Sitter case);
$$
\begin{gathered}
\Delta_r=(r^2+a^2)\Big(1-{\Lambda r^2\over 3}\Big)-2Mr,\quad
\Delta_\theta=1+\alpha\cos^2\theta,\\
\rho^2=r^2+a^2\cos^2\theta,\quad
\alpha={\Lambda a^2\over 3}.
\end{gathered}
$$
The metric is originally defined on
$$
\widetilde X_0:=\mathbb R_t\times X_0,\quad
X_0:=(r_-,r_+)_r\times \mathbb S^2,
$$
here $\theta\in [0,\pi]$ and $\varphi\in \mathbb S^1=\mathbb R/(2\pi\mathbb Z)$
are the spherical coordinates on $\mathbb S^2$. The numbers $r_-<r_+$ are the roots
of $\Delta_r$ defined below; in particular, $\Delta_r>0$ on $(r_-,r_+)$
and $\pm \partial_r\Delta_r(r_\pm)<0$. The metric becomes singular on the surfaces
$\{r=r_\pm\}$, known as the \emph{event horizons}; however, this can be fixed
by a change of coordinates, see~\S\ref{s:kdsu-waves}.

The Kerr(--de Sitter) family
admits the scaling $M\mapsto sM,\Lambda\mapsto s^{-2}\Lambda,a\mapsto sa,r\mapsto sr,t\mapsto st$ for $s>0$;
therefore, we often consider the parameters $a/M$ and $\Lambda M^2$ invariant under this scaling.
We assume that $a/M,\Lambda M^2$ lie in a neighborhood
of the Schwarzschild(--de Sitter) case~\eqref{e:sds-bound} or the Kerr case~\eqref{e:kerr-bound}.
Then for $\Lambda>0$, $\Delta_r$ is a degree 4 polynomial
with real roots $r_1<r_2<r_-<r_+$, with $r_->M$. For $\Lambda=0$, $\Delta_r$ is a degree 2 polynomial with real roots
$r_1<M<r_-$; we put $r_+=\infty$. The general set of $\Lambda$ and $a$ for which $\Delta_r$ has all real roots as above
was studied numerically in~\cite[\S 3]{ak-ma}, and is pictured on Figure~\ref{f:kdsu-gaps}(a) in the introduction.
Note that in~\cite{ak-ma}, the roots are labeled
$r_{--}<r_-<r_+<r_C$; we do not adopt this (perhaps more standard) convention in favor of the notation
of\switch{~\cite{skds,zeeman,vasy1},}{ the previous chapters and~\cite{vasy1},}
and since the roots $r_1,r_2$ are irrelevant in our analysis.

The symbol $\tilde p$ defined in~\eqref{e:tilde-p} using the dual metric is
(denoting by $\tau$ the momentum corresponding to $t$)
$$
\begin{gathered}
\tilde p=\rho^{-2}G,\quad
G=G_r+G_\theta,\\
G_r=\Delta_r\xi_r^2-{(1+\alpha)^2\over\Delta_r}((r^2+a^2)\tau+a\xi_\varphi)^2,\\
G_\theta=\Delta_\theta\xi_\theta^2+{(1+\alpha)^2\over\Delta_\theta\sin^2\theta}(a\sin^2\theta\,\tau+\xi_\varphi)^2.
\end{gathered}
$$
Note that
\begin{equation}
  \label{e:cyclic}
\partial_{(t,\varphi,\theta,\xi_\theta)}G_r=0,\quad
\partial_{(t,\varphi,r,\xi_r)}G_\theta=0,
\end{equation}
therefore $\{G_r,G_\theta\}=0$ and $G_\theta,\tau,\xi_\varphi$ are conserved quantities
for the geodesic flow~\eqref{e:rescaled-flow}.

To handle the poles $\{\theta=0\}$ and $\{\theta=\pi\}$, where the spherical coordinates $(\theta,\varphi)$
break down, introduce new coordinates (in a neighborhood of either of the poles)
\begin{equation}
  \label{e:poles-coordinates}
x_1=\sin\theta\cos\varphi,\quad
x_2=\sin\theta\sin\varphi;
\end{equation}
note that $\sin^2\theta=x_1^2+x_2^2$ is a smooth function in this coordinate system.
For the corresponding momenta $\xi_1,\xi_2$, we have
$$
\xi_\theta=(x_1\xi_1+x_2\xi_2)\cot\theta,\quad
\xi_\varphi=x_1\xi_2-x_2\xi_1,
$$
note that $\xi_\varphi$ is a smooth function vanishing at the poles.
Then $G_r, G_\theta$ are smooth functions near the poles, with
\begin{equation}
  \label{e:g-theta-poles}
G_\theta=(1+\alpha)(\xi_1^2+\xi_2^2)\quad\text{when }x_1=x_2=0.
\end{equation}
The vector field $\partial_t$ is not timelike inside the ergoregion, described by the inequality
\begin{equation}
  \label{e:ergogre}
\Delta_r\leq a^2\Delta_\theta\sin^2\theta.
\end{equation}
For $a\neq 0$, this region is always nonempty. However, the
covector $dt$ is always timelike:
\begin{equation}
  \label{e:kds-spacelike}
G|_{\xi=dt}=(1+\alpha)^2\Big({a^2\sin^2\theta\over\Delta_\theta}-{(r^2+a^2)^2\over\Delta_r}\Big)<0,
\end{equation}
since $\Delta_r<r^2+a^2$.

We now verify the geometric assumptions~\eqref{w:basic}--\eqref{w:convex} of~\S\ref{s:kdsu-assumptions}.
Assumptions~\eqref{w:basic}--\eqref{w:spacelike} have been established already;
assumption~\eqref{w:convex} is proved by
\begin{prop}
  \label{l:mu-conv}
Consider the function $\mu(r)\in C^\infty(r_-,r_+)$ defined by
\begin{equation}
  \label{e:mu-defined}
\mu(r):={\Delta_r(r)\over r^4}.
\end{equation}
Then there exists $\delta_0>0$ such that
for each $(\tilde x,\tilde\xi)\in T^*\widetilde X_0$,
\begin{equation}
  \label{e:mu-conv}
\mu(\tilde x)<\delta_0,\
\tilde\xi\neq 0,\
\tilde p(\tilde x,\tilde\xi)=0,\
H_{\tilde p}\mu(\tilde x,\tilde\xi)=0\ \Longrightarrow\
H_{\tilde p}^2\mu(\tilde x,\tilde\xi)<0.
\end{equation}
Moreover, $\delta_0$ can be chosen to depend continuously on $M,\Lambda,a$.
\end{prop}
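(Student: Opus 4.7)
The plan is to exploit the fact that $\mu$ depends only on $r$, show that the condition $H_{\tilde p}\mu=0$ essentially forces $\xi_r=0$ near the horizons, and then compute $H_{\tilde p}^2\mu$ at such points directly, concluding by a continuity/compactness argument.

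First I would compute
$$
H_{\tilde p}\mu=\rho^{-2}(\partial_{\xi_r}G)\,\mu'(r)=2\rho^{-2}\Delta_r\,\xi_r\,\mu'(r),
$$
using that $G=G_r+G_\theta$ and only $G_r$ depends on $\xi_r$. Since $\mu'(r_\pm)=\Delta_r'(r_\pm)/r_\pm^4\neq 0$ by the simple-root assumption $\pm\partial_r\Delta_r(r_\pm)<0$, and since $\Delta_r>0$ on $(r_-,r_+)$, there is $\delta_0>0$ such that $\mu'(r)\neq 0$ and $\Delta_r(r)>0$ whenever $\mu(r)<\delta_0$. Thus $H_{\tilde p}\mu=0$ on this region forces $\xi_r=0$.

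Next, at a point with $\xi_r=0$, I would differentiate again. Because $\partial_{\xi_r}\tilde p$ and $\partial_\theta F$ (where $F:=H_{\tilde p}\mu$) both carry a factor of $\xi_r$, and $F$ does not depend on $\xi_\theta,t,\varphi,\tau,\xi_\varphi$, only one term survives:
$$
H_{\tilde p}^2\mu\big|_{\xi_r=0}=-2\rho^{-2}\Delta_r\,\mu'(r)\,\partial_r\tilde p.
$$
Using $\tilde p=0$ and $\xi_r=0$ to eliminate $Q^2$ (where $Q=(r^2+a^2)\tau+a\xi_\varphi$) via the relation $(1+\alpha)^2 Q^2=\Delta_r G_\theta$, a direct computation of $\partial_r\tilde p=\rho^{-2}\partial_r G_r$ yields
$$
H_{\tilde p}^2\mu\big|_{\xi_r=0,\,\tilde p=0}=2\rho^{-4}\mu'(r)\bigl[4(1+\alpha)^2 r\tau\,Q-G_\theta\,\Delta_r'(r)\bigr].
$$
At the horizon $r=r_\pm$ one has $\Delta_r=0$ and hence $Q=0$, and $\mu'(r_\pm)\Delta_r'(r_\pm)=\Delta_r'(r_\pm)^2/r_\pm^4>0$, so
$$
H_{\tilde p}^2\mu\big|_{r=r_\pm,\,\xi_r=0,\,\tilde p=0}=-\frac{2\rho^{-4}}{r_\pm^4}\,\Delta_r'(r_\pm)^2\,G_\theta.
$$

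The remaining point is to verify $G_\theta>0$ under the hypotheses $\tilde\xi\neq 0$, $\tilde p=0$, $\xi_r=0$ at $r=r_\pm$. Since $G_\theta$ is a sum of nonnegative terms, $G_\theta=0$ forces $\xi_\theta=0$ and $a\sin^2\theta\,\tau+\xi_\varphi=0$; combined with $Q=0$ this gives $\tau\rho^2=0$, hence $\tau=0$ and then $\xi_\varphi=0$, contradicting $\tilde\xi\neq 0$. So $G_\theta>0$ strictly, and the displayed expression is strictly negative at the horizon for all nonzero lightlike covectors. To promote this to a neighborhood $\{\mu<\delta_0\}$, I would use homogeneity: $H_{\tilde p}^2\mu$ is homogeneous of degree $2$ in $\tilde\xi$, so it suffices to verify the inequality on the compact set obtained by intersecting $\{\tilde p=0,\,H_{\tilde p}\mu=0\}$ with a cosphere bundle over the closure $[r_-,r_+]\times\mathbb S^2$ (where the horizon values are included as boundary). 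By continuity, there exists $\delta_0>0$ with $H_{\tilde p}^2\mu<0$ on the full set described in~\eqref{e:mu-conv}. Since every step of the computation is continuous in $M,\Lambda,a$ and the horizons depend continuously on these parameters with simple roots, the same continuity-compactness argument yields a $\delta_0$ depending continuously on $(M,\Lambda,a)$. The main thing to watch is precisely this uniformity near the horizon together with the nonvanishing of $G_\theta$; no step is technically hard, just bookkeeping.
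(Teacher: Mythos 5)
Your near-horizon analysis is essentially correct and matches half of the paper's proof: the identity $H_{\tilde p}\mu=2\rho^{-2}\Delta_r\xi_r\mu'(r)$, the reduction to $\xi_r=0$, the formula
$H_{\tilde p}^2\mu|_{\xi_r=0,\,\tilde p=0}=2\rho^{-4}\mu'(r)\bigl(4(1+\alpha)^2r\tau Q-G_\theta\,\partial_r\Delta_r\bigr)$,
and the observation that $G_\theta>0$ there all check out, and evaluating at $r=r_\pm$ and extending by continuity is a legitimate (slightly repackaged) version of the paper's estimate in the regime $\Delta_r=\mathcal O(\delta_0)$, $|\partial_r\Delta_r|\geq C^{-1}$.

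However, there is a genuine gap: the set $\{\mu<\delta_0\}$ is \emph{not} a small neighborhood of the event horizons in all the cases covered by the proposition. In the Kerr case~\eqref{e:kerr-bound} one has $\Lambda=0$, $r_+=\infty$, and $\mu=\Delta_r/r^4\to 0$ as $r\to\infty$, so $\{\mu<\delta_0\}$ contains all sufficiently large $r$; similarly, for small $\Lambda>0$ the region $\{\mu<\delta_0\}$ contains a large intermediate range of $r$ far from both horizons (e.g.\ $r\sim\Lambda^{-1/4}$ gives $\mu\sim\sqrt\Lambda$), and since $\delta_0$ must depend continuously on the parameters it cannot be shrunk to avoid this. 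Your compactness argument over "the closure $[r_-,r_+]\times\mathbb S^2$" and your horizon computation (which crucially uses $\Delta_r(r_\pm)=0$, hence $Q=0$) say nothing about this region: there $\Delta_r$ is large, $Q$ need not be small, and even the claim that $\mu'\neq 0$ on $\{\mu<\delta_0\}$ does not follow from $\mu'(r_\pm)\neq 0$ alone. The paper handles this with a separate large-$r$ case: it computes $4\Delta_r-r\partial_r\Delta_r=2((1-\alpha)r^2-3Mr+2a^2)$ to get $\partial_r\mu<0$ for $r>6M$, and then bounds
$(4r\Delta_r-\rho^2\partial_r\Delta_r)A^2-4ar\Delta_r AB\geq A^2r(r^2-10Mr-4M^2)>0$ for $r>14M$,
using $A^2\geq\Delta_rB^2$ and $\Delta_r\leq r^2+a^2$. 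Without an argument of this kind your proof does not establish the proposition for Kerr, nor the continuity of $\delta_0$ as $\Lambda\to 0$.
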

\begin{proof}
First of all, we calculate
\begin{equation}
  \label{e:dallas}
\partial_r\mu(r)=-{4\Delta_r-r\partial_r\Delta_r\over r^5},\quad
4\Delta_r-r\partial_r\Delta_r=2((1-\alpha)r^2-3Mr+2a^2),
\end{equation}
therefore $\partial_r\mu(r)<0$ for $\alpha\leq 1/2$ and $r>6M$.
Since $\partial_r\Delta_r(r_\pm)\neq 0$, we see that
for $\delta_0$ small enough and $\mu(r)<\delta_0$,
we have $\partial_r\mu(r)\neq 0$. Therefore, we can replace
the condition $H_{\tilde p}\mu=0$ in~\eqref{e:mu-conv} by $H_{\tilde p}r=0$,
which implies that $\xi_r=0$; in this case,
$H_{\tilde p}^2\mu$ has the same sign as $-\partial_r\mu\partial_r G_r$. We
calculate for $\xi_r=0$,
\begin{equation}
  \label{e:Psi-r}
\begin{gathered}
\partial_r G_r=-{(1+\alpha)^2((r^2+a^2)\tau+a\xi_\varphi)\over\Delta_r^2}\Psi(r),\\
\Psi(r):=4r\tau\Delta_r-((r^2+a^2)\tau+a\xi_\varphi)\partial_r\Delta_r.
\end{gathered}
\end{equation}
Next, denote
\begin{equation}
  \label{e:AB}
A := (r^2+a^2)\tau+a\xi_\varphi,\quad
B := a\sin^2\theta \,\tau+\xi_\varphi,
\end{equation}
then
\begin{equation}
  \label{e:Psi-r'}
\rho^2\tau=A-aB,\quad
\Psi={(4r\Delta_r-\rho^2\partial_r\Delta_r)A-4ar\Delta_r B\over\rho^2}.
\end{equation}
Using the equation $\tilde p=0$, we get
\begin{equation}
  \label{e:g=0.1}
{A^2\over\Delta_r}\geq {B^2\over\Delta_\theta\sin^2\theta}
\quad\text{on }\{\tilde p=\xi_r=0\}\cap \{0<\theta<\pi\}.
\end{equation}
Since $\Delta_\theta\sin^2\theta\leq 1$ everywhere for $\alpha\leq 1$, and $B=0$ for $\sin\theta=0$, we find
\begin{equation}
  \label{e:AB-eq}
A^2\geq\Delta_r B^2.
\end{equation}
In particular, we see that $A\neq 0$, since otherwise $B=0$, implying that $\tau=\xi_\varphi=0$
and thus $\tilde \xi=0$ since $\tilde p=\xi_r=0$. Now, $H_{\tilde p}^2\mu$ has the same sign as
\begin{equation}
  \label{e:nebraska}
\partial_r\mu((4r\Delta_r-\rho^2\partial_r\Delta_r)A^2-4ar\Delta_r AB).
\end{equation}
We now calculate by~\eqref{e:dallas} and since $\partial_r\Delta_r\leq 2r$, for $\alpha\leq 1/2$
$$
\begin{gathered}
4r\Delta_r-\rho^2\partial_r\Delta_r=2r((1-\alpha)r^2-3Mr+2a^2)-a^2\cos^2\theta\partial_r\Delta_r\\
\geq r(r^2-6Mr+2a^2),
\end{gathered}
$$
and thus, since $\Delta_r\leq r^2+a^2$ and $|a|<M$, and by~\eqref{e:AB-eq},
$$
\begin{gathered}
(4r\Delta_r-\rho^2\partial_r\Delta_r)A^2-4ar\Delta_r AB\geq A^2 r(r^2-6Mr+2a^2-4|a|\sqrt{\Delta_r})\\
\geq A^2r (r^2-10Mr-4M^2).
\end{gathered}
$$
We see that~\eqref{e:mu-conv} holds for $r$ large enough, namely $r>14M$.

We now assume that $r\leq 14M$ and $\mu<\delta_0$. Then (here the constants
do not depend on $\delta_0$ and are locally uniform in
$M,\Lambda,a$)
$$
\Delta_r=\mathcal O(\delta_0),\quad
|\partial_r\Delta_r|\geq C^{-1},\quad
\partial_r\mu=r^{-4}\partial_r\Delta_r+\mathcal O(\delta_0).
$$
Then for $\delta_0$ small enough, by~\eqref{e:AB-eq} the expression \eqref{e:nebraska} has the same sign as
$$
A^2\partial_r\Delta_r(-\rho^2\partial_r\Delta_r+\mathcal O(\sqrt{\delta_0}))<0,
$$
as required.
\end{proof}

\subsection{Structure of the trapped set}
  \label{s:kdsu-trapping}

We now study the structure of trapping for Kerr(--de Sitter) metrics, summarized in the following
\begin{prop}
  \label{l:kds-trapping}
For $(\Lambda M^2,a/M)$ in a neighborhood of the union of~\eqref{e:sds-bound} and~\eqref{e:kerr-bound},
assumptions~\eqref{w:positive}--\eqref{w:nhp} of~\S\ref{s:kdsu-assumptions} are satisfied,
with $\mu_{\max}=0$ (see~\eqref{e:tilde-mu-max}) and the trapped set (see Definition~\ref{d:time-trapping}) given by
\begin{equation}
  \label{e:kds-trapped-set}
\widetilde K=\{G=\xi_r=\partial_r G_r=0,\ \tilde\xi\neq 0\}\subset T^*\widetilde X_0\setminus 0.
\end{equation}
\end{prop}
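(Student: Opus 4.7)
The plan is to exploit the conservation laws \eqref{e:cyclic} to reduce the geodesic flow to a one-dimensional problem in~$r$. Along any trajectory the quantities $\tau$, $\xi_\varphi$, $G_\theta$ are constant, and together with $G_r=-G_\theta$ on $\{\tilde p=0\}$ and the fact that $G_r$ depends only on $(r,\xi_r,\tau,\xi_\varphi)$, this places the pair $(r,\xi_r)$ on the level curve
$$
\Delta_r^2\xi_r^2=F(r),\qquad F(r):=(1+\alpha)^2A^2-G_\theta\Delta_r,
$$
with $A$ as in \eqref{e:AB}. The projection of the rescaled flow to $(r,\xi_r)$ satisfies $\dot r\propto\xi_r$ and $\dot\xi_r\propto-\partial_rG_r$, so $r$ moves like a particle in the effective potential determined by $F$.

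For the set identity $\widetilde K=\{G=\xi_r=\partial_rG_r=0,\ \tilde\xi\neq 0\}$, the inclusion $\supset$ is immediate: at such a point both $\dot r$ and $\dot\xi_r$ vanish, so $r(s)\equiv r_0\in(r_-,r_+)$ for all~$s$. For the reverse inclusion, since $F$ is a polynomial of degree at most four in $r$, its positivity set in $(r_-,r_+)$ is controlled by its critical points; using \eqref{e:AB-eq} with the observation that $\tilde\xi\neq 0$ forces $A\neq 0$ whenever $\xi_r=0$, together with the parameter bounds \eqref{e:sds-bound}--\eqref{e:kerr-bound}, I would show that $F$ admits no two simple zeros in $(r_-,r_+)$ enclosing a region of positivity. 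This rules out oscillating bound trajectories and forces every trapped orbit to have $r(s)\equiv r_0$ with $F(r_0)=F'(r_0)=0$. Unpacking $F'(r_0)=0$ via \eqref{e:Psi-r} gives $\Psi(r_0)=0$, equivalent to $\partial_rG_r=0$. The main obstacle is this polynomial analysis; in the Schwarzschild(--de Sitter) case $a=0$ it is a short calculation since $F$ then has a single interior maximum, and for general $(M,\Lambda,a)$ in the admissible range I would deform continuously from $a=0$ and track the sign of the discriminant of $F$.

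With $\widetilde K$ so identified, the remaining assumptions follow from direct computation. For \eqref{w:positive}, $A\neq 0$ on $\widetilde K$ combined with \eqref{e:rex} fixes the sign of $\tau$ on each component of $\mathcal C_\pm$. For \eqref{w:cr} I parameterize $\widetilde\Gamma_\pm$ near $\widetilde K$ as the smooth branches $\xi_r=\pm\sqrt{F(r)}/\Delta_r$ emanating from the double root $r_0$, which are $C^\infty$ (in particular $C^r$ for any $r$) codimension-$1$ submanifolds of $\{\tilde p=0\}\setminus 0$, orientable by the sign of $\xi_r$. For \eqref{w:symplectic}, transversality is equivalent to $\partial_r^2G_r(r_0)\neq 0$, which holds because $r_0$ is a simple double root of $F$; the symplectic structure on $\widetilde K\cap\{t=\const\}$ comes from the Poisson-bracket matrix of the defining functions $G,\xi_r,\partial_rG_r,t$, whose determinant equals $(\partial_\tau G)^2(\partial_r^2G_r)^2$ since $\{G,\xi_r\}=-\partial_rG_r=0$ on~$\widetilde K$, and this is nonzero by \eqref{w:spacelike} and the double-root nondegeneracy. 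Finally, linearization of the flow on $\widetilde{\mathcal V}_\pm$ in the $(r,\xi_r)$ variables gives eigenvalues $\pm\sqrt{-2c\,\Delta_r\partial_r^2G_r}\neq 0$, yielding positive $\nu_{\min},\nu_{\max}$; the dynamics along $\widetilde K$ splits as the cyclic flows in $t$ and $\varphi$ together with the integrable quasi-periodic $\theta$-motion, with derivatives uniformly bounded, so $\mu_{\max}=0$ and \eqref{w:nhp} holds for every $r$.
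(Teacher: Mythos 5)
Your reduction to a one--dimensional effective radial potential is the same organizing idea as the paper's (your $F(r)$ is $\Delta_r(r)\,\Phi^0(r)$ in the paper's notation), and your identification of the Poisson-bracket matrix $(\partial_\tau G)^2(\partial_r^2G_r)^2$ for the symplectic assumption matches the paper. But there are two genuine gaps. First, the entire argument hinges on the key inequality $\partial_r^2 G<0$ on $\widetilde K$ (the strict convexity of $\Phi^0$ at its nonnegative critical points, the paper's~\eqref{e:kdstr-1}), and you do not prove it: you say you ``would show'' that $F$ has no positivity interval with simple endpoints by ``deforming continuously from $a=0$ and tracking the sign of the discriminant of $F$''. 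That is a proposal, not a proof, and it is harder than it looks because $F$ depends on the conserved quantities $(\tau,\xi_\varphi,G_\theta)$ as parameters, so one must control a whole family of discriminants over this parameter range. The paper instead proves~\eqref{e:kdstr-1} directly by an explicit computation with $\Psi(r)$ from~\eqref{e:PsiPsi}: one shows $\tau\Psi(M)<0$ and $\tau\,\partial_r^2\Psi(r)>0$ for $r>M$, so $\Psi$ has at most one zero in $(r_-,r_+)$ and $\tau\,\partial_r\Psi>0$ there; this bypasses the discriminant bookkeeping entirely and is what makes the argument work for the full range $|a|<M$. Your plan needs this computation (or an equivalent substitute) filled in.

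Second, your treatment of $\mu_{\max}=0$ is incorrect as stated. You write that the $\theta$-motion along $\widetilde K$ has ``derivatives uniformly bounded''; this is false. In the Arnold--Liouville (action--angle) picture the linearized map has a shear term, so $|d\tilde\varphi^s v|$ grows linearly in $s$ --- enough to give $\mu_{\max}=0$, but not bounded. More importantly, the Liouville-torus argument requires $dG$ and $d\xi_\varphi$ to be linearly independent, and this fails precisely on the equatorial set $\widetilde K_e=\widetilde K\cap\{\theta=\pi/2,\ \xi_\theta=0\}$, where the paper must treat the linearization by hand, using positivity of $\partial_\theta^2 G$ on $\widetilde K_e$ (see~\eqref{e:kdstr-5} and the two-case proof of Proposition~\ref{l:angular-flow}). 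Your sketch does not distinguish these cases, so the equatorial trajectories --- which are exactly the ones carrying the extremes of $\tilde\nu$ in Propositions~\ref{l:special-sds} and~\ref{l:special-kerr} --- are left unaddressed. You would also need the auxiliary nondegeneracy facts~\eqref{e:kdstr-2}--\eqref{e:kdstr-4} to establish smoothness of $\widetilde K$, which your outline does not mention.
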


\noindent\textbf{Remark}. The assumptions on $M,\Lambda,a$ can quite possibly be relaxed.
The only parts of the proof that need us to be in a neighborhood of~\eqref{e:sds-bound}
or~\eqref{e:kerr-bound} are~\eqref{e:houston} and~\eqref{e:kdstr-1}.
Several other statements require that $\alpha$ is small (in particular, \eqref{e:kdstr-4}
requires $\alpha<{\sqrt 2 - 1 \over \sqrt 2 + 1}$), but this is true for
the full admissible range of parameters depicted on Figure~\ref{f:kdsu-gaps}(a) in the introduction.
However, if $\Lambda,a$ are large enough so that
\begin{equation}
  \label{e:weird-region}
r\in (r_-,r_+),\ \partial_r\Delta_r(r)=0\ \Longrightarrow\ 
\Delta_r(r)=a^2,
\end{equation}
then the trapped set contains points
with $\tau=0$ (and also $\xi_r=\xi_\theta=0$, $\theta=\pi/2$, $\xi_\varphi\neq 0$),
which prevent us from having a meromorphic continuation
of the resolvent and violate the required assumption~\eqref{w:positive} in~\S\ref{s:kdsu-assumptions}~-- see the discussion
preceding~\cite[(6.13)]{vasy1}. The set of values of $\Lambda,a$ satisfying~\eqref{e:weird-region}
is pictured as the dashed line on Figure~\ref{f:kdsu-gaps}(a).

\noindent\textbf{Remark}.
Some parts of Proposition~\ref{l:kds-trapping} have previously been verified in~\cite[\S6.4]{vasy1}
in the case $|a|<{\sqrt 3\over 2}M$ and under the additional assumption~\cite[(6.13)]{vasy1}.

We start by analysing the set $\widetilde K$ defined by~\eqref{e:kds-trapped-set};
the fact that $\widetilde K$ is indeed the trapped set is established later,
in Proposition~\ref{l:true-trapping}.
We first note that $\widetilde K$ is a closed conic subset of $\{\tilde p=0\}\setminus 0$,
invariant under the flow~\eqref{e:rescaled-flow}; indeed, $\xi_r=0$ implies $H_{\tilde p}r=0$,
$\partial_r G_r=0$ implies $H_{\tilde p}\xi_r=0$, $H_{\tilde p}\tau=H_{\tilde p}\xi_\varphi=0$
everywhere, and $\partial_r G_r$ depends only on $r,\xi_r,\tau,\xi_\varphi$.

By~\eqref{e:Psi-r}, and since $(r^2+a^2)\tau+a\xi_\varphi=\tilde p=0$ implies $\tilde\xi=0$, we see
that
\begin{equation}
  \label{e:psi=0}
\Psi=0\quad\text{on }\widetilde K.
\end{equation}
Assumption~\eqref{w:positive} in~\S\ref{s:kdsu-assumptions} follows from the inequality
\begin{equation}
  \label{e:houston}
\tau((r^2+a^2)\tau+a\xi_\varphi)>0\quad\text{on }\widetilde K.
\end{equation}
For the Schwarzschild(--de Sitter) case~\eqref{e:sds-bound}, this is trivial
(noting that $\tau=0$ implies $\tilde\xi=0$); for the Kerr case~\eqref{e:kerr-bound},
it follows from~\eqref{e:psi=0} together with the fact that $\partial_r\Delta_r>0$.
The general case now follows by perturbation, using that, by Proposition~\ref{l:mu-conv},
$\widetilde K$ is contained in a fixed compact subset of $X_0$.

We next claim that
\begin{equation}
  \label{e:kdstr-1}
\partial_r^2 G<0\quad\text{on }\widetilde K.
\end{equation}
By~\eqref{e:Psi-r}, this is equivalent to requiring that $\tau\partial_r\Psi>0$ on $\widetilde K$. Now,
in either of the cases~\eqref{e:sds-bound} or~\eqref{e:kerr-bound}, we calculate
\begin{equation}
  \label{e:PsiPsi}
\Psi(r)=2(\tau r^3-3M\tau r^2
+a(a\tau-\xi_\varphi)r
+Ma(a\tau+\xi_\varphi)).
\end{equation}
In particular,
$$
\Psi(M)=4M\tau(a^2-M^2),\quad
\partial_r^2\Psi(r)=12\tau(r-M).
$$
Since $|a|<M$, we see that
$$
\tau\Psi(M)<0;\quad
\tau\partial_r^2\Psi(r)>0\text{ for }r>M.
$$
Therefore, if $r>r_->M$ and $\Psi(r)=0$, then $\tau\partial_r\Psi(r)>0$
and we get~\eqref{e:kdstr-1} in the cases~\eqref{e:sds-bound} and~\eqref{e:kerr-bound};
the general case follows by perturbation, similarly to~\eqref{e:houston}.

To study the behavior of $\widetilde K$ in the angular variables, we introduce the equatorial set
\begin{equation}
  \label{e:tilde-k-e}
\widetilde K_e:=\widetilde K\cap \{\theta=\pi/2,\ \xi_\theta=0\}.
\end{equation}
This is a closed conic subset of $\widetilde K$ invariant under the flow~\eqref{e:rescaled-flow}
(which is proved similarly to the invariance of $\widetilde K$).
We have
\begin{equation}
\label{e:kdstr-2}
\partial_{\xi_\varphi}G\neq 0\quad\text{on }\widetilde K_e.
\end{equation}
Indeed,
\begin{equation}
  \label{e:pxvG}
\partial_{\xi_\varphi} G=2(1+\alpha)^2\Big(
-{a((r^2+a^2)\tau+a\xi_\varphi)\over\Delta_r}+a\tau+\xi_\varphi\Big)\quad\text{on }\{\theta=\pi/2\}.
\end{equation}
Also, the equation $G=0$ implies
\begin{equation}
  \label{e:g=0.2}
{((r^2+a^2)\tau+a\xi_\varphi)^2\over\Delta_r}= {(a\sin^2\theta\,\tau+\xi_\varphi)^2\over\Delta_\theta\sin^2\theta}
\quad\text{on }\widetilde K\cap \{\xi_\theta=0\}.
\end{equation}
Putting $\theta=\pi/2$ into~\eqref{e:g=0.2}, we solve for $\Delta_r$ and substitute it into~\eqref{e:pxvG}, obtaining
\begin{equation}
  \label{e:gould}
\partial_{\xi_\varphi} G=2(1+\alpha)^2{r^2\tau(a\tau+\xi_\varphi)\over (r^2+a^2)\tau+a\xi_\varphi}\neq 0\quad\text{on }\widetilde K_e,
\end{equation}
implying~\eqref{e:kdstr-2}.

At the poles $\{\theta=0,\pi\}$, we have
\begin{equation}
  \label{e:kdstr-3}
|\partial_{\xi_1}G|+|\partial_{\xi_2}G|>0.
\end{equation}
This follows immediately from~\eqref{e:g-theta-poles},
as $\xi_1=\xi_2=0$ would imply $G_\theta=0$, which is impossible given that $\xi_r=0$, $G=0$,
and $\tilde\xi\neq 0$.

Finally, we claim that
\begin{gather}
  \label{e:kdstr-4}
\widetilde K\cap \{\xi_\theta=\partial_{\theta}G=0\}\cap \{0<\theta<\pi\}=\widetilde K_e,\\
  \label{e:kdstr-5}
\partial_\theta^2 G>0\quad\text{on }\widetilde K_e.
\end{gather}
To see this, note that  $0<\Delta_r<r^2+a^2$,
$\Delta_\theta\geq 1$, and
$(r^2+a^2)\tau+a\xi_\varphi\neq 0$ by~\eqref{e:houston}; we get from~\eqref{e:g=0.2}
$$
((r^2+a^2)\tau+a\xi_\varphi)^2<(r^2+a^2){(a\sin^2\theta\,\tau+\xi_\varphi)^2\over\sin^2\theta}
\quad\text{on }\widetilde K\cap \{\xi_\theta=0\},
$$
or, using that $|a|<M<r$,
\begin{equation}
  \label{e:xi-phi-bound}
{\xi_{\varphi}^2\over\sin^2\theta}>(r^2+a^2)\tau^2>2a^2\tau^2
\quad\text{on }\widetilde K\cap \{\xi_\theta=0\}.
\end{equation}
Next, if $\xi_\theta=0$, then
$$
\partial_\theta G={2(1+\alpha)^2(a\sin^2\theta\tau+\xi_\varphi)\cos\theta\over \Delta_\theta^2\sin^3\theta}
((1+\alpha)a\sin^2\theta\,\tau-(1+\alpha\cos(2\theta))\xi_\varphi).
$$
In particular, using~\eqref{e:xi-phi-bound} we obtain~\eqref{e:kdstr-5}
for $\alpha=0$:
$$
\partial_\theta^2 G=2(\xi_\varphi^2-a^2\tau^2)>0\quad\text{on }\widetilde K_e,
$$
and the case of small $\alpha$ follows by perturbation.
It remains to prove~\eqref{e:kdstr-4}. Assume the contrary, that
$\partial_\theta G=0$, $\xi_\theta=0$, but $\theta\neq\pi/2$. By~\eqref{e:g=0.2},
$a\sin^2\theta\tau+\xi_\varphi\neq 0$; therefore,
$(1+\alpha)a\sin^2\theta\,\tau=(1+\alpha\cos(2\theta))\xi_\varphi$.
Combining this with~\eqref{e:xi-phi-bound}, we get
$(1+\alpha)\sin\theta>\sqrt{2}(1+\alpha\cos(2\theta))$, which implies
that $(1+\alpha)>\sqrt{2}(1-\alpha)$, a contradiction with the fact that $\alpha$ is small.

It follows from~\eqref{e:kdstr-1}, \eqref{e:kdstr-2}, \eqref{e:kdstr-3},
and~\eqref{e:kdstr-4} that at each point
of $\widetilde K$ the matrix of partial derivatives
$G,\xi_r,\partial_r G$ 
in the variables $(r,\xi_r,*)$, where $*$ stands
for one of $\theta,\xi_\theta,\xi_\varphi,\xi_1,\xi_2$,
is invertible. This gives
\begin{prop}
  \label{l:kds-trapping-smooth}
The set $\widetilde K$ defined by~\eqref{e:kds-trapped-set} is a smooth codimension 2 submanifold
of $\{\tilde p=0\}\setminus 0$, and its projection $\widehat K$ onto the
$\hat x=(t,\theta,\varphi),\hat\xi=(\tau,\xi_\theta,\xi_\varphi)$
variables is a smooth codimension 1 submanifold of $T^*(\mathbb R\times \mathbb S^2)$.
\end{prop}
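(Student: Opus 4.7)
The plan is to read off both assertions from the invertibility of the Jacobian matrix just established, via two applications of the implicit function theorem.

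First, I would observe that $\partial_r G_r = \partial_r G$ (because $G_\theta$ is independent of $r$), so $\widetilde K$ is cut out of $T^*\widetilde X_0$ by the three equations $G = 0$, $\xi_r = 0$, $\partial_r G = 0$. The paragraph preceding the proposition asserts that their differentials are linearly independent at every point of $\widetilde K$, so the regular value theorem presents $\widetilde K$ as a smooth codimension $3$ submanifold of $T^*\widetilde X_0$. Since $\{\tilde p = 0\} \setminus 0$ is a smooth hypersurface (the differential $\partial_{\tilde\xi}\tilde p = -2\,\tilde g^{-1}\tilde\xi$ does not vanish on $\{\tilde\xi \neq 0\}$), $\widetilde K$ is codimension $2$ in $\{\tilde p = 0\} \setminus 0$, as claimed.

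Next, for the projection $\pi \colon T^*\widetilde X_0 \to T^*(\mathbb R \times \mathbb S^2)$ forgetting $(r, \xi_r)$: since $\xi_r = 0$ on $\widetilde K$ and $\partial_r^2 G \neq 0$ by \eqref{e:kdstr-1}, I would apply the implicit function theorem to $\partial_r G(r, 0, \hat x, \hat\xi) = 0$ to obtain, near each point of $\widetilde K$, a unique smooth solution $r = r(\hat x, \hat\xi)$. A neighborhood of that point in $\widetilde K$ is then parameterized injectively by $(\hat x, \hat\xi) \mapsto (r(\hat x, \hat\xi), 0, \hat x, \hat\xi)$, subject to the single remaining equation
$$
F(\hat x, \hat\xi) := G(r(\hat x, \hat\xi), 0, \hat x, \hat\xi) = 0,
$$
and $\pi|_{\widetilde K}$ is a local diffeomorphism onto $\widehat K = \{F = 0\}$. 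By the chain rule and $\partial_r G = 0$ on $\widetilde K$,
$$
dF|_{\widehat K} = d_{(\hat x, \hat\xi)} G,
$$
and the Jacobian invertibility produces a variable $* \in \{\theta, \xi_\theta, \xi_\varphi, \xi_1, \xi_2\}$ with $\partial_* G \neq 0$. Hence $dF \neq 0$ on $\widehat K$, and $\widehat K$ is a smooth codimension $1$ submanifold of $T^*(\mathbb R \times \mathbb S^2)$.

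I do not anticipate a genuine obstacle here: the proposition is a bookkeeping consequence of the dynamical computations \eqref{e:kdstr-1}--\eqref{e:kdstr-5} gathered in the preceding paragraph. The only subtlety worth flagging is the pole locus $\{\sin\theta = 0\}$, where the spherical coordinates degenerate; there one works in the coordinates $(x_1, x_2, \xi_1, \xi_2)$ of \eqref{e:poles-coordinates}, with respect to which $G_\theta$ is smooth by \eqref{e:g-theta-poles} and the required non-vanishing partial derivative of $G$ is supplied by \eqref{e:kdstr-3}.
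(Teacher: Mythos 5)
Your argument is correct and is essentially the paper's own proof: the paper derives the proposition in one sentence from the invertibility of the matrix of partials of $G,\xi_r,\partial_r G$ in $(r,\xi_r,*)$, which is exactly the two implicit-function-theorem applications you spell out (the determinant of that matrix is $-\partial_*G\cdot\partial_r^2G$, so its invertibility is precisely your pair of nonvanishing conditions \eqref{e:kdstr-1} and \eqref{e:kdstr-2}--\eqref{e:kdstr-4}). Your handling of the pole locus via \eqref{e:poles-coordinates} and \eqref{e:kdstr-3} matches the paper's as well.
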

We now study the global dynamics of the flow, relating it to the set $\widetilde K$.
Take $(\tilde x^0,\tilde \xi^0)\in \{\tilde p=0\}\setminus 0$ and
let $(\tilde x(t),\tilde\xi(t))$ be the corresponding Hamiltonian trajectory
of~\eqref{e:rescaled-flow}. Consider the function
$$
\Phi^0(r)=G_r(\tilde x^0,\tilde \xi^0)+(1+\alpha)^2{((r^2+a^2)\tau^0+a\xi^0_\varphi)^2\over\Delta_r(r)}.
$$
Note that $G_r(\tilde x(t),\tilde\xi(t)),\tau(t),\xi_\varphi(t)$ are constant in $t$
and $(r(t),\xi_r(t))$ is a rescaled Hamiltonian flow trajectory of
$$
H^0(r,\xi_r):=\Delta_r(r)\xi_r^2-\Phi^0(r);
$$
in particular, $(r(t),\xi_r(t))$ solve the equation
\begin{equation}
  \label{e:phi-eqq}
\Delta_r(r)\xi_r^2=\Phi^0(r).
\end{equation}
The key property of $\Phi^0$ is given by
\begin{prop}
For each $r\in (r_-,r_+)$,
\begin{equation}
  \label{e:Phi-helper-1}
\Phi^0(r)\geq 0,\
\partial_r\Phi^0(r)=0\
\Longrightarrow\
\partial_r^2\Phi^0(r)>0.
\end{equation}
\end{prop}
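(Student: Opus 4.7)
The plan is to write
\begin{equation*}
\Phi^0(r) = c + (1+\alpha)^2 \frac{A(r)^2}{\Delta_r(r)},\quad
A(r) := (r^2+a^2)\tau + a\xi_\varphi,
\end{equation*}
with $c := G_r(\tilde x^0,\tilde\xi^0)$; since $\tilde p(\tilde x^0,\tilde\xi^0) = 0$ and $G_\theta \geq 0$, one has $c = -G_\theta(\tilde x^0,\tilde\xi^0) \leq 0$. A direct computation using~\eqref{e:Psi-r} gives $\partial_r\Phi^0 = (1+\alpha)^2 A\Psi/\Delta_r^2$, so the critical points of $\Phi^0$ on $(r_-,r_+)$ are exactly the zeros of $A\Psi$.

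First I would eliminate the sub-case $A(r_*)=0$. Then $\Phi^0(r_*) = c \leq 0$, so $\Phi^0(r_*)\ge 0$ forces $c=0$; this means $G_\theta\equiv 0$ along the trajectory, hence $\xi_\theta\equiv 0$ and $B := a\sin^2\theta\,\tau + \xi_\varphi\equiv 0$. Combined with $A(r_*)=0$ this yields $\sin^2\theta = (r_*^2+a^2)/a^2 > 1$ (the edge cases $a=0$ or $\tau=0$ separately force $\tilde\xi=0$), contradicting $\tilde\xi\neq 0$. Hence only the case $\Psi(r_*)=0$, $A(r_*)\neq 0$ remains.

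The main step, assuming $\tau\neq 0$, is a \emph{virtual trapped point} construction. Consider $\tilde q := (r_*,\xi_r=0,\theta^*,\xi_\theta^*)$ sharing the conserved quantities $\tau,\xi_\varphi$, where $(\theta^*,\xi_\theta^*)$ are chosen so that $G_\theta(\theta^*,\xi_\theta^*) = (1+\alpha)^2 A(r_*)^2/\Delta_r(r_*)$; this is possible since $G_\theta$ ranges over $[0,\infty)$ as $\xi_\theta$ varies. At $\tilde q$ one has $G = G_r+G_\theta = 0$, $\xi_r=0$, and (from $\partial_r G_r = \Delta_r'\xi_r^2 - (1+\alpha)^2 A\Psi/\Delta_r^2$) $\partial_r G_r = 0$, with $\tilde\xi\neq 0$ since $\tau\neq 0$. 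Thus $\tilde q \in\widetilde K$, and~\eqref{e:kdstr-1} yields $\partial_r^2 G(\tilde q) < 0$. But along the slice $\{\xi_r=0\}$ (with $\tau,\xi_\varphi$ fixed), $G_\theta$ is $r$-independent and $\partial_r^2(\Delta_r\xi_r^2)=0$, so
\begin{equation*}
\partial_r^2 G\big|_{\xi_r=0}(r_*) = -(1+\alpha)^2\partial_r^2(A^2/\Delta_r)(r_*) = -\partial_r^2\Phi^0(r_*),
\end{equation*}
yielding $\partial_r^2\Phi^0(r_*) > 0$.

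Finally, the remaining sub-case $\tau=0$ requires $a\xi_\varphi\neq 0$ (else $A\equiv 0$ again forces $\tilde\xi=0$), and since $\Psi = -a\xi_\varphi\Delta_r'$, also $\Delta_r'(r_*)=0$, so $r_*$ is the interior maximum of $\Delta_r$ on $(r_-,r_+)$. A direct computation gives
\begin{equation*}
\partial_r^2\Phi^0(r_*) = -\frac{(1+\alpha)^2 a^2\xi_\varphi^2\,\Delta_r''(r_*)}{\Delta_r(r_*)^2} > 0,
\end{equation*}
using $\Delta_r''(r_*)<0$ at the strict interior maximum in the admissible parameter range. The main obstacle I expect is the virtual-point construction in the previous paragraph---one must verify both that the chosen $(\theta^*,\xi_\theta^*)$ produce a point on $\widetilde K$ satisfying all defining equations, and that $\partial_r^2 G|_{\xi_r=0}$ agrees with $-\partial_r^2\Phi^0$; once this identification is set up, the already-established~\eqref{e:kdstr-1} finishes the argument.
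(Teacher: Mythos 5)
Your proposal takes essentially the paper's route: build a ``virtual'' point $\tilde q$ of $\widetilde K$ at the critical radius $r_*$ by setting $\xi_r=0$ and adjusting $\xi_\theta$ to force $\tilde p(\tilde q)=0$, then invoke the already-proved~\eqref{e:kdstr-1} and the identification $\partial_r^2 G|_{\xi_r=0}=-\partial_r^2\Phi^0$. The extra case-splitting (handling $A(r_*)=0$ separately, then $\tau^0=0$ by direct computation of $\Delta_r''$) is correct but not needed: the paper's single construction already disposes of those cases, because a constructed point with $\tilde\xi^1=0$ contradicts $\tilde\xi^0\neq 0$, and one with $\tau^1=0$ would lie in $\widetilde K$ and violate~\eqref{e:houston}, so the hypotheses are vacuous there.

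The one genuine gap is the achievability step you flag. The assertion ``$G_\theta$ ranges over $[0,\infty)$ as $\xi_\theta$ varies'' is false: at fixed $\theta^*$ and fixed conserved $\tau,\xi_\varphi$, $G_\theta=\Delta_\theta\xi_\theta^2+\frac{(1+\alpha)^2}{\Delta_\theta\sin^2\theta^*}B^2$ only ranges over $[G_\theta|_{\xi_\theta=0}(\theta^*),\infty)$, and the left endpoint is generally strictly positive. The correct argument, which is the paper's, is to keep $\theta^*=\theta^0$ and observe that the target value is
\begin{equation*}
(1+\alpha)^2\frac{A(r_*)^2}{\Delta_r(r_*)} \;=\; G_\theta(\tilde x^0,\tilde\xi^0)+\Phi^0(r_*) \;\geq\; G_\theta|_{\xi_\theta=0}(\tilde x^0,\tilde\xi^0),
\end{equation*}
where the inequality is \emph{exactly} the hypothesis $\Phi^0(r_*)\geq 0$. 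This is the only place in your main ($\Psi(r_*)=0$, $A\neq 0$) branch where that hypothesis enters, and as written you never use it there; without it the construction really can fail (this is precisely the obstruction in the regime~\eqref{e:weird-region}). With that fix, the remaining identification $\partial_r^2 G|_{\xi_r=0}(\tilde q)=-\partial_r^2\Phi^0(r_*)$ goes through just as you describe.
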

\begin{proof}
Assume that $\Phi^0(r)\geq 0$. Then
we can find $(\tilde x^1,\tilde\xi^1)\in T^*\widetilde X^0$ such that
$(t^1,\theta^1,\varphi^1)=(t^0,\theta^0,\varphi^0)$,
$\tilde r^1=r$, $\tau^1=\tau^0$, $\xi_\varphi^1=\xi_\varphi^0$,
$\xi_r^1=0$, and $\tilde p(\tilde x^1,\tilde\xi^1)=0$; indeed,
it suffices to start with $(\tilde x^0,\tilde\xi^0)$, put $r^1=r,\xi^1_r=0$, and
change the $\xi^1_\theta$ component (or one of $\xi^1_1,\xi^1_2$ components
if we are at the poles of the sphere) so that $G_\theta(\tilde x^1,\tilde\xi^1)
=G_\theta(\tilde x^0,\tilde\xi^0)+\Phi^0(r)$.
If additionally $\partial_r\Phi^0(r)=0$, then $(\tilde x^1,\tilde\xi^1)\in\widetilde K$;
it remains to apply~\eqref{e:kdstr-1}.
\end{proof}
We now consider the following two cases:

\noindent\textbf{Case 1}: $|\Phi^0(r)|+|\partial_r\Phi^0(r)|>0$ for all $r\in (r_-,r_+)$.
In this case, the set of solutions to~\eqref{e:phi-eqq} is a closed one-dimensional submanifold
of $T^*(r_-,r_+)$ and the Hamiltonian field of $H^0$ is nonvanishing on this manifold.
This manifold has no compact connected components, as the function $\Phi^0(r)$ cannot
achieve a local maximum on it by~\eqref{e:Phi-helper-1}.
It follows that the geodesic $(\tilde x(t),\tilde\xi(t))$ escapes in both time
directions.

\noindent\textbf{Case 2}: there exists $r'\in (r_-,r_+)$ such that
$\Phi^0(r')=\partial_r\Phi^0(r')=0$. Then
$$
(t^0,r',\theta^0,\varphi^0,\tau^0,0,\xi_\theta^0,\xi_\varphi^0)\in\widetilde K,
$$
therefore the projection $(\hat x^0,\hat \xi^0)$ lies in $\widehat K$
(see Proposition~\ref{l:kds-trapping-smooth}). By~\eqref{e:Phi-helper-1},
we see that $\partial_r^2\Phi^0(r')>0$ and $(r-r')\partial_r\Phi^0(r)>0$ for $r\neq r'$.
Then the set of solutions to the equation~\eqref{e:phi-eqq} is equal to the union 
$\Gamma^0_+\cup\Gamma^0_-$, where
$$
\Gamma^0_\pm = \{ \xi_r=\mp \sgn(\tau^0)\sgn(r-r')\sqrt{\Phi^0(r)/\Delta_r(r)}\},
$$
note that $\Gamma^0_\pm$ are smooth one-dimensional submanifolds of $T^*(r_-,r_+)$
intersecting transversely at $(r',0)$. The trajectory
$(\tilde x(t),\tilde\xi(t))$ is trapped as $t\to \mp\infty$
if and only if $(r^0,\xi_r^0)\in\Gamma^0_\pm$. Note that by~\eqref{e:houston},
$\tau^0$ is negative on $\mathcal C_+$ and positive on $\mathcal C_-$.

The analysis of the two cases above implies
\begin{prop}
  \label{l:true-trapping}
The incoming/outgoing tails $\widetilde\Gamma_\pm$ (see Definition~\ref{d:time-trapping})
are given by (here $\widehat K$ is defined in Proposition~\ref{l:kds-trapping-smooth})
$$
\widetilde\Gamma_\pm:=\{(r,\hat x,\xi_r,\hat\xi)\mid
(\hat x,\hat\xi)\in \widehat K,\
\xi_r=\mp\sgn(\hat\tau)\sgn(r-r'_{\hat x,\hat\xi})\sqrt{\Phi_{\hat x,\hat\xi}(r)/\Delta_r(r)}
\},
$$
where
$$
\Phi_{\hat x,\hat\xi}(r)=-G_\theta(\hat x,\hat\xi)+(1+\alpha)^2{((r^2+a^2)\hat\tau+a\hat\xi_\varphi)^2\over\Delta_r(r)},
$$
and $r'_{\hat x,\hat\xi}$ is the only solution to the equation $\Phi_{\hat x,\hat\xi}(r)=0$;
moreover, $\partial_{r}\Phi_{\hat x,\hat\xi}(r'_{\hat x,\hat\xi})=0$
and $\partial_r^2\Phi_{\hat x,\hat\xi}(r'_{\hat x,\hat\xi})>0$.
Furthermore, $\widetilde\Gamma_\pm$ are conic smooth codimension 1 submanifolds of $\{\tilde p=0\}\setminus 0$
intersecting transversely, and their intersection is equal to the set
$\widetilde K$ defined in~\eqref{e:kds-trapped-set}.
\end{prop}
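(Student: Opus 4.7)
The plan is to read off the description of $\widetilde\Gamma_\pm$ directly from the Case~1/Case~2 dichotomy carried out just before the statement. For any lightlike covector, the $(r,\xi_r)$ dynamics is governed by the one-dimensional conservation law $\Delta_r\xi_r^2=\Phi^0(r)$, and Case~2 shows that a trajectory is trapped as $t\to\mp\infty$ precisely when $\Phi^0$ shares a zero with $\partial_r\Phi^0$ at some $r'$, in which case the incoming/outgoing branches correspond to the two sign choices of $\xi_r=\mp\sgn(\tau^0)\sgn(r-r')\sqrt{\Phi^0/\Delta_r}$. Using $G=0$ on the light cone to substitute $G_r=-G_\theta$ produces the function $\Phi_{\hat x,\hat\xi}$ of the statement, whose parameters depend only on the conserved data $(\hat x,\hat\xi)\in\widehat K$. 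This gives the claimed formula for $\widetilde\Gamma_\pm$ provided one shows that $r'_{\hat x,\hat\xi}$ is the unique zero of $\Phi_{\hat x,\hat\xi}$ in $(r_-,r_+)$, that it depends smoothly on $(\hat x,\hat\xi)$, and that $\sgn(r-r')\sqrt{\Phi/\Delta_r}$ extends smoothly across $r'$.

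For uniqueness I would argue by contradiction. Suppose $\Phi=\Phi_{\hat x,\hat\xi}$ vanishes at the double root $r'$ (with $\partial_r^2\Phi(r')>0$ by~\eqref{e:Phi-helper-1}) and at some other point; since $\partial_r^2\Phi(r')>0$, $\Phi$ is strictly positive in a punctured neighborhood of $r'$, so moving from $r'$ toward the other zero we may isolate a closed interval $[r',r_0']$ on which $\Phi\geq 0$ with $\Phi$ vanishing precisely at the endpoints (take $r_0'$ to be the nearest subsequent zero). As $\Phi$ is a nonzero rational function of $r$, it is not identically zero on $[r',r_0']$, so its maximum is attained at an interior $r^*$ where $\partial_r\Phi(r^*)=0$ and $\Phi(r^*)>0$. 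Applying~\eqref{e:Phi-helper-1} at $r^*$ yields $\partial_r^2\Phi(r^*)>0$, contradicting that $r^*$ is a local maximum. Smooth dependence of $r'_{\hat x,\hat\xi}$ on $(\hat x,\hat\xi)$ then follows by the implicit function theorem applied to $\partial_r\Phi=0$, with nondegeneracy supplied by $\partial_r^2\Phi(r')>0$.

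For the manifold structure of $\widetilde\Gamma_\pm$ I would use a Morse-type factorization. Taylor expansion together with $\partial_r^2\Phi(r')>0$ gives $\Phi_{\hat x,\hat\xi}(r)=(r-r'_{\hat x,\hat\xi})^2\psi_{\hat x,\hat\xi}(r)$ with $\psi$ smooth and, on the component of $\{\Phi\geq 0\}$ containing $r'$, positive (uniqueness prevents $\psi$ from developing another zero there). Consequently
\begin{equation*}
\sgn(r-r'_{\hat x,\hat\xi})\sqrt{\Phi_{\hat x,\hat\xi}(r)/\Delta_r(r)}=(r-r'_{\hat x,\hat\xi})\sqrt{\psi_{\hat x,\hat\xi}(r)/\Delta_r(r)}
\end{equation*}
is jointly smooth, so $\widetilde\Gamma_\pm$ is the image of a smooth embedding of a $6$-dimensional manifold into the $7$-dimensional $\{\tilde p=0\}\setminus 0$, giving codimension $1$; conicity is inherited from degree-two homogeneity of $\Phi$ in $\hat\xi$.

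Finally, intersecting the two branches forces the two sign choices for $\xi_r$ to agree, i.e., $\xi_r=0$ and $r=r'_{\hat x,\hat\xi}$, which unravel to $G=0$, $\xi_r=0$, $\partial_r G_r=0$, recovering $\widetilde K$. Transversality at $\widetilde K$ is then visible in the $(r,\xi_r)$-plane: the two local graphs $\xi_r=\mp\sgn(\tau)(r-r')\sqrt{\psi/\Delta_r}$ have opposite nonzero slopes, so $T\widetilde\Gamma_++T\widetilde\Gamma_-$ contains a direction outside the common $5$-dimensional tangent space $T\widetilde K$ and therefore exhausts $T\{\tilde p=0\}$. The main analytic obstacle is the uniqueness argument for $r'$; once that is settled the rest of the statement is essentially bookkeeping built on the Morse factorization.
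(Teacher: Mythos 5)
Your argument is correct and follows the paper's own route: the paper establishes Proposition~\ref{l:true-trapping} via the Case~1/Case~2 dichotomy for the one-dimensional effective potential $\Phi^0$, with~\eqref{e:Phi-helper-1} supplying both uniqueness of the critical zero $r'$ (the paper phrases this as $(r-r')\partial_r\Phi^0>0$ for $r\neq r'$, which you recover in equivalent form via the local-maximum contradiction) and, implicitly, the Morse factorization underlying smoothness and transversality. You have simply made explicit the smooth dependence of $r'$ via the implicit function theorem and the factorization $\Phi=(r-r')^2\psi$ with $\psi>0$, which the paper leaves as bookkeeping.
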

We also see from~\eqref{e:kdstr-1} and the
fact that $\partial_\tau \tilde p\neq 0$ on $\{\tilde p=0\}\setminus 0$
(as follows from~\eqref{e:kds-spacelike}) that the matrix of Poisson brackets
of functions $G,\partial_rG, \xi_r,t$ on $\widetilde K$ is nondegenerate,
which implies that the intersections $\widetilde K\cap \{t=\const\}$ are symplectic
submanifolds of $T^*\widetilde X_0$. Together with Proposition~\ref{l:true-trapping},
this verifies assumptions~\eqref{w:cr} and~\eqref{w:symplectic} of~\S\ref{s:kdsu-assumptions}.

It remains to verify $r$-normal hyperbolicity of the flow $\tilde\varphi^s$
defined in~\eqref{e:rescaled-flow}. We start by showing that
the maximal expansion rate in the directions of the trapped set $\mu_{\max}$,
defined in~\eqref{e:tilde-mu-max}, is equal to zero:
\begin{prop}
  \label{l:angular-flow}
For each $\varepsilon>0$, there exists a constant $C$ such that
for each $v\in T\widetilde K$, 
$$
|d\tilde\varphi^s v|\leq Ce^{\varepsilon |s|}|v|.
$$
Here $|\cdot|$ denotes any fixed smooth homogeneous norm on the fibers of $T\widetilde K$.
\end{prop}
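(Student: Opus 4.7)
The plan is to exploit complete integrability of the geodesic flow on Kerr(--de Sitter) to show that $d\tilde\varphi^s|_{T\widetilde K}$ grows at most linearly in $s$, which immediately dominates $e^{\varepsilon|s|}$. Besides $\tilde p$ itself, there are three additional Poisson-commuting first integrals: $\tau$ and $\xi_\varphi$, coming from the Killing fields $\partial_t$ and $\partial_\varphi$, together with $G_\theta$, whose Poisson-commutativity with $G_r$ (and hence with $\tilde p$) is immediate from \eqref{e:cyclic}. These four functions are generically independent on $T^*\widetilde X_0$, matching its half-dimension, so the flow is completely integrable.

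The first step is to identify the Liouville tori inside $\widetilde K$. On $\widetilde K$, the conditions $\xi_r=0$ and $\partial_r G_r=0$ combined with \eqref{e:kdstr-1} determine $r=r_*(\tau,\xi_\varphi)$ smoothly, so $G_r$ and consequently $G_\theta=-G_r$ become smooth functions of $(\tau,\xi_\varphi)$ alone. For fixed $(t,\tau,\xi_\varphi)$, the locus $\{G_\theta(\theta,\xi_\theta;\tau,\xi_\varphi)=-G_r(\tau,\xi_\varphi)\}$ in $(\theta,\xi_\theta)$ is a simple closed loop by the convexity \eqref{e:kdstr-5}; crossed with $\mathbb S^1_\varphi$ and with the $t$-line, this yields a smooth family of invariant 3-tori foliating $\widetilde K$. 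Passing to Arnold--Liouville action-angle coordinates $(I,\phi)$ with $I=(\tau,\xi_\varphi)$ and $\phi=(\phi_1,\phi_2)$, and including $t$, the rescaled flow takes the form $\tilde\varphi^s:(I,\phi,t)\mapsto(I,\phi+s\omega(I),t+s)$, where $\omega$ is a smooth frequency map (the rescaling factor $\partial_\tau\tilde p$ from \eqref{e:rescaled-flow} is positive and smooth on $\widetilde K$ and merely modifies $\omega$). In this basis $d\tilde\varphi^s|_{T\widetilde K}$ is block upper triangular with identity diagonal blocks and single nontrivial block $s\,\partial_I\omega$, giving $\|d\tilde\varphi^s|_{T\widetilde K}\|\leq C(1+|s|)\leq C_\varepsilon e^{\varepsilon|s|}$.

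The main obstacle is uniformity in the base point $\tilde\rho\in\widetilde K$, because the action-angle change of variables degenerates in two places: the sphere poles $\theta=0,\pi$, handled using the Cartesian coordinates \eqref{e:poles-coordinates} together with the non-degeneracy \eqref{e:kdstr-3}; and the equatorial set $\widetilde K_e$, where the $(\theta,\xi_\theta)$-loop collapses to the fixed point $(\pi/2,0)$. Near $\widetilde K_e$ I would replace action-angle coordinates by a symplectic Birkhoff-type normal form in which the transverse $(\theta,\xi_\theta)$-motion is elliptic with a nonzero smooth frequency, guaranteed by \eqref{e:kdstr-5}; the linearization remains triangular with at most linear-in-$s$ off-diagonal terms. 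Since the relevant part of $\widetilde K$ is compact by Proposition~\ref{l:mu-conv} and the frequencies and tori depend smoothly on $(\tau,\xi_\varphi)$, a finite covering argument produces the uniform bound claimed.
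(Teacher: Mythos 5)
Your approach captures the correct central idea---complete integrability of the angular flow forces at most polynomial (indeed linear) growth of $d\tilde\varphi^s|_{T\widetilde K}$---and this is precisely the mechanism the paper uses, so the two proofs are close in spirit. The main differences are in the decomposition and in the handling of the equatorial set $\widetilde K_e$. The paper first uses the conic and $t$-translation invariance to reduce to proving, for each \emph{individual} trajectory $\gamma$ on $\widetilde K$, the single-time estimate~\eqref{e:hood} for some $T$; since that inequality is strict and open in the base point, and $\widetilde K\cap\{\tau=1\}\cap\{t=0\}$ is compact, this yields a uniform $T$ without needing a global normal form. It then splits by whether $\gamma$ lies on the flow-invariant set $\widetilde K_e$: off $\widetilde K_e$ it applies Arnold--Liouville locally (using the commuting pair $G,\xi_\varphi$ and their independence from~\eqref{e:kdstr-3}--\eqref{e:kdstr-4}) to conjugate to $T^*\mathbb T^2$, getting exactly the linear shear $s\,\nabla^2 f$ you describe; on $\widetilde K_e$ itself it simply writes the $2\times 2$ variational ODE in $(v_\theta,v_{\xi_\theta})$, which has \emph{constant} coefficients because $\partial_\theta^2 G$ is constant along such a trajectory by~\eqref{e:kdstr-5}, giving an oscillator plus a linear-in-$s$ coupling to $v_{\xi_\varphi}$. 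Your proposal to replace this second step by a Birkhoff-type normal form in a full neighborhood of $\widetilde K_e$ would work in principle, but it is considerably heavier machinery than the direct ODE computation, and it forces you to confront uniformity of the normal-form coordinates across the degenerating tori---a point you flag but do not resolve. The paper's trajectory-by-trajectory reduction avoids this matching problem entirely: it never needs a single coordinate chart that works uniformly near $\widetilde K_e$, only pointwise openness plus compactness. So: same key lemma (Arnold--Liouville $\Rightarrow$ linear growth), but the paper's decomposition (split trajectories into ``on $\widetilde K_e$'' vs.\ ``off $\widetilde K_e$'', then compactify) is cleaner than a global action-angle plus Birkhoff patching, and I would steer you toward it.
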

\begin{proof}
Using the group property of the flow,
it suffices to show that for each $\varepsilon>0$
there exists $T>0$ such that for each $v\in T\widetilde K$,
\begin{equation}
  \label{e:hood}
|d\tilde\varphi^T v|< e^{\varepsilon T}|v|.
\end{equation}
Since $\widetilde K$ is a closed conic set, and
$\widetilde K\cap\{\tau=1\}\cap \{t=0\}$ is compact, 
it suffices to show that for each flow line $\gamma(s)$ of~\eqref{e:rescaled-flow} on $\widetilde K$,
there exists $T$ such that~\eqref{e:hood} holds for each $v=v(0)$ tangent to $\widetilde K$ at $\gamma(0)$.
Denote $v(s)=d\varphi^s v(0)$.

If $\gamma(s)$ is a trajectory of~\eqref{e:rescaled-flow} on $T\widetilde K$, then
$r,\xi_r=0,\tau$ are constant on $\gamma(s)$ and the generator of the flow
does not depend on the variable $t$; therefore, it suffices to show~\eqref{e:hood}
for the restriction of the matrix of $d\tilde\varphi^T$ to the
$\partial_\theta,\partial_\varphi,\partial_{\xi_\theta},\partial_{\xi_\varphi}$ variables.
This is equivalent to considering the Hamiltonian flow of $G$ in the $\theta,\varphi,\xi_\theta,\xi_\varphi$
variables only, on $T^*\mathbb S^2$. Recall that the equatorial set $\widetilde K_e=\widetilde K\cap \{\theta=\pi/2,\ \xi_\theta=0\}$ defined in~\eqref{e:tilde-k-e}
is invariant under $\tilde\varphi^s$. We then consider two cases:

\noindent\textbf{Case 1}: $\gamma(s)\not\in \widetilde K_e$ for all $s$. Then
the differentials of $G$ and $\xi_\varphi$ are linearly independent by~\eqref{e:kdstr-3}
and~\eqref{e:kdstr-4}. Since $\{G,\xi_\varphi\}=0$, by Arnold--Liouville
theorem (see for example\switch{~\cite[Proposition~2.8]{nhp}}{ Proposition~\ref{l:arnold-liouville-1}}),
there is a local symplectomorphism from a neighborhood
of $\gamma(s)$ in $T^*\mathbb S^2$ to $T^*\mathbb T^2$, where $\mathbb T^2$ is the two-dimensional torus,
which conjugates $G$ to some function $f(\eta_1,\eta_2)$; here $(y_1,y_2,\eta_1,\eta_2)$
are the canonical coordinates on $T^*\mathbb T^2$. The corresponding evolution of tangent vectors is
given by $\partial_s v_y(s)=\nabla^2 f(\eta(s))v_\eta(s)$, $\partial_s v_\eta(s)=0$,
and~\eqref{e:hood} follows.

\noindent\textbf{Case 2}: $\gamma(s)\in \widetilde K_e$ for all $s$. 
Since $\partial_s v_{\xi_\varphi}(s)=0$
and $\partial_s v$ does not depend on $v_{\varphi(s)}$, it suffices to estimate $v_{\theta}(s),v_{\xi_\theta}(s)$.
We then find
$$
\partial_s v_{\theta}(s)=2v_{\xi_\theta}(s),\quad
\partial_s v_{\xi_\theta}(s)=-\partial_\theta^2 G(\gamma(s))v_{\theta}(s)-\partial^2_{\theta\xi_\varphi}G(\gamma(s))v_{\xi_\varphi}(s).
$$
Now, by~\eqref{e:kdstr-5}, $\partial_\theta^2 G(\gamma(s))$ is a positive constant;
\eqref{e:hood} follows.
\end{proof}
We finally show that the minimal expansion rate $\nu_{\min}$, defined in~\eqref{e:tilde-nu-min},
is positive. By Proposition~\ref{l:true-trapping}, $(\tilde x,\tilde\xi)\in\widetilde\Gamma_\pm$ if
and only if
$$
(\hat x,\hat\xi)\in\widehat K,\quad
\tilde\varphi_\pm(\tilde x,\tilde\xi)=0,
$$
where
$$
\tilde\varphi_\pm(\tilde x,\tilde\xi)=\xi_r\mp \sgn(\partial_\tau G)\sgn(r-r'_{\hat x,\hat\xi})
\sqrt{\Phi_{\hat x,\hat\xi}(r)/\Delta_r(r)}.
$$
Since $H_G$ is tangent to $\widetilde\Gamma_\pm$, we have
$H_G\tilde\varphi_\pm=0$ on $\widetilde\Gamma_\pm$; it follows
that
\begin{equation}
  \label{e:reno}
{H_G\tilde\varphi_\pm(\tilde x,\tilde\xi)\over\partial_\tau G}=\mp\tilde\nu_\pm(\tilde x,\tilde\xi) \tilde\varphi_\pm(\tilde x,\tilde\xi)\quad\text{when }(\hat x,\hat\xi)\in\widehat K,
\end{equation}
for some functions $\tilde\nu_\pm$.
By calculating $\partial_{\xi_r}H_G\tilde\varphi_\pm|_{\widetilde K}$, we find
$\tilde\nu_+|_{\widetilde K}=\tilde\nu_-|_{\widetilde K}=\tilde\nu$, where
\begin{equation}
  \label{e:tilde-nu}
\tilde\nu={\sqrt{-2\Delta_r\partial_r^2G_r}\over |\partial_\tau G|};
\end{equation}
note that $\partial_r^2G_r<0$ on $\widetilde K$ by~\eqref{e:kdstr-1}
and $\partial_\tau G\neq 0$ on $\{\tilde p=0\}\setminus 0$
by assumption~\eqref{w:spacelike} in~\S\ref{s:kdsu-assumptions}.

Let $\widetilde{\mathcal V}_\pm$ be the one-dimensional subbundles of $T\widetilde\Gamma_\pm$
defined in~\eqref{e:tilde-v-pm}, invariant under the flow $\tilde\varphi^s$.
Since $d\tilde\varphi_\mp$ vanishes on $T\widetilde K$ and is not identically zero
on $T_{\widetilde K}\widetilde\Gamma_\pm$, we can fix a basis $v_\pm$ of
$\widetilde{\mathcal V}_\pm|_{\widetilde K}$ by requiring that
$$
d\tilde\varphi_\mp\cdot v_\pm=1.
$$
Denote by $V=H_G/\partial_\tau G$ the generator of the flow $\tilde\varphi^s$.
The Lie derivative
$\mathcal L_V v_\pm$ is a multiple of $v_\pm$; to compute it, we use the identity
$$
0=V(d\tilde\varphi_\mp\cdot v_\pm)=\mathcal L_V(d\tilde\varphi_\mp)\cdot v_\pm
+d\tilde\varphi_\mp\cdot \mathcal L_V v_\pm.
$$
Since~\eqref{e:reno} holds on $\widetilde\Gamma_+\cup\widetilde\Gamma_-$, we get
on vectors tangent to $\widetilde\Gamma_\pm$,
$$
\mathcal L_V(d\tilde\varphi_\mp)=d(\pm \tilde\nu_\mp\tilde\varphi_\mp)
=\pm\tilde\nu d\tilde\varphi_\mp\quad\text{on }\widetilde K.
$$
It follows that
$$
\partial_s (d\tilde\varphi^s \,v_\pm)=\pm(\tilde\nu\circ\tilde\varphi^s)v_\pm,
$$
which implies immediately
\begin{prop}
  \label{l:radial-expansion}
The expansion rates defined in~\eqref{e:tilde-nu-min} and~\eqref{e:tilde-nu-max}
are given by
$$
\nu_{\min}=\liminf_{T\to\infty}\inf_{(x,\xi)\in K}\langle \nu\rangle_T,\quad
\nu_{\max}=\limsup_{T\to\infty}\sup_{(x,\xi)\in K}\langle \nu\rangle_T,
$$
where $\tilde\nu>0$ is the function on $\widetilde K$ defined in~\eqref{e:tilde-nu} and
$$
\langle\tilde\nu\rangle_T:={1\over T}\int_0^T \tilde\nu\circ \varphi^s\,ds.
$$
\end{prop}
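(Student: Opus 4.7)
The plan is to integrate the ODE \(\partial_s(d\tilde\varphi^s v_\pm) = \pm(\tilde\nu\circ\tilde\varphi^s)v_\pm\) established just above the proposition statement, turning the norm \(\|d\tilde\varphi^{\mp s}(\tilde\rho)|_{\widetilde{\mathcal V}_\pm}\|\) into an exponential of a time-averaged integral of \(\tilde\nu\), and then read off the definitions \eqref{e:tilde-nu-min}--\eqref{e:tilde-nu-max} against that formula.

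First I would fix a base section \(v_\pm\) of \(\widetilde{\mathcal V}_\pm|_{\widetilde K}\) (the one normalized so that \(d\tilde\varphi_\mp\cdot v_\pm=1\)) and solve the ODE to obtain
\[
d\tilde\varphi^s(\tilde\rho)v_\pm(\tilde\rho)=\exp\Bigl(\pm\int_0^s\tilde\nu(\tilde\varphi^\sigma(\tilde\rho))\,d\sigma\Bigr)v_\pm(\tilde\varphi^s(\tilde\rho)),
\]
for \(\tilde\rho\in\widetilde K\). Substituting \(s\mapsto \mp s\) and re-parametrizing the integral via \(\tilde\rho'=\tilde\varphi^{\mp s}(\tilde\rho)\) yields
\[
\|d\tilde\varphi^{\mp s}(\tilde\rho)|_{\widetilde{\mathcal V}_\pm}\|
=\exp\bigl(-s\,\langle\tilde\nu\rangle_s(\tilde\rho')\bigr)\cdot\frac{\|v_\pm(\tilde\rho')\|}{\|v_\pm(\tilde\rho)\|},
\]
with \(\langle\tilde\nu\rangle_s\) as in the statement. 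Since the norm is chosen to be \(t\)-independent and homogeneous of degree zero, and since \(\widetilde K\cap\{\tau=-1,t=0\}\) is compact (by Propositions~\ref{l:mu-conv} and~\ref{l:kds-trapping-smooth}), the ratio of the \(v_\pm\)-norms in the identity above is bounded between two positive constants, uniformly in \(\tilde\rho\in\widetilde K\) and \(s\in\mathbb R\). This reduces the problem entirely to controlling the exponential factor.

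Now I match this with the definitions. For \(\nu_{\min}\): the condition \(\sup_{\tilde\rho\in\widetilde K}\|d\tilde\varphi^{\mp s}(\tilde\rho)|_{\widetilde{\mathcal V}_\pm}\|\leq Ce^{-\nu s}\) becomes, up to the uniformly bounded ratio,
\[
\inf_{\tilde\rho'\in\widetilde K}\langle\tilde\nu\rangle_s(\tilde\rho')\geq\nu-\frac{\log C}{s},\quad s\geq 0.
\]
Taking the supremum over admissible \(\nu\) on the left and sending \(s\to\infty\) on the right identifies \(\nu_{\min}=\liminf_{T\to\infty}\inf_K\langle\tilde\nu\rangle_T\). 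The argument for \(\nu_{\max}\) is symmetric with \(\sup\)/\(\limsup\) replacing \(\inf\)/\(\liminf\) (and using the lower bound side of the ODE identity). Finally, descending from \(\widetilde K\) (with \(\tilde\nu\)) to \(K\subset T^*X_0\) (with \(\nu\)) is immediate from the projection \eqref{e:rescaled-flow-2} — the flow on \(\widetilde K\) factors through the flow on \(K\) up to a translation in \(t\), which the \(t\)-independent function \(\tilde\nu\) ignores.

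The one step that needs care, and which I expect to be the only real obstacle, is the uniform boundedness of \(\|v_\pm(\tilde\rho')\|/\|v_\pm(\tilde\rho)\|\): the section \(v_\pm\) was defined by the normalization \(d\tilde\varphi_\mp\cdot v_\pm=1\), and one must check that this normalization does not degenerate along \(\widetilde K\). This follows from the transversality in Proposition~\ref{l:true-trapping} (so that \(d\tilde\varphi_\mp\) is a nondegenerate functional on \(\widetilde{\mathcal V}_\pm\)), together with the homogeneity of \(\tilde\varphi_\pm\) which allows one to reduce to a compact cross-section \(\widetilde K\cap\{\tau=-1,t=0\}\) where all quantities are smooth and bounded. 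Once this is in place, the remainder of the argument is essentially the standard subadditive/Birkhoff-type identification of a uniform Lyapunov exponent with the time average of its infinitesimal generator.
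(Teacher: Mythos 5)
Your proposal is correct and takes the same route as the paper, which simply records the ODE $\partial_s(d\tilde\varphi^s v_\pm)=\pm(\tilde\nu\circ\tilde\varphi^s)v_\pm$ and declares the proposition to follow "immediately"; you have supplied precisely the missing details (integration of the ODE, the reduction to a compact cross-section via homogeneity and $t$-invariance to bound the ratio $\|v_\pm(\tilde\rho')\|/\|v_\pm(\tilde\rho)\|$, and the two inequalities matching the $\sup/\inf$ definitions against $\liminf/\limsup$ of the ergodic averages). One small bookkeeping slip: for the lower sign ($\widetilde{\mathcal V}_-$, where $\mp s=+s$) the exponent comes out as $-s\langle\tilde\nu\rangle_s(\tilde\rho)$ rather than $-s\langle\tilde\nu\rangle_s(\tilde\rho')$ with $\tilde\rho'=\tilde\varphi^{s}(\tilde\rho)$, but since both $\tilde\rho$ and $\tilde\rho'$ range over all of $\widetilde K$ this is harmless for the $\sup/\inf$.
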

Together, Proposition~\ref{l:angular-flow} and~\ref{l:radial-expansion} verify
assumption~\eqref{w:nhp} of~\S\ref{s:kdsu-assumptions} and finish the proof
of Proposition~\ref{l:kds-trapping}.

\subsection{Trapping in special cases}
  \label{s:kdsu-trapping-special}

We now establish some properties of the trapped set $\widetilde K$ and the local
expansion rate $\tilde\nu$, defined in~\eqref{e:tilde-nu},
in two special cases. We start with the Schwarzschild(--de Sitter) case~\eqref{e:sds-bound},
when everything can be described explicitly:
\begin{prop}
  \label{l:special-sds}
For $a=0$, we have
\begin{gather}
  \label{e:k-1-sds}
\widetilde K=\Big\{\xi_r=0,\ r=3M,\ \tau\neq 0,\ G_\theta={27M^2\over 1-9\Lambda M^2}\tau^2\Big\},\\
  \label{e:nu-sds}
\tilde\nu={\sqrt{1-9\Lambda M^2}\over 3\sqrt 3 M}.
\end{gather}
\end{prop}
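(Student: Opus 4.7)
The plan is to prove Proposition~\ref{l:special-sds} by direct computation, substituting $a=0$ into all the formulas derived so far. With $a=0$ we have $\alpha=0$, $\rho^2=r^2$, $\Delta_\theta=1$, so
$$
G_r=\Delta_r\xi_r^2-{r^4\tau^2\over\Delta_r},\quad
G_\theta=\xi_\theta^2+{\xi_\varphi^2\over\sin^2\theta},\quad
\Delta_r=r^2-{\Lambda r^4\over 3}-2Mr,
$$
and the trapped set is defined by $G=0$, $\xi_r=0$, $\partial_r G_r=0$ together with $\tilde\xi\neq 0$.

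First I would handle the defining equations. At $\xi_r=0$ one computes $\partial_r G_r=-\tau^2 r^3(4\Delta_r-r\partial_r\Delta_r)/\Delta_r^2$, and by~\eqref{e:dallas} with $\alpha=a=0$ this is $-2\tau^2 r^4(r-3M)/\Delta_r^2$. Note that on $\widetilde K$ we must have $\tau\neq 0$: indeed if $\tau=0$ and $\xi_r=0$ then $G_r=0$ automatically and $G=0$ forces $G_\theta=\xi_\theta^2+\xi_\varphi^2/\sin^2\theta=0$, yielding $\tilde\xi=0$, contradicting $\tilde\xi\neq 0$. Hence $\partial_r G_r=0$ forces $r=3M$. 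Next, $\Delta_r(3M)=9M^2-27\Lambda M^4/3-6M^2=3M^2(1-9\Lambda M^2)$, so $G=0$ combined with $\xi_r=0$ yields
$$
G_\theta={r^4\tau^2\over\Delta_r}\bigg|_{r=3M}={81M^4\tau^2\over 3M^2(1-9\Lambda M^2)}={27M^2\tau^2\over 1-9\Lambda M^2},
$$
which is~\eqref{e:k-1-sds}.

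For the expansion rate, I would apply formula~\eqref{e:tilde-nu}. Set $f(r):=r^4/\Delta_r$, so that at $\xi_r=0$, $\partial_r^2 G_r=-\tau^2 f''(r)$. From $f'(r)=2r^4(r-3M)/\Delta_r^2$ (which already incorporates $4\Delta_r-r\partial_r\Delta_r=2r(r-3M)$), one gets
$$
f''(3M)={2(3M)^4\over\Delta_r(3M)^2}={18\over(1-9\Lambda M^2)^2},
$$
so $\partial_r^2 G_r|_{\widetilde K}=-18\tau^2/(1-9\Lambda M^2)^2$. Thus
$$
-2\Delta_r\partial_r^2G_r\big|_{\widetilde K}={108\, M^2\tau^2\over 1-9\Lambda M^2}.
$$
On the other hand $\partial_\tau G=\partial_\tau G_r=-2r^4\tau/\Delta_r$, which at $r=3M$ equals $-54M^2\tau/(1-9\Lambda M^2)$. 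Substituting into~\eqref{e:tilde-nu},
$$
\tilde\nu=\frac{\sqrt{108\,M^2\tau^2/(1-9\Lambda M^2)}}{54M^2|\tau|/(1-9\Lambda M^2)}
={6\sqrt{3}\,M|\tau|\,(1-9\Lambda M^2)\over 54 M^2|\tau|\sqrt{1-9\Lambda M^2}}
={\sqrt{1-9\Lambda M^2}\over 3\sqrt{3}\,M},
$$
giving~\eqref{e:nu-sds}.

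There is no real obstacle here; the computation is elementary once one notices that all quantities on $\widetilde K$ reduce to explicit functions of $\tau$ and $M,\Lambda$ because the $r$ coordinate is pinned at $r=3M$ and angular dependence disappears from the radial part when $a=0$. The only point requiring care is the preliminary observation that $\tau\neq 0$ on $\widetilde K$, which is what makes $\tilde\nu$ well-defined (through $\partial_\tau G\neq 0$) and what rules out a spurious solution of $\partial_r G_r=0$ at $\tau=0$.
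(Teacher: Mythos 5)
Your computation is correct and follows essentially the same route as the paper: the paper phrases the radial condition via the function $\Psi(r)=2\tau r^2(r-3M)$ from~\eqref{e:Psi-r}--\eqref{e:PsiPsi} rather than differentiating $r^4/\Delta_r$ directly, but the substance (pinning $r=3M$ via $\tau\neq 0$, evaluating $\Delta_r(3M)$, $\partial_r^2G_r$, and $\partial_\tau G$ on $\widetilde K$, then plugging into~\eqref{e:tilde-nu}) is identical. Your direct argument that $\tau=0$ forces $\tilde\xi=0$ is exactly the observation the paper makes when verifying~\eqref{e:houston} in the Schwarzschild(--de Sitter) case.
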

\begin{proof}
We recall from~\eqref{e:psi=0} that
$\widetilde K$ is given by the equations $G=0,\xi_r=0,\Psi=0$, where
$\Psi$ is computed using~\eqref{e:PsiPsi}:
$$
\Psi(r)=2\tau r^2(r-3M).
$$
Since $\tau\neq 0$ on $\widetilde K$ by~\eqref{e:houston},
we see that $\Psi=0$ is equivalent to $r=3M$. Now, $\Delta_r(3M)=3M^2(1-9\Lambda M^2)$;
therefore, $G_r=-{27M^2\over (1-9\Lambda M^2)}\tau^2$ for $\xi_r=0$ and $r=3M$ and
we obtain~\eqref{e:k-1-sds}. Next, by~\eqref{e:Psi-r}, we find
$$
\partial_r^2 G_r=-{r^2\tau\over\Delta_r^2}\partial_r\Psi(r)=
-{18\over (1-9\Lambda M^2)^2}\,\tau^2\quad\text{on }\widetilde K.
$$
Finally, we compute
$$
\partial_{\tau}G=-{54M^2\over 1-9\Lambda M^2}\,\tau\quad\text{on }\widetilde K,
$$
and~\eqref{e:nu-sds} follows.
\end{proof}
We next consider the case when $\Lambda=0$ and $a$ approaches the maximal rotation speed $M$ from below,
calculating the expansion rates on two equators to show that the pinching condition~\eqref{e:kds-pinching} is violated:
\begin{prop}
  \label{l:special-kerr}
Fix $M$ and assume that
$$
\Lambda=0,\quad
a=M-\epsilon,\quad
0<\epsilon\ll 1.
$$
Then $\widetilde K_e$, defined in~\eqref{e:tilde-k-e}, is the union of two conical sets
$$
E_\pm=\{r=R_\pm(\epsilon),\ \xi_r=0,\ \xi_\varphi=F_\pm(\epsilon)\tau,\ \theta=\pi/2,\
\xi_\theta=0,\ \tau\neq 0\},
$$
where $R_+(\epsilon),F_+(\epsilon)$ are smooth functions of $\epsilon$,
$R_-(\epsilon),F_-(\epsilon)$ are smooth functions of $\sqrt\epsilon$, and
(see Figure~\ref{f:extremes})
\begin{equation}
  \label{e:eeee}
\begin{aligned}
R_+(\epsilon)=4M+\mathcal O(\epsilon),&\quad
F_+(\epsilon)=7M+\mathcal O(\epsilon);\\
R_-(\epsilon)=M+\sqrt{8\epsilon M/3}+\mathcal O(\epsilon),&\quad
F_-(\epsilon)=-2M-\sqrt{6\epsilon M}+\mathcal O(\epsilon).
\end{aligned}
\end{equation}
Finally, the expansion rates $\tilde\nu$ defined in~\eqref{e:tilde-nu} are given by
(see also Figure~\ref{f:kdsu-nums}(a) in the introduction)
\begin{equation}
  \label{e:eeeenu}
\tilde\nu={3\sqrt{3}\over 28M}+\mathcal O(\epsilon)\quad\text{on }E_+;\quad
\tilde\nu={\sqrt{\epsilon/2M}\over M}+\mathcal O(\epsilon)\quad\text{on }E_-.
\end{equation}
\end{prop}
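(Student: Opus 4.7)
My plan is to first derive the defining equations for $\widetilde K_e$ by intersecting the three conditions $G=0$, $\xi_r=0$, $\Psi(r)=0$ (from Proposition~\ref{l:kds-trapping} and~\eqref{e:psi=0}) with the equatorial slice $\theta=\pi/2$, $\xi_\theta=0$. At $\theta=\pi/2$ (with $\alpha=0$) one has $G_\theta=(a\tau+\xi_\varphi)^2=B^2$ and $G_r|_{\xi_r=0}=-A^2/\Delta_r$ with $A$, $B$ as in~\eqref{e:AB}, so $G=0$ becomes $A^2=B^2\Delta_r$. Combined with the cubic $\Psi(r)=0$ from~\eqref{e:PsiPsi}, and using the homogeneity of the flow to write $\xi_\varphi=L\tau$, this reduces to a coupled system in $(r,L)$.

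Next I would solve $\Psi=0$ for $L$ as a rational function of $r$ (degree $3$ over degree $1$), substitute into $A^2=B^2\Delta_r$, and obtain a polynomial equation in $r$ alone. For $a=M-\epsilon$, the implicit function theorem gives a smooth branch near $r=4M$ producing $E_+$ with the stated expansions $R_+(\epsilon)=4M+\mathcal{O}(\epsilon)$, $F_+(\epsilon)=7M+\mathcal{O}(\epsilon)$. Near $r=M$ the limit is degenerate because $\Delta_r$ and the denominator in $L$ both vanish, so I would instead set $r=M+\delta$, $L=-2M+\lambda$ with $\delta,\lambda=\mathcal{O}(\sqrt\epsilon)$ and Taylor-expand. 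The $\Psi=0$ relation yields $\lambda=-(4M/c)\sqrt\epsilon+\mathcal{O}(\epsilon)$ with $\delta=c\sqrt\epsilon$, and $A^2=B^2\Delta_r$ reduces at leading order to the quadratic $3c^4-14Mc^2+16M^2=0$, with roots $c^2=2M$ and $c^2=8M/3$. The spurious root $c^2=2M$ corresponds to $r=M+\sqrt{2M\epsilon}+\mathcal{O}(\epsilon)$, i.e.\ the event horizon $r_-=M+\sqrt{M^2-a^2}$, and must be discarded since $\widetilde K\subset X_0\times\{\tau\neq 0\}$; the root $c^2=8M/3$ gives $E_-$ with the asymptotics in~\eqref{e:eeee}. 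Smoothness in $\sqrt\epsilon$ (resp.\ $\epsilon$) follows from the implicit function theorem applied to the reduced equations in the $(\delta/\sqrt\epsilon,\lambda/\sqrt\epsilon)$ variables.

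Finally, to obtain~\eqref{e:eeeenu} I would use the formula $\tilde\nu=\sqrt{-2\Delta_r\,\partial_r^2 G_r}/|\partial_\tau G|$ from~\eqref{e:tilde-nu}. Since $\partial_r G_r=-A\Psi/\Delta_r^2$ by~\eqref{e:Psi-r} and $\Psi=0$ on $\widetilde K$, one has $\partial_r^2 G_r|_{\widetilde K}=-A\,\partial_r\Psi/\Delta_r^2$. For $\partial_\tau G$ I would use $\partial_\tau G=-2A(r^2+a^2)/\Delta_r+2aB$. At $E_+$ all quantities are finite in $\epsilon$: direct substitution of $r=4M$, $a=M$, $L=7M$ gives $A=24M^2\tau$, $B=8M\tau$, $\Delta_r=9M^2$, $\partial_r\Psi/\tau=36M^2$, $\partial_\tau G=-(224/3)M^2\tau$, which combine to $\tilde\nu=3\sqrt{3}/(28M)+\mathcal{O}(\epsilon)$. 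At $E_-$ each factor is singular as $\epsilon\to 0$: Taylor expansion yields $A\sim \tfrac{M\sqrt{6M\epsilon}}{3}\tau$, $\partial_r\Psi\sim 2M\sqrt{6M\epsilon}\,\tau$, $\Delta_r\sim 2M\epsilon/3$, and $\partial_\tau G\sim -2M^2\sqrt{6M/\epsilon}\,\tau$, and these conspire to give $\tilde\nu=\sqrt{\epsilon/(2M)}/M+\mathcal{O}(\epsilon)$.

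The main obstacle is the degenerate $\sqrt\epsilon$ expansion at $E_-$: one must carry out the expansion of $\Psi=0$ and $A^2=B^2\Delta_r$ to the correct fractional order to recover both $\delta$ and $\lambda$, correctly identify the spurious root as the horizon via the factorization $\Delta_r=(r-r_-)(r-r_+)$, and then track the competing singular factors in $\tilde\nu$ to verify that the leading-order cancellations produce the stated $\sqrt\epsilon$ behavior (rather than, say, an $\mathcal{O}(1)$ or $\mathcal{O}(1/\sqrt\epsilon)$ term).
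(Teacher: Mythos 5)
Your proposal is correct and follows essentially the same route as the paper: eliminate the angular momentum variable using $\Psi=0$ and $G=0$ at $\theta=\pi/2$, reduce to a single equation in $r$, perturb near the double root $r=M$ and the simple root $r=4M$ of the $a=M$ limit, and read off $\tilde\nu$ from~\eqref{e:tilde-nu} using $\partial_r^2 G_r=-A\,\partial_r\Psi/\Delta_r^2$ on $\widetilde K$ and $\partial_\tau G=-2A(r^2+a^2)/\Delta_r+2aB$. Your numerical ingredients at $E_\pm$ (e.g.\ $A=24M^2\tau$, $\Delta_r=9M^2$, $\partial_r\Psi=36M^2\tau$, $\partial_\tau G=-\tfrac{224}{3}M^2\tau$ at $E_+$; $A\sim\tfrac{M}{3}\sqrt{6M\epsilon}\,\tau$, $\Delta_r\sim\tfrac{2M}{3}\epsilon$, $\partial_r\Psi\sim 2M\sqrt{6M\epsilon}\,\tau$, $\partial_\tau G\sim-2M^2\sqrt{6M/\epsilon}\,\tau$ at $E_-$) all check out and reproduce~\eqref{e:eeeenu}.

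The one place where you diverge from the paper is worth flagging. The paper eliminates $B$ from $\Psi=0$, substitutes into $A^2=\Delta_r B^2$, and, after dividing by $A^2/\Delta_r$, arrives directly at~\eqref{e:lullaby}, $4a^2\Delta_r=(r^2+2a^2-3Mr)^2$; near $r=M$ this yields $(r-M)^2\approx 8\epsilon M/3$ with no extraneous branches. You instead solve $\Psi=0$ for $L=\xi_\varphi/\tau$ and substitute into $A^2-\Delta_r B^2=0$ without cancelling. The resulting function is $\frac{\tau^2 r^2\Delta_r}{a^2(r-M)^2}\bigl[4a^2\Delta_r-(r^2+2a^2-3Mr)^2\bigr]$, which carries an overall factor $\Delta_r$; that is precisely where the extra factor $(c^2-2M)$ in your leading-order quadratic $(c^2-2M)(3c^2-8M)=0$ comes from. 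Your identification of the $c^2=2M$ branch with the horizon is therefore exact, not just a leading-order coincidence, because the zero of the $\Delta_r$ prefactor is literally $r=r_-$ (and the inner root $r_1$); to make the argument airtight you should say explicitly that the extra factor is $\Delta_r$, so the discarded branch sits exactly at $r=r_\mp$, which lies outside $X_0=(r_-,r_+)\times\mathbb S^2$. (Also a small slip: for $\Lambda=0$ the factorization is $\Delta_r=(r-r_1)(r-r_-)$, not $(r-r_-)(r-r_+)$.) With this clarification your derivation is equivalent to the paper's; the paper's route simply avoids the spurious factor from the outset.
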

%
\begin{figure}
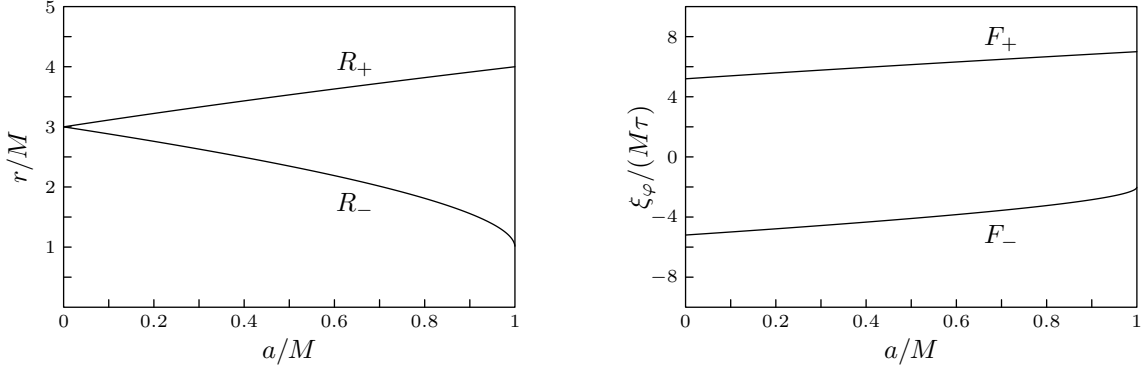

\includegraphics{kdsu.3}
\qquad\quad
\includegraphics{kdsu.4}
\caption{The graphs of $r$ and $\xi_\varphi/\tau$ on the trapped equators $E_\pm$, as functions of $a$
for $\Lambda=0$.}
\label{f:extremes}
\end{figure}
%
\begin{proof}
The set $\widetilde K_e$ is defined by equations $\xi_r=\xi_\theta=0$, $\theta=\pi/2$, and
(see~\eqref{e:psi=0})
\begin{equation}
  \label{e:zazzy}
\begin{gathered}
((r^2+a^2)\tau+a\xi_\varphi)^2=\Delta_r(r)(a\tau+\xi_\varphi)^2,\\
4r\tau\Delta_r(r)=((r^2+a^2)\tau+a\xi_\varphi)\partial_r\Delta_r(r).
\end{gathered}
\end{equation}
Recall that $\Delta_r(r)=r^2+a^2-2Mr$. Putting $A=(r^2+a^2)\tau+a\xi_\varphi$
and $B=a\tau+\xi_\varphi$, we rewrite these as
$$
\begin{gathered}
A^2=\Delta_r(r)B^2,\\
4(A-aB)\Delta_r(r)=Ar\partial_r\Delta_r(r).
\end{gathered}
$$
The second equation can be written as $(r^2+2a^2-3Mr)A=2a\Delta_r(r)B$. Solving
for $B$ and substituting into the first equation, we get
\begin{equation}
  \label{e:lullaby}
4a^2\Delta_r(r)-(r^2+2a^2-3Mr)^2=0.
\end{equation}
This is a fourth order polynomial equation in $r$ with coefficients depending on $\epsilon$
and with a root at $r=0$; we will study the behavior of the other three roots as $\epsilon\to 0$.
We write~\eqref{e:lullaby} as 
\begin{equation}
  \label{e:lullaby2}
(r-M)^2(r-4M)=-8\epsilon M^2+\mathcal O(\epsilon^2).
\end{equation}
By the implicit function theorem, for $\epsilon$ small enough, the equation~\eqref{e:lullaby} has a solution
$R=R_+(\epsilon)=4M+\mathcal O(\epsilon)$. We next identify the two roots lying near $r=M$; they are solutions
to the equations
$$
r-M=\pm M\sqrt{8+\mathcal O(\epsilon)\over 4M-r}\cdot \sqrt{\epsilon}.
$$
The solution with the negative sign lies to the left of $r_->M$, therefore we ignore it.
The solution with the positive sign, which we denote by $R_-(\epsilon)$,
exists for $\epsilon$ small enough by the implicit function theorem and we find
$R_-(\epsilon)=M+\sqrt{8\epsilon M/3}+\mathcal O(\epsilon)$.

To find the values of $\xi_\varphi/\tau$ corresponding to $r=R_\pm(\epsilon)$, we use the second
equation in~\eqref{e:zazzy}; this completes the proof of~\eqref{e:eeee}. Finally, we calculate
at $r=R_+(\epsilon)$, $\xi_\varphi=F_+(\epsilon)\tau$,
$$
\Delta_r=9M^2+\mathcal O(\epsilon),\
\partial_r^2G=-{32\over 3}\tau^2+\mathcal O(\epsilon),\
\partial_{\tau}G=-{224\over 3}M^2\tau+\mathcal O(\epsilon),
$$
and at $r=R_-(\epsilon)$, $\xi_\varphi=F_-(\epsilon)\tau$,
$$
\Delta_r={2M\over 3}\epsilon+\mathcal O(\epsilon^2),\
\partial_r^2G=-{9M\over\epsilon}\tau^2+\mathcal O(\epsilon^{-1/2}),\
\partial_{\tau}G=-2\sqrt{6M\over\epsilon}\,M^2\tau+\mathcal O(1);
$$
\eqref{e:eeeenu} follows.
\end{proof}
We finally explain how to numerically compute the constant $c_{\widetilde K}$ from
the Weyl law of Theorem~\ref{t:kds-weyl}, defined in~\eqref{e:c-k-tilde}.
We can parametrize $\widetilde K\cap \{t=0\}\cap \{\xi_\theta>0\}$ by the variables
$\tau,\xi_\varphi,\theta,\varphi$~-- indeed, we can find $r=r(\tau,\xi_\varphi)$
from the equation $\partial_r G=0$, and find from the equation $G=0$ that
$\xi_\theta=\sqrt{\Theta(\tau,\xi_\varphi,\theta,\varphi)}$, where
$$
\Theta={(1+\alpha)^2\over\Delta_\theta\Delta_r(r(\tau,\xi_\varphi))}((r(\tau,\xi_\varphi)^2+a^2)\tau+a\xi_\varphi)^2
-{(1+\alpha)^2\over\Delta_\theta^2\sin^2\theta}(a\sin^2\theta\,\tau+\xi_\varphi)^2.
$$
The domain of integration is $\{\Theta>0\}\cap \{0\leq\tau\leq 1\}$. Then (using the the symmetry $\xi_\theta\mapsto -\xi_\theta$
and fact that $\Theta$ does not depend on $\varphi$)
$$
c_{\widetilde K}=4\pi\int_{\{\Theta>0\}\cap \{0\leq\tau\leq 1\}} d\theta \wedge d\xi_\theta\wedge d\xi_\varphi
=2\pi\int_{\{\Theta>0\}\cap \{0\leq\tau\leq 1\}}{\partial_\tau\Theta\over\sqrt{\Theta}} \,d\theta d\xi_\varphi d\tau.
$$
Now, $\Theta$ is homogeneous of degree 2 in the $(\tau,\xi_\varphi)$ variables; therefore,
the integrand is homogeneous of degree 0 and we make the change of variables
$\xi_\varphi=s\tau$ to get
\begin{equation}
  \label{e:c-tilde-k-2}
c_{\widetilde K}=\pi \int_{\{\Theta>0\}\cap \{\tau=1\}}{\partial_\tau \Theta\over\sqrt{\Theta}}
\,d\theta d\xi_\varphi.
\end{equation}
We also note that we can compute $\partial_\tau\Theta$ without involving $\partial_\tau r$,
since $\partial_rG=0$ on the trapped set.

For $a=0$, we put $c_0={3\sqrt{3}M\over\sqrt{1-9\Lambda M^2}}$ and compute
(putting $\xi_\varphi=sc_0\sin\theta$)
$$
c_{\widetilde K}=2\pi c_0^2\int_0^\pi\int_{-1}^1 {\sin\theta\over\sqrt{1-s^2}}\,ds d\theta
=4\pi^2 c_0^2.
$$

\subsection{Results for linear waves}
  \label{s:kdsu-waves}

In this section, we apply Theorem~\ref{t:internal-waves} in~\S\ref{s:kdsu-decay}
and the analysis of~\S\S\ref{s:kdsu-the-metric}, \ref{s:kdsu-trapping} to
obtain Theorems~\ref{t:decay} and~\ref{t:res-dec}.

We start by formulating a well-posed problem for the wave equation on the Kerr--de Sitter
background. For that, we in particular need to shift the time
variable, see~\switch{\cite[\S1]{skds} and~\cite[\S1.1]{zeeman}}{\S\S\ref{s:skds-1}, \ref{s:k-ds}}.
Let $\mu$ be the defining function of the event horizons and/or spatial
infinity defined in~\eqref{e:mu-defined} and fix a small constant $\delta_1$,
used in Theorem~\ref{t:internal-waves} as well as in~\eqref{e:fUnz}.
To continue the metric smoothly past the event horizons, we make the change
of variables
\begin{equation}
  \label{e:kds-change}
t=t^*+F_t(r),\quad
\varphi=\varphi^*+F_\varphi(r),
\end{equation}
where $F_t,F_\varphi$ are smooth real-valued functions on $(r_-,r_+)$ such that
\begin{itemize}
\item $F'_t(r)=\pm {1+\alpha\over \Delta_r(r)}(r^2+a^2)+f_\pm(r)$ and
$F'_\varphi(r)=\pm {1+\alpha\over \Delta_r(r)}a$ near $r=r_\pm$, where
$f_\pm$ are smooth functions
(for the Kerr case $\Lambda=0$, we only require this at $r=r_-$)
\item $F_t(r)=F_\varphi(r)=0$ near $\{\mu\geq\delta_1/10\}$
(and also for $r$ large enough in the Kerr case $\Lambda=0$);
\item the covector $dt^*$ is timelike everywhere; equivalently, the level
surfaces of $t^*$ are spacelike.
\end{itemize}
See for example~\cite[\S6.1 and~(6.15)]{vasy1} for how to construct such $F_t,F_\varphi$.
The metric in the coordinates $(t^*,r,\theta,\varphi^*)$ continues smoothly through
$\{r=r_-\}$ and $\{r=r_+\}$ (the latter for $\Lambda>0$), to an extension $\widetilde X_{-{\delta_1}}:=\{\mu>-{\delta_1}\}$
of $\widetilde X_0$ past the event horizons. Since $F_t=F_\varphi=0$ near $\{\mu\geq\delta_1/10\}$,
our change of variables does not affect the arguments in~\S\ref{s:kdsu-general}.

The principal symbol of $h^2\Box_{\tilde g}$ in the new variables, denoted by
$\tilde p^*$, is given by
$$
\tilde p^*(r,\theta,\tau^*,\xi_r,\xi_\theta,\xi_{\varphi^*})
=\tilde p(r,\theta,\tau^*,\xi_r-\partial_r F_t(r)\tau^*-\partial_r F_\varphi(r)\xi_{\varphi^*},
\xi_\theta,\xi_{\varphi^*}).
$$
In particular, if $\xi_{\theta}=\xi_{\varphi^*}=0$, then for
$r$ close to $r_-$ or to $r_+$ (the latter case for $\Lambda>0$),
$$
\tilde p^*=\Delta_r(\xi_r-f_\pm(r)\tau^*)^2
\mp 2(1+\alpha)(r^2+a^2)\tau^*(\xi_r-f_\pm(r)\tau^*)
+{(1+\alpha)^2a^2\sin^2\theta\over\Delta_\theta}(\tau^*)^2.
$$
Then in the new coordinates,
\begin{equation}
  \label{e:naomi}
\tilde g^{-1}(dr,dr)=-\Delta_r,\quad
\tilde g^{-1}(dr,dt^*)=\pm (1+\alpha)(r^2+a^2)+\Delta_r f_\pm(r).
\end{equation}
Therefore, the surfaces $\{r=r_0\}$ are timelike for $\mu(r_0)>0$,
lightlike for $\mu(r_0)=0$, and spacelike for $\mu(r_0)<0$,
and $\tilde g^{-1}(d\mu,dt^*)<0$ near the event horizon(s).
Moreover, for $\Lambda=0$ the d'Alembert--Beltrami operator
$$
\begin{gathered}
\Box_{\tilde g}={1\over \rho^2}D_r(\Delta_r D_r)+{1\over \rho^2\sin\theta}D_\theta(\sin\theta D_\theta)\\
+{(a\sin^2\theta\,D_t+D_\varphi)^2\over\rho^2\sin^2\theta}-{((r^2+a^2)D_t+aD_\varphi)^2\over\rho^2\Delta_r}
\end{gathered}
$$
belongs to Melrose's scattering calculus on the space slices near $r=\infty$ (see~\cite[\S2]{va-zw})
in the sense that it is a polynomial in the differential operators
$D_t,D_r,r^{-1}D_\theta,r^{-1}D_\varphi$
with coefficients smooth
up to $\{r^{-1}=0\}$ in the $r^{-1},\theta,\varphi$ variables (where of course $\theta,\varphi$ are replaced
by a different coordinate system on $\mathbb S^2$ near the poles $\{\sin\theta=0\}$).

Consider the initial-value problem for the wave equation (here $s\geq 0$
is integer)
\begin{equation}
\label{e:wewewe}
\begin{gathered}
\Box_{\tilde g}u=0,\ t^*\geq 0;\quad
u|_{t^*=0}=f_0,\
\partial_{t^*}u|_{t^*=0}=f_1;\\
f_0\in H^{s+1}(X_{-\delta_1}),\
f_1\in H^s(X_{-\delta_1}).
\end{gathered}
\end{equation}
This problem is well-posed, based on standard methods
for hyperbolic equations~\cite[\S6.5]{tay1} and the following crude energy estimate:
if we consider functions on $\widetilde X_{-\delta_1}$ as functions
of $t^*$ with values in functions on $X_{-\delta_1}$, then
for $t'\geq 0$,
\begin{equation}
  \label{e:crude-energy-bound}
\begin{gathered}
\|u(t')\|_{H^{s+1}(X_{-\delta_1})}
+\|\partial_{t^*}u(t')\|_{H^s(X_{-\delta_1})}
\\\leq Ce^{Ct'}(\|u(0)\|_{H^{s+1}(X_{-\delta_1})}
+\|\partial_{t^*}u(0)\|_{H^s(X_{-\delta_1})}
+\|e^{-Ct^*}\Box_{\tilde g}u\|_{H^s((0,t')\times X_{-\delta_1})}).
\end{gathered}
\end{equation}
To prove~\eqref{e:crude-energy-bound}
for $s=0$, we use the standard energy estimate on $\Omega=\widetilde X_{-\delta_1}\cap \{0\leq t^*\leq t'\}$ for hyperbolic equations
(see~\cite[\S2.8]{tay1}, \switch{\cite[Proposition~1.1]{skds}}
{Proposition~\ref{l:crude-energy-estimate}}, or~\cite[\S1.1]{xpd}), with the timelike vector field
$\mathcal N$ equal to $\partial_t$ (a Killing field)
for large $r$ (in the case $\Lambda=0$)
and to $\tilde g^{-1}(dt^*)$ close to the event horizon(s);
by~\eqref{e:naomi}, the boundary $\partial\Omega$ is spacelike
and $\mathcal N$ points inside of $\Omega$ on $\{t^*=0\}$ and 
outside of it elsewhere on $\partial\Omega$. The higher
order estimates are obtained from here as in~\cite[(6.5.14)]{tay1},
commuting with differential operators in the scattering calculus.

We now assume that $f_0=f_0(h),f_1=f_1(h)$ are such
that $\|f_0\|_{H^1(X_{-\delta_1})}+\|f_1\|_{L^2(X_{-\delta_1})}$ is
bounded polynomially in $h$ and $f_0,f_1$ are localized at frequencies $\sim h^{-1}$,
namely (see the discussion in~\S\ref{s:kdsu-prelims})
$$
\WFh(f_0)\cup\WFh(f_1)\subset T^* X_{-\delta_1}\setminus 0.
$$
Let $u$ be the corresponding solution to~\eqref{e:wewewe} and assume that
it is extended to small negative times (which can be done by taking a smaller $\delta_1$
and using the local existence result backwards in time). By~\eqref{e:crude-energy-bound},
we see that $u$ is $h$-tempered uniformly for $t^*\in [0,T\log(1/h)]$. Similarly to~\eqref{e:woof},
$\widetilde\WFh(u)\subset \{\tilde p^*=0\}$. Moreover, using
standard microlocal analysis for hyperbolic equations,
we get a pseudodifferential one-to-one correspondence between $(f_0,f_1)$
and $(u_+(0),u_-(0))$, where $u_\pm$ are the components
of $u$ microlocalized on $\mathcal C_\pm$, the positive and negative parts
of the light cone, each solving an equation of
the form $(hD_t+P_\pm) u_\pm=\mathcal O(h^\infty)$ for some
spatial pseudodifferential operators $P_\pm$ (similarly to~\eqref{e:fac-2}).
This gives
$$
\WFh(u)\cap \{t^*=0\}\subset \{(0,x,\tau,\xi)\mid \tilde p^*(x,\tau,\xi)=0,\
(x,\xi)\in \WFh(f_0)\cup\WFh(f_1)\}.
$$
In particular, we get
\begin{equation}
  \label{e:morrison}
\WFh(u)\cap \{t^*=0\}\subset T^* \widetilde X_{-\delta_1}\setminus 0.
\end{equation}
By the same correspondence, if $\WFh(u)\cap \{t^*=0\}$ is compact and
covered by finitely many open
subsets of $T^*\widetilde X_{-\delta_1}\setminus 0$, then
we can apply the associated pseudodifferential partition of unity to $f_0,f_1$
to split $u$ into several
solutions to the wave equation such that the wavefront set of each solution
at $t^*=0$ is contained in one of the covering sets. The resulting solutions can then each
be analysed separately.

We next assume that
$$
\supp f_0\cup\supp f_1\subset X_{\delta_1}.
$$
We obtain some restrictions on the microlocalization of $u$ for large times.
For that, we need to consider the dynamics of the geodesic flow on the extended spacetime
$\widetilde X_{\delta_1}$. Define the flow~$\hat\varphi^{s}$ similarly
to~\eqref{e:rescaled-flow}, rescaling the geodesic flow so that
the variable $t^*$ is growing with speed 1. Since $t=t^*,\varphi=\varphi^*$ on
$\widetilde X_{\delta_1/10}$, the flow lines of $\tilde\varphi^s$ and
$\hat\varphi^s$ coincide on $\widetilde X_{\delta_1/10}$. If
$\gamma(s)$ is a flow line of $\hat\varphi^s$ such that
$\gamma(0)\in\widetilde X_{\delta_1/10}$ and $\gamma$ is not trapped
for positive times according to Definition~\ref{d:time-trapping},
then either $\gamma(s)$ escapes to the Euclidean infinity (for $\Lambda=0$)
or $\gamma(s)$ crosses one of the event horizons at some fixed positive time $s_0$,
and $\mu(\gamma(s))<0$ is strictly decreasing for $s>s_0$
(see the discussion following~\cite[(6.22)]{vasy1}, verifying~\cite[(2.8)]{vasy1});
in the latter case, we say that $\gamma$ escapes through the event horizons.

The next statement makes nontrivial use of the structure of the infinite ends
(in particular, using~\cite{me-scattering,va-zw,da-smoothing} for the asymptotically
Euclidean end for $\Lambda=0$) and is the key step for obtaining
control on the escaping parts of the solution for long times:
\begin{prop}
  \label{l:eh-escape}
Assume that all flow lines of $\hat\varphi^s$ starting on $\WFh(u)\cap \{t^*=0\}$
escape, either to the spatial infinity or through the event horizons. Then there exists $T_0>0$ such that
uniformly in $t^*$,
\begin{equation}
  \label{e:eh-escape}
\|u(t^*)\|_{H^1(X_{\delta_1})}+
\|\partial_{t^*}u(t^*)\|_{L^2(X_{\delta_1})}=\mathcal O(h^\infty),\quad
t^*\in [T_0,T\log(1/h)].
\end{equation}
\end{prop}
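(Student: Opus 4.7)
The central idea is to combine propagation of singularities with the non-recurrence of escaping trajectories and, in the Kerr case, with scattering-calculus estimates at spatial infinity. First I would note that $\WFh(u)\cap\{t^*=0\}$ is a compact subset of $T^*\widetilde X_{-\delta_1}\setminus 0$ by~\eqref{e:morrison}, so using the pseudodifferential splitting recalled after~\eqref{e:morrison} I can decompose $u=u_++u_-$ into $\mathcal C_\pm$ components, each satisfying a first-order equation of the form $(hD_{t^*}+P_\pm)u_\pm=\mathcal O(h^\infty)$. Semiclassical propagation of singularities for these equations then gives
$$
\WFh(u(t^*))\subset \hat\varphi^{t^*}\bigl(\WFh(u(0))\bigr),\quad t^*\in[0,T\log(1/h)],
$$
with constants in the associated $\mathcal O(h^\infty)$ remainders that grow at most polynomially in $t^*$ and hence are absorbed into $\mathcal O(h^\infty)$ for any fixed order of regularity.

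Second, by the escape hypothesis and compactness of $\WFh(u(0))$, there exists a uniform $T_0>0$ such that for every $s\geq T_0$, the set $\hat\varphi^s(\WFh(u(0)))$ is contained in $\{\mu<0\}$ (trajectories past an event horizon), together with, in the Kerr case, a fixed neighborhood of scattering infinity. For event-horizon trajectories, the formula~\eqref{e:naomi} and the spacelike character of the level surfaces of $t^*$ past $\{\mu=0\}$ imply that $\mu$ is strictly decreasing along $\hat\varphi^s$ in $\{\mu\le 0\}$; hence such trajectories never re-enter $X_{\delta_1}=\{\mu>\delta_1\}$, and the corresponding part of $\WFh(u(t^*))$ is disjoint from $T^*X_{\delta_1}$ for all $t^*\geq T_0$.

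Third, for trajectories that escape to the asymptotically Euclidean end in the Kerr case $\Lambda=0$, I would invoke Melrose's scattering calculus~\cite{me-scattering} together with the outgoing semiclassical resolvent and smoothing estimates of Vasy--Zworski~\cite{va-zw} and Datchev~\cite{da-smoothing}. Since $\Box_{\tilde g}$ is a scattering differential operator near $r=\infty$ and the initial data is compactly microlocalized in $T^*X_{\delta_1}$, these estimates show that the portion of the solution whose microsupport escapes to scattering infinity produces only $\mathcal O(h^\infty)$ contributions on compact subsets of $X_0$, uniformly on $[0,T\log(1/h)]$. Combining this with the event-horizon argument yields $\WFh(u(t^*))\cap T^*X_{\delta_1}=\emptyset$ uniformly in $t^*\in[T_0,T\log(1/h)]$; the crude energy estimate~\eqref{e:crude-energy-bound} applied on unit time windows, together with the norm equivalence~\eqref{e:equivl2}, then converts the pointwise $\WFh$ statement into the energy-norm bound~\eqref{e:eh-escape}.

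The main obstacle is the Kerr infinity: propagation of singularities by itself does not give uniform control over times of size $T\log(1/h)$ near an infinite end, and one must rule out semiclassical energy that flows outward reflecting back into $X_{\delta_1}$ within this window. The outgoing scattering-calculus estimates from~\cite{me-scattering,va-zw,da-smoothing} are precisely what is needed, exploiting the asymptotically Euclidean structure of the Kerr metric at $r=\infty$. By contrast, the event-horizon portion is elementary once monotonicity of $\mu$ past $\{\mu=0\}$ is used, and its constants, associated with a relatively compact piece of phase space, grow only polynomially in $t^*$ and are comfortably absorbed by $\mathcal O(h^\infty)$ remainders.
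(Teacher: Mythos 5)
Your identification of the toolbox (\cite{me-scattering,va-zw,da-smoothing}) and of the genuine difficulty (ruling out semiclassical mass returning from the asymptotically Euclidean end over times of size $T\log(1/h)$) matches the paper, and your treatment of the event-horizon case is essentially right, though the paper's route is cleaner: one applies propagation of singularities only for a \emph{fixed} finite time $T_0$ to conclude $\WFh(u)\cap\{t^*=T_0\}\subset\{\mu<-\delta_1/2\}$, and then the crude energy estimate~\eqref{e:crude-energy-bound} (whose $e^{Ct'}$ growth is absorbed by the $\mathcal O(h^\infty)$ smallness of the data at $t^*=T_0$) handles all later times, with no need to propagate singularities over the full logarithmic window.

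For the spatial-infinity case, however, there is a genuine gap: you assert that the scattering-calculus and outgoing resolvent estimates ``show that the portion of the solution whose microsupport escapes to scattering infinity produces only $\mathcal O(h^\infty)$ contributions,'' but that is precisely the statement to be proved, and no mechanism is given for deducing a uniform statement about a time-dependent solution on a $\log(1/h)$ window from stationary resolvent bounds. Moreover, the nontrapping estimate of~\cite{va-zw} cannot be applied to the Kerr operator itself, since Kerr is trapping and degenerates at the event horizon. The paper's proof proceeds by: (i) using finite-time propagation of singularities to reduce to data microlocalized near a single outward-pointing point at large $r$ (this is what makes ``semiclassically outgoing'' applicable); (ii) constructing an auxiliary spacetime $\tilde g_1$ by gluing the Minkowski metric into $\{r<R_1\}$, which removes both the trapping and the horizons and is verified to be globally nontrapping; (iii) taking the Fourier transform in $t$ at $\Im\lambda=E>0$, applying the nontrapping resolvent bound of~\cite{va-zw} and the semiclassically outgoing property from~\cite[Lemma~2]{da-smoothing} to conclude $\widetilde\WFh(u_1)\cap\{r<R_1\}=\emptyset$ uniformly on $[0,T\log(1/h)]$; and (iv) comparing $u_1$ with $u$ via the energy estimate. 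Without the auxiliary nontrapping metric and the passage to the stationary problem via the Fourier transform in time, your third step does not go through. Relatedly, your opening claim that propagation of singularities holds up to $t^*=T\log(1/h)$ with polynomially growing constants is unjustified as stated and is not what the paper uses in this proposition.
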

\begin{proof}
We first consider the case when $\WFh(u)\cap \{t^*=0\}$ is contained in a small neighborhood
of some $(\tilde x,\tilde\xi)\in T^* X_{\delta_1}\setminus 0$, and,
for $\gamma(s)=\hat\varphi^s(\tilde x,\tilde\xi)$, there exists $T_0>0$
such that $\gamma([0,T_0])\subset \widetilde X_{-3\delta_1/4}$
and $\gamma(T_0)\in \{\mu<-\delta_1/2\}$. By propagation of singularities for the wave equation
(see for example\switch{~\cite[Proposition~3.4]{nhp}}{ Proposition~\ref{l:microhyperbolic}}),
we see that $\WFh(u)\cap \{t^*=T_0\}\subset \{\mu<-\delta_1/2\}$; it follows
that
$$
\|u(T_0)\|_{H^1(X_{-\delta_1/2})}+
\|\partial_{t^*}u(T_0)\|_{L^2(X_{-\delta_1/2})}=\mathcal O(h^\infty).
$$
Then the same bound holds for $t^*\geq T_0$ in place of $T_0$
by~\eqref{e:crude-energy-bound} (replacing $\delta_1$ by $\delta_1/2$).

For the remainder of this proof, we consider the opposite case, when $\Lambda=0$ and each flow line of $\hat\varphi^s$ starting
on $\WFh(u)\cap \{t^*=0\}$ escapes to the spatial infinity. Fix a large
constant $R_1$; we require in particular that $X_{\delta_1}\subset \{r<R_1\}$.
By propagation of singularities,
similarly to the previous paragraph, we may shift the time parameter and assume that 
$\WFh(u)\cap \{t^*=0\}$
is contained in a small neighborhood of some $(\tilde x_0,\tilde\xi_0)\in T^*\widetilde X_0\setminus 0$,
where $r_0>R_1$, $\partial_s\mu(\hat\varphi^s(\tilde x_0,\tilde\xi_0))|_{s=0}<0$.
In fact, by~\eqref{e:crude-energy-bound} and finite speed of propagation, we may assume that
for $t^*$ near 0, the support of $u$ in $x$ is contained in a compact subset of $\{r>R_1\}$.
Without loss of generality, we assume that $\tau_0<0$.
The trajectory $\hat\varphi^s(\tilde x_0,\tilde\xi_0)$ does not intersect $\{r\leq R_1\}$
for $s\geq 0$.

We replace the Kerr spacetime $(\widetilde X_0,\tilde g)$
with a different spacetime $(\mathbb R_t\times\mathbb R^3_{r,\theta,\varphi},\tilde g_1)$,
where $(r,\theta,\varphi)$ are the spherical coordinates on $\mathbb R^3$ and $\tilde g_1$ is the stationary Lorentzian
metric defined on $\mathbb R^4$ by
$$
\tilde g_1^{-1}:=\chi_1(r)\tilde g_0^{-1}+(1-\chi_1(r))\tilde g^{-1},
$$
where $\tilde g_0^{-1}=\tau^2-\xi_r^2-\xi_\theta^2/r^2-\xi_\varphi^2/(r^2\sin^2\theta)$ is the Minkowski metric on $\mathbb R^4$,
$\chi_1\in C_0^\infty([0,R_1))$, $0\leq\chi_1\leq 1$ everywhere, and $\chi_1=1$ on $[0,R_1/2]$.
The dual metrics $\tilde g^{-1}$ and $\tilde g_0^{-1}$ are close to each other for large $r$
in the sense of scattering metrics, that is, as quadratic forms in $\tau,\xi_r,r^{-1}\xi_\theta,r^{-1}\xi_\varphi$,
therefore for $R_1$ large enough, $\tilde g_1^{-1}$ is the dual to a Lorentzian metric,
the surfaces $\{t=\const\}$ are spacelike, and $\partial_t$ is a timelike vector field.
Note that the new spacetime no longer contains an event horizon.

We now show that $\tilde g_1^{-1}$ is nontrapping for large $R_1$ and a correct choice of $\chi_1$,
that is, each lightlike geodesic escapes to the spatial infinity in both time directions.
It suffices to prove that if $\tilde p_1(\tilde x,\tilde\xi)=-\tilde g^{-1}_{1,\tilde x}(\tilde \xi,\tilde\xi)$, then
(compare with assumption~\eqref{w:convex} in~\S\ref{s:kdsu-assumptions})
$$
r>0,\ \tilde p_1=0,\ \tilde\xi\neq 0,\ H_{\tilde p_1}r(\tilde x,\tilde\xi)=0\ \Longrightarrow\
H_{\tilde p_1}^2 r(\tilde x,\tilde\xi)>0.
$$
Indeed,
$$
H_{\tilde p_1}r=2\xi_r(\chi_1(r)+(1-\chi_1(r))\Delta_r/\rho^2);
$$
therefore, $H_{\tilde p_1}r=0$ implies $\xi_r=0$ and $H_{\tilde p_1}^2 r$ has the same sign as
$$
-\partial_r \tilde p_1=-\chi'_1(r)(\tilde p_0-\tilde p)-\chi_1(r)\partial_r \tilde p_0-(1-\chi_1(r))\partial_r\tilde p;
$$
it remains to note that we can take $r\chi'_1(r)$ bounded by 3, $\tilde p_0-\tilde p$ is small for large
$r$ in the sense of scattering metrics,
and both $r\partial_r \tilde p_0$ and $r\partial_r\tilde p$
are homogeneous of degree 2 in $\tilde\xi$ and bounded from above by a negative
constant on $\{\tau^2+r^{-2}\xi_\theta^2+r^{-2}\xi_\varphi^2=1\}\cap \{\tilde p_1=\xi_r=0\}$,
uniformly in $r^{-1}\geq 0$ for $\partial_r\tilde p_0$ and uniformly in $r^{-1}\in [0,\delta_1)$
for $\partial_r\tilde p$.

Let $u_1$ be the solution to the wave equation on the new spacetime $(\mathbb R^4,\tilde g_1)$
such that $u_1|_{t=0}=u|_{t^*=0}$, $\partial_t u_1|_{t=0}=\partial_{t^*}u|_{t^*=0}$.
It is enough to prove that, with $\widetilde\WFh(u_1)$ defined in~\S\ref{s:kdsu-prelims},
\begin{equation}
  \label{e:ultimate-wf-statement}
\widetilde\WFh(u_1)\cap \{r<R_1\}=\emptyset,\quad
0\leq t\leq T\log(1/h).
\end{equation}
Indeed, in this case $\Box_{\tilde g}((1-\chi_1(r))u_1)=\mathcal O(h^\infty)_{C^\infty}$;
by~\eqref{e:crude-energy-bound}, we have $\widetilde\WFh(u_1)=\widetilde\WFh(u)$
for $t\in [0,T\log(1/h)]$, and~\eqref{e:eh-escape} follows since $X_{\delta_1}\subset \{r<R_1\}$.

To show~\eqref{e:ultimate-wf-statement}, we use the Fourier transform in time,
$$
\hat u_1(\lambda)=\int_{\mathbb R}e^{i\lambda t}\psi_1(t) u_1(t)\,dt,\quad
\Im\lambda>0.
$$
Here $\psi_1(t)$ is supported in $[-\delta,\infty)$ and is equal to $1$ on $[\delta,\infty)$,
for some small fixed $\delta>0$.
The integral converges, as $\|u_1(t)\|_{H^2(\mathbb R^3)}\leq Ce^{\varepsilon t}$
for each $\varepsilon>0$, as follows from the standard energy estimate for the
wave equation (see the paragraph following~\eqref{e:crude-energy-bound}) applied
for the timelike Killing vector field $\partial_t$. 

Let $\widehat P(\lambda)$ be the stationary d'Alembert--Beltrami operator
for the metric $\hat g$, constructed by replacing $D_t$ by $-\lambda$ in the operator $\Box_{\hat g}$;
the semiclassical version defined in~\S\ref{s:kdsu-space} is given by the relation
$\widehat P_h(\omega)=h^2\widehat P(h^{-1}\omega)$. Then
$$
\widehat P(\lambda)\hat u_1(\lambda)=\hat f_1(\lambda),\quad
\Im\lambda>0,
$$
where $f_1=[\Box_{\hat g},\psi_1(t)]u_1(t)$. We note that
$\WFh(f_1)$ is contained in a small neighborhood of $(\tilde x_0,\tilde \xi_0)$
and $f_1$ is compactly supported; therefore,
$\hat f_1(h^{-1}\omega+iE)=\mathcal O(h^\infty)_{\mathscr S_\omega C_0^\infty(\mathbb R^3)}$
for $\omega$ outside of a small neighborhood of $-\tau_0>0$, and
$\WFh(\hat f_1(h^{-1}\omega+iE))$ lies in a small neighborhood of $(x_0,\xi_0)$
for all $\omega$.

We now apply the results of~\cite{me-scattering,va-zw,da-smoothing}.
For this, note that for any fixed $\lambda$, the operator
$\widehat P(\lambda)$ lies in Melrose's scattering calculus
on the radially compactified $\mathbb R^3$, and for $\Im\lambda>0$,
the operator $\widehat P(\lambda)$ is elliptic in this calculus in the microlocal sense
(that is, elliptic as $\xi$ and/or $r$ go to infinity)~--
in fact, near $r=\infty$ the operator $\widehat P(\lambda)$ is close to $\Delta_0 -\lambda^2$,
where $\Delta_0$ is the flat Laplacian on $\mathbb R^3$. Moreover, $\widehat P(\lambda)$ is a symmetric operator
when $\lambda\in\mathbb R$. This implies that the proofs of~\cite{va-zw,da-smoothing} apply.
Similarly to~\cite[Theorem~2]{me-scattering}, for $\Im \lambda>0$, the operator $\widehat P(\lambda)$ is Fredholm
$H^2(\mathbb R^3)\to L^2(\mathbb R^3)$ and invertible for $\lambda$ outside of a discrete set; we can then fix $E>0$ such that
$\widehat P(\lambda+iE)$ is invertible for all $\lambda\in\mathbb R$.

Next, the Hamiltonian flow of the principal symbol $\hat{\mathbf p}(\omega)$
of $\widehat P_h(\omega)$ corresponds to lightlike geodesics of the metric
$\hat g$, similarly to~\eqref{e:rescaled-flow-2}. Therefore, this flow is nontrapping
at all energies $\omega\neq 0$. By~\cite{va-zw},
we get for each $\chi_0\in C_0^\infty(\mathbb R^3)$,
\begin{equation}
  \label{e:hat-r-bobound}
\|\chi_0\widehat P(\lambda+iE)^{-1}\chi_0\|_{L^2\to L^2}\leq C\langle\lambda\rangle^{-1},\quad
\lambda\in\mathbb R;
\end{equation}
in fact, the constant in the estimate is bounded as $E\to 0$, but we do not use this here.
Finally, by~\cite[Lemma~2]{da-smoothing}, we see that
$\widehat P_h(\omega+ihE)^{-1}$ is semiclassically outgoing for $\omega$ near $-\tau_0$,
that is, the wavefront set of $\hat u_1(h^{-1}\omega+iE)$
is contained in the union of $\WFh(\hat f_1(h^{-1}\omega+iE))$ and all Hamiltonian flow lines of 
$\hat{\mathbf p}(\omega)$ starting on $\WFh(\hat f_1(h^{-1}\omega+iE))\cap \{\hat{\mathbf p}(\omega)=0\}$.
Since no geodesic starting near $(\tilde x_0,\tilde\xi_0)$ intersects $\{r\leq R_1\}$
for positive times, we get
$\WFh(\hat u_1(h^{-1}\omega+iE))\cap \overline T^*X_{\delta_1}=\emptyset$ for $\omega$ in a neighborhood
of $-\tau_0$. For $\omega$ outside of this neighborhood, we use the rapid decay
of $\hat f_1(\omega)$ established before, together with~\eqref{e:hat-r-bobound},
to get
$$
\hat u_1(\lambda+iE)=\mathcal O(h^\infty\langle\lambda\rangle^{-\infty})_{C^\infty(\{r<R_1\})}.
$$
It remains to use the Fourier inversion formula
$$
u_1(t)={1\over 2\pi}\int_{\mathbb R}e^{-i(\lambda+iE)t}\hat u_1(\lambda+iE)\,d\lambda
$$
to get~\eqref{e:ultimate-wf-statement}.
\end{proof}
Any solution satisfying~\eqref{e:eh-escape} is trivial from the point
of view of Theorems~\ref{t:decay} and~\ref{t:res-dec} (putting $u_\Pi=0$).
Therefore, we may assume that $\WFh(u)\cap \{t^*=0\}$ is contained in a small neighborhood
of some $(\tilde x,\tilde\xi)$ such that the corresponding geodesic does not escape.
By assumption~\eqref{w:positive} in~\S\ref{s:kdsu-assumptions},
see also Lemma~\ref{l:W-construction}, we may assume that
$$
\WFh(u)\cap \{t^*=0\}\subset (\mathcal C_+\cap \{\tau<0\})\cup (\mathcal C_-\cap \{\tau>0\}),
$$
here $\mathcal C_\pm$ are defined in~\eqref{e:time-orientation}. We can reduce
the case $\WF(u)\cap \{t^*=0\}\subset \mathcal C_-$
to the case $\WFh(u)\cap \{t^*=0\}\subset\mathcal C_+$ by taking
the complex conjugate of $u$, and take a dyadic partition of unity
together with the natural rescaling of the problem $\tilde\xi\mapsto s\tilde \xi, h\mapsto sh$,
to reduce to the case
\begin{equation}
  \label{e:time-is-set}
\WFh(u)\cap \{t^*=0\}\subset \mathcal C_+\cap \{|1+\tau|<\delta_1/8\}.
\end{equation}
\begin{prop}
  \label{l:time-persists}
For $\widetilde\WFh(u)$ defined in~\S\ref{s:kdsu-prelims}, we have
\begin{equation}
  \label{e:time-persists}
\widetilde\WFh(u)\subset \{|1+\tau|<\delta_1/4\}.
\end{equation}
\end{prop}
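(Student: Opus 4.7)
The plan is to combine propagation of singularities for the wave equation with the conservation of $\tau$ along the null geodesic flow, noting that the coordinate change $t=t^*+F_t(r)$, $\varphi=\varphi^*+F_\varphi(r)$ does not involve $t^*$ and so $\partial_{t^*}$ remains a Killing field. Consequently $\tilde p^*$ is independent of $t^*$, and the momentum $\tau$ dual to $t^*$ is conserved along the Hamiltonian flow of $\tilde p^*$. Since $dt^*$ is timelike, $\partial_\tau \tilde p^*\neq 0$ on $\{\tilde p^*=0\}\setminus 0$, so the rescaled flow
$$
\hat\varphi^s:=\exp\!\bigl(sH_{\tilde p^*}/\partial_\tau \tilde p^*\bigr)
$$
advances $t^*$ at unit rate while preserving $\tau$.

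First I would note that, by the elliptic estimate applied to $h^2\Box_{\tilde g}u=0$, we have $\widetilde\WFh(u)\subset\{\tilde p^*=0\}$ (so nonzero $\tau$ far from $-1$ on which $\tilde p^*\neq 0$ is excluded trivially). Then I would invoke propagation of singularities (e.g.~Proposition~3.4 of \cite{nhp}) applied uniformly to each shifted family $u_s$: every point of $\widetilde\WFh(u)$ lifts to a point of $\WFh(u)$ at some $t^*_0\in[0,T\log(1/h)]$ whose backward $\hat\varphi^s$-trajectory remains in $\WFh(u)$ as long as it stays in $T^*\widetilde X_{-\delta_1}$. Since $\tau$ is constant along $\hat\varphi^s$, once this trajectory reaches $\{t^*=0\}$ at parameter $s=t^*_0$ it lands at a point of $\WFh(u)\cap\{t^*=0\}$ carrying the same $\tau$-value; then \eqref{e:time-is-set} delivers $|1+\tau|<\delta_1/8<\delta_1/4$.

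The main obstacle is ruling out that the backward trajectory exits $\widetilde X_{-\delta_1}$ through $\{\mu=-\delta_1\}$ before reaching $\{t^*=0\}$. I would handle this by the well-posedness of \eqref{e:wewewe} together with domain of dependence for the hyperbolic problem built via the timelike vector field $\mathcal N$: since $f_0,f_1$ are supported in $X_{\delta_1}$, every singularity of $u$ at $t^*>0$ arises by forward null-geodesic propagation from $\WFh(u)\cap\{t^*=0\}$ along flow lines that remain inside $\widetilde X_{-\delta_1}$. Equivalently, from \eqref{e:naomi} the surfaces $\{\mu=\mathrm{const}\leq 0\}$ are spacelike with $\tilde g^{-1}(d\mu,dt^*)<0$ near the event horizons, so a null geodesic oriented by $\tau<0$ (the case left by the reduction to \eqref{e:time-is-set}) crosses each such surface in a single time direction---the one in which $\mu$ decreases as $t^*$ grows. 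Hence the backward trajectory from any $\rho_0\in\WFh(u)$ with $t^*_0>0$ either directly reaches $\{t^*=0\}$ inside $\widetilde X_{-\delta_1}$, giving the desired $\tau$-bound, or else it would have to enter from $\{\mu\leq -\delta_1\}$, which is impossible since no singularity can originate outside the support of the initial data. This completes the argument.
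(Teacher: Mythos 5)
Your argument is workable but it is genuinely different from the one in the paper. The paper exploits stationarity in a purely spectral way: it takes $\psi\in C_0^\infty$ supported near $\tau=-1$, observes that the Fourier multiplier $\psi(hD_{t^*})$ commutes with $\Box_{\tilde g}$ (since the coefficients are $t^*$-independent), and applies the crude energy estimate~\eqref{e:crude-energy-bound} to $u'=(1-\psi(hD_{t^*}))u$, whose Cauchy data at $t^*=0$ is $\mathcal O(h^\infty)$ by~\eqref{e:time-is-set}; the $e^{Ct'}$ loss in~\eqref{e:crude-energy-bound} is harmless against $\mathcal O(h^\infty)$ for $t^*\leq T\log(1/h)$, so $\widetilde\WFh(u')=\emptyset$ and the claim follows. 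This route needs no information about the bicharacteristic flow and is uniform over the logarithmic time window for free. Your route---backward propagation of singularities along null bicharacteristics plus conservation of $\tau$---buys a more geometric picture but must pay for two things the energy-estimate argument gets gratis: (i) uniformity of propagation of singularities over times of order $\log(1/h)$, which requires iterating the fixed-time statement $\mathcal O(\log(1/h))$ times and noting that the multiplicative constants $C^{N}$ stay polynomial in $h^{-1}$, so $\mathcal O(h^\infty)$ survives; and (ii) the non-escape of backward trajectories, discussed next.

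On point (ii), the correct justification is the monotonicity you half-state: in $\{-\delta_1\leq\mu<0\}$ the level sets of $\mu$ are spacelike and $\tilde g^{-1}(d\mu,dt^*)<0$ by~\eqref{e:naomi}, so $\mu$ is \emph{strictly decreasing} in $t^*$ along every null bicharacteristic there (for either time orientation, since the flow is rescaled to advance $t^*$ at unit rate); hence a backward trajectory starting in $\{\mu>-\delta_1\}$ has $\mu$ increasing and cannot exit through $\{\mu=-\delta_1\}$ before reaching $\{t^*=0\}$. Your closing sentence, that exiting is ``impossible since no singularity can originate outside the support of the initial data,'' is not an argument: if the backward bicharacteristic did leave the domain through the side boundary, propagation of singularities would simply give you no information there, and domain-of-dependence for the solution does not by itself constrain where $\WFh(u)$ can sit. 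Drop that clause and rely on the monotonicity alone (and, for $\Lambda=0$, note separately that backward null geodesics cannot reach $r=\infty$ in finite $t^*$, so escape to spatial infinity within the time window is also excluded). With these repairs your proof is complete, but the paper's multiplier-plus-energy-estimate argument remains the shorter and more robust of the two.
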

\begin{proof}
Consider a function $\psi\in C_0^\infty(-1-\delta_1/2,-1+\delta_1/2)$
such that $\psi=1$ near $[-1-\delta_1/4,-1+\delta_1/4]$. If $u$ solves the wave
equation on $(-\delta,\infty)$, then we extend it to a function on the whole
$\widetilde X_{-\delta_1}$ smoothly and so that $\supp u\subset \{t^*>-2\delta\}$.
Define
$$
u':=(1-\psi(hD_{t^*}))u,
$$
then, since the metric is stationary,
$\Box_{\tilde g}u'=(1-\psi(hD_{t^*}))\Box_{\tilde g}u=\mathcal O(h^\infty)_{\mathscr S(\widetilde X_{-\delta_1}\cap \{t^*\geq -\delta/2\})}$.
By~\eqref{e:time-is-set}, we get $\WFh(u')\cap \{t^*=0\}=\emptyset$.
Then by the energy estimate~\eqref{e:crude-energy-bound}, applied to $u'$, we get
$\widetilde\WFh(u')=\emptyset$, uniformly in $t^*\in [0,T\log(1/h)]$.
It remains to note that $\widetilde\WFh(\psi(hD_{t^*})u)\subset \{|1+\tau|<\delta_1/4\}$.
\end{proof}
We can now give
\begin{proof}[Proofs of Theorems~\ref{t:decay} and~\ref{t:res-dec}]
Without loss of generality (replacing $\delta_1$ by $\delta_1/3$)
we may assume that $\supp f_0\cup\supp f_1\subset X_{3\delta_1}$.

Choose small $t_\varepsilon>0$ and a cutoff function
$\chi=\chi(\mu)$, with $\supp\chi\subset \{\mu>2\delta_1\}$
and $\chi=1$ near $\{\mu\geq 3\delta_1\}$, such that,
with the flow $\tilde\varphi^t$ defined in~\eqref{e:rescaled-flow},
\begin{equation}
  \label{e:chihc}
(\tilde x,\tilde\xi)\in\supp\chi,\
\tilde\varphi^{t_\varepsilon}(\tilde x,\tilde\xi)\in\supp(1-\chi),\
\tilde\xi\neq 0\
\Longrightarrow\ {H_{\tilde p}\over \partial_\tau \tilde p}\mu(\tilde\varphi^{t_{\varepsilon}}(\tilde x,\tilde\xi))<0.
\end{equation}
The existence of such $\chi$ and $t_\varepsilon$ follows from Proposition~\ref{l:mu-conv},
see the proof of~\cite[Lemma~5.5(1)]{qeefun}.

Take $N(h)=\lceil T\log(1/h)/t_\varepsilon\rceil$ and consider the functions
$u^{(0)}:=u$ and
$$
\begin{gathered}
u^{(j)}\in C^\infty(\widetilde X_{-\delta_1}\cap \{t^*\geq jt_\varepsilon\}),\quad 1\leq j\leq N(h),\\
\Box_{\tilde g}u^{(j)}=0,\
u^{(j)}(jt_\varepsilon)=\chi u^{(j-1)}(jt_\varepsilon),\
\partial_{t^*}u^{(j)}(jt_\varepsilon)=\chi \partial_{t^*}u^{(j-1)}(jt_\varepsilon).
\end{gathered}
$$
By~\eqref{e:crude-energy-bound}, $u^{(j)}$ are $h$-tempered uniformly in $j$ and
in $t^*\in [jt_\varepsilon,T\log(1/h)+2]$. Moreover,
similarly to Proposition~\ref{l:time-persists}, we get
$\widetilde\WFh(u^{(j)})\subset \{|1+\tau|<\delta_1/4\}$ uniformly in $j$.
Then, $u^{(j)}-u^{(j-1)}$ are solutions to the wave equation
with initial data $(1-\chi)(u^{(j-1)}(jt_\varepsilon),\partial_{t^*}u^{(j-1)}(jt_\varepsilon))$,
therefore by~\eqref{e:chihc}
$$
\WFh(u^{(j)}-u^{(j-1)})\cap \{t^*=jt_\varepsilon\}
\subset \{|1+\tau|<\delta_1/4\}\cap \{\mu>\delta_1\}\cap \bigg\{{H_{\tilde p}\over\partial_\tau \tilde p}\mu<0\bigg\}.
$$
Then all the trajectories of $\hat\varphi^s$ starting on $\WFh(u^{(j)}-u^{(j-1)})\cap\{t^*=jt_\varepsilon\}$ escape
as $s\to +\infty$; by Proposition~\ref{l:eh-escape}, we see that
$$
\widetilde\WFh(u^{(j)}-u^{(j-1)})\cap \{\mu>\delta_1\}=\emptyset,\quad
t^*\in [jt_\varepsilon+T_0,T\log(1/h)],
$$
uniformly in $j$, where $T_0$ is a fixed large constant. Adding these up, we get
\begin{equation}
  \label{e:argentina}
\widetilde\WFh(u^{(j)}-u)\cap \{\mu>\delta_1\}=\emptyset,\quad
t^*\in [jt_\varepsilon+T_0,T\log(1/h)].
\end{equation}
By propagation of singularities for the wave equation
and using that $\WFh(u^{(j)})\cap \{t^*=jt_\varepsilon\}\subset \{\mu>2\delta_1\}$,
we see, uniformly in $j$,
$$
\WFh(u)\cap \{\delta_1\leq \mu\leq 2\delta_1\}\cap \{jt_\varepsilon\leq t^*-T_0\leq (j+1)t_\varepsilon\}\subset \{|1+\tau|<\delta_1/4\}\cap \bigg\{{H_{\tilde p}\over\partial_\tau \tilde p}\mu<0\bigg\}.
$$
Combining this with~\eqref{e:argentina} (and another application of propagation of singularities
for times up to $T_0$), we get uniformly in $t^*\in [0,T\log(1/h)]$,
\begin{equation}
  \label{e:chimp}
\widetilde\WFh(u)\cap \{\delta_1\leq \mu\leq 2\delta_1\}\subset \{|1+\tau|<\delta_1/4\}\cap \bigg\{{H_{\tilde p}\over\partial_\tau \tilde p}\mu<0\bigg\}.
\end{equation}
This implies that for any bounded fixed $T_1$, the semiclassical singularities of $u(t+T_1)$
in $X_{\delta_1}$ come via propagation of singularities from the semiclassical singularities
of $u(t)$ in $X_{\delta_1}$~-- that is, no new singularities arrive from the outside.
We can then apply propagation of singularities
to see that $\widetilde\WFh(u)\cap \{\mu>\delta_1\}\subset\mathcal W$
uniformly in $t^*\in [T_0,T\log(1/h)]$,
where $\mathcal W\subset\mathcal C_+$ is constructed in Lemma~\ref{l:W-construction};
indeed, every trajectory of $\tilde\varphi^s$
starting on $\{|1+\tau|<\delta_1/4\}\cap \{\mu>\delta_1\}\setminus\mathcal W$ escapes
as $s\to +\infty$. Together with~\eqref{e:time-persists} and~\eqref{e:chimp}, this implies that
for $t\geq T_0$, $u$ satisfies the outgoing condition of Definition~\ref{d:we-outgoing}.

We can finally apply Theorem~\ref{t:internal-waves} in~\S\ref{s:kdsu-decay},
giving Theorem~\ref{t:res-dec} and additionally the bounds
(the first one of which is a combination of~\eqref{e:ini-1}, \eqref{e:ini-2},
and~\eqref{e:ini-4})
$$
\begin{gathered}
\|u(t)\|_{\mathcal E}\leq C(h^{-1/2}e^{-(\nu_{\min}-\varepsilon)t/2}+
h^{-1}e^{-(\nu_{\min}-\varepsilon)t}+h^N)\|u(0)\|_{\mathcal E},\\
\|u(t)\|_{\mathcal E}\leq Ce^{\varepsilon t}\|u(0)\|_{\mathcal E}.
\end{gathered}
$$
The first of these bounds gives Theorem~\ref{t:decay} for $(\nu_{\min}-\varepsilon)t\geq\log(1/h)$;
the second one gives
$$
\|u(t)\|_{\mathcal E}\leq Ch^{-1/2}e^{-(\nu_{\min}-3\varepsilon)t/2}\|u(0)\|_{\mathcal E},\quad
(\nu_{\min}-\varepsilon)t\leq \log(1/h),
$$
which is the bound of Theorem~\ref{t:decay} with $\varepsilon$ replaced by $3\varepsilon$.
\end{proof}

\subsection{Results for resonances}
  \label{s:kdsu-resonances}

In this section, we use the results of\switch{~\cite{nhp}}{ Chapter~\ref{c:nhp}}
together with the analysis of~\S\S\ref{s:kdsu-the-metric}, \ref{s:kdsu-trapping}
to prove Theorem~\ref{t:kds-weyl}. As in the statement of this theorem,
we consider the Kerr--de Sitter case $\Lambda>0$.

We first use~\cite[\S 6]{vasy1} to define resonances for Kerr--de Sitter and put them
into the framework of~\switch{\cite[\S4]{nhp}}{\S\ref{s:framework}}.
We use the change of variables~\eqref{e:kds-change};
the metric in the coordinates $(t^*,r,\theta,\varphi^*)$
continues smoothly through the event horizons
to $\widetilde X_{-\delta_1}=\{\mu>-\delta_1\}$, see~\cite[\S6.1]{vasy1}.

Following~\cite[\S6.2]{vasy1} (but omitting the $\rho^2$ factor), we consider the 
the stationary d'Alembert--Beltrami operator $P(z)$, obtained by replacing
$D_{t^*}$ by $-z\in\mathbb C$ in $\Box_{\tilde g}$.
It is an operator on the space slice
$X_{-\delta_1}=\{\mu >-\delta_1\}_r\times \mathbb S^2_{\theta,\varphi}$.
We consider the semiclassical version
$$
P_{\tilde g}(\omega):=h^2P(h^{-1}\omega),
$$
where $h\to 0$ is a small parameter; this definition agrees
with the one used in~\S\ref{s:kdsu-space}.

Following~\cite[\S6.5]{vasy1}, we embed
$X_{-\delta_1}$ as an open set into a compact manifold without boundary $X$,
extend $P(z)$ to a second order differential operator
on $X$ depending holomorphically on $z$, and construct a complex absorbing
operator $Q(z)\in\Psi^2(X)$, whose Schwartz kernel is supported inside the square
of the nonphysical region $\{\mu<0\}$. Then~\cite[Theorem~1.1]{vasy1}
for $\Im z\geq -C_1$ and $s$ large enough depending on $C_1$, $P(z)-iQ(z)$
is a holomorphic family of Fredholm operators $\mathcal X^s\to H^{s-1}(X)$, where
$$
\mathcal X^s=\{u\in H^s(X)\mid (P(0)-iQ(0))u\in H^{s-1}(X)\},
$$
and resonances are defined as the poles of its inverse. The semiclassical version is
\begin{equation}
  \label{e:semicr}
\begin{gathered}
\mathcal P(\omega):=P_{\tilde g}(\omega)-h^2Q(h^{-1}\omega):\mathcal X^s_h\to H^{s-1}_h(X),\\
\|u\|_{\mathcal X^s_h}=\|u\|_{H^s_h(X)}+\|(P(0)-iQ(0))u\|_{H^{s-1}_h(X)}.
\end{gathered}
\end{equation}
We now claim that the operator $\mathcal P(\omega)$ satisfies all the
assumptions of~\switch{\cite[\S\S4.1, 5.1]{nhp}}{\S\S\ref{s:framework-assumptions},
\ref{s:dynamics}}. Most of these assumptions have already been verified in~\S\ref{s:kdsu-space},
relying on the assumptions of~\S\ref{s:kdsu-assumptions} which in turn have
been verified in~\S\S\ref{s:kdsu-the-metric}, \ref{s:kdsu-trapping}.
Given the definition of the spaces $\mathcal H_1:=\mathcal X^s_h$, $\mathcal H_2:=H^{s-1}_h(X)$,
and the Fredholm property discussed above, it remains to verify
assumptions~\switch{(10) and~(11) of~\cite[\S4.1]{nhp}}{\eqref{a:parametrix}
and~\eqref{a:outgoing} of~\S\ref{s:framework-assumptions}}, namely
the existence of an outgoing parametrix. This is done by
modifying the proof of~\cite[Theorem~2.15]{vasy1} exactly as
at the end of~\switch{\cite[\S4.4]{nhp}}{\S\ref{s:framework-ah}}.

Theorem~\ref{t:kds-weyl} now follows directly
by \switch{\cite[Theorems~1 and~2]{nhp}}{Theorems~\ref{t:gaps}
and~\ref{t:weyl-law}}; the constant
$c_{\widetilde K}$ is given by~\eqref{e:c-k-tilde}.

\subsection{Stability}
  \label{s:kdsu-stability}

We finally discuss stability of Theorems~\ref{t:decay}--\ref{t:kds-weyl},
under perturbations of the metric. We assume that $(\tilde X_0,\tilde g)$ is a Lorentzian
manifold which is a small
smooth metric perturbation of the exact Kerr(--de Sitter) (as described in~\S\ref{s:kdsu-the-metric}
and with $M,\Lambda,a$ in a small neighborhood of either~\eqref{e:sds-bound}
or~\eqref{e:kerr-bound}) and which is moreover stationary (that is, $\partial_t$ is Killing).
For perturbations of Kerr ($\Lambda=0$) spacetime, we moreover assume that our perturbation
coincides with the exact metric for large $r$ (this assumption can be relaxed; in fact,
all we need is for~\eqref{e:crude-energy-bound}
and the analysis in Proposition~\ref{l:eh-escape} to apply,
so we may take a small perturbation in the class of scattering metrics). We also assume that the perturbation
continues smoothly across the event horizons in the coordinates~\eqref{e:kds-change}.
The initial value problem~\eqref{e:wewewe} is then well-posed,
as $\{\mu=-\delta_1\}$ is still spacelike. The results of~\cite{vasy1}
still hold, as discussed in~\cite[\S2.7]{vasy1}.

It remains to verify that the assumptions of~\S\ref{s:kdsu-assumptions} still
hold for the perturbed metric. Assumptions~\eqref{w:basic}--\eqref{w:spacelike}
are obviously true. Assumption~\eqref{w:convex} holds with the same function $\mu$,
at least for $\mu(\gamma(s))\in (\delta,\delta_0)$, where $\delta_0$ is fixed
and $\delta>0$ is small depending on the size of the perturbation; 
we take a small enough perturbation so that $\delta\ll\delta_1$, where $\delta_1>0$
is the constant used in Theorem~\ref{t:internal-waves} in~\S\ref{s:kdsu-decay}
and in~\eqref{e:fUnz}. Then the trapped set $\widetilde K$ for the perturbed
metric is close to the original trapped set, which implies assumption~\eqref{w:positive}.
Finally, the dynamical assumptions~\eqref{w:cr}--\eqref{w:nhp} still hold by
the results of~\cite{hps} and the semicontinuity of $\nu_{\min},\nu_{\max},\mu_{\max}$,
as discussed in~\switch{\cite[\S5.2]{nhp}}{\S\ref{s:stability}}.

\smallsection{Acknowledgements} I would like to thank Maciej Zworski for
plenty of advice and constant encouragement, Andr\'as Vasy for many helpful discussions
concerning~\cite{vasy1}, and Mihalis Dafermos for pointing me to~\cite{ynzzzc,hod}.
This work was partially supported by the NSF grant DMS-1201417.


\def\arXiv#1{\href{http://arxiv.org/abs/#1}{arXiv:#1}}


\begin{thebibliography}{0}

\bibitem[AkMa]{ak-ma} Sarp Akcay and Richard A. Matzner,
	\emph{The Kerr--de Sitter universe,\/}
	Class. Quant. Grav. \textbf{28}(2011) 085012.

\bibitem[AnBl]{an-bl} Lars Andersson and Pieter Blue,
	\emph{Hidden symmetries and decay for the wave equation on the Kerr spacetime,\/}
	preprint, \arXiv{0908.2265}.

\bibitem[Ar11a]{aretakis1} Stefanos Aretakis,
	\emph{Stability and instability of extreme {R}eissner-{N}ordstr\"om
              black hole spacetimes for linear scalar perturbations {I},\/}
    Comm. Math. Phys. \textbf{307}(2011), no.~1, 17--63.

\bibitem[Ar11b]{aretakis2} Stefanos Aretakis,
	\emph{Stability and instability of extreme {R}eissner-{N}ordstr\"om
              black hole spacetimes for linear scalar perturbations {II},\/}
    Ann. Henri Poincar\'e \textbf{12}(2011), no.~8. 1491--1538.

\bibitem[Ar12]{aretakis3} Stefanos Aretakis,
	\emph{Decay of axisymmetric solutions of the wave equation on
              extreme {K}err backgrounds,\/}
    J. Funct. Anal. \textbf{263}(2012), no.~9, 2770--2831.

\bibitem[BeCaSt]{bcs} Emanuele Berti, Vitor Cardoso, and Andrei Starinets,
	\emph{Quasinormal modes of black holes and black branes,\/}
	Class. Quant. Grav. \textbf{26}(2009) 163001.

\bibitem[BlSt]{bl-st} Pieter Blue and Jacob Sterbenz,
	\emph{Uniform decay of local energy and the semi-linear wave
              equation on {S}chwarzschild space,\/}
    Comm. Math. Phys. \textbf{268}(2006), no.~2, 481--504.

\bibitem[BoH\"a]{bo-ha} Jean-Fran\c cois Bony and Dietrich H\"afner,
	\emph{Decay and non-decay of the local energy for the wave equation on
		the de Sitter-Schwarzschild metric,\/}
	Comm. Math. Phys. \textbf{282}(2008), 697--719.
	
\bibitem[Ca]{carter} Brandon Carter,
	\emph{Hamilton-Jacobi and Schr\"odinger separable solutions of Einstein's equations,\/}
	Comm. Math. Phys. \textbf{10}(1968), 280--310.

\bibitem[DaRo07]{da-ro-sds} Mihalis Dafermos and Igor Rodnianski,
	\emph{The wave equation on Schwarzschild--de Sitter spacetimes,\/}
	preprint, \arXiv{0709.2766}.

\bibitem[DaRo08]{da-ro-lectures} Mihalis Dafermos and Igor Rodnianski,
	\emph{Lectures on black holes and linear waves,\/}
	preprint, \arXiv{0811.0354}.

\bibitem[DaRo09]{da-ro-redshift} Mihalis Dafermos and Igor Rodnianski,
	\emph{The red-shift effect and radiation decay on black hole spacetimes,\/}
	Comm. Pure Appl. Math. \textbf{62}(2009), no.~7, 859--919.

\bibitem[DaRo10]{da-ro-smalla} Mihalis Dafermos and Igor Rodnianski,
	\emph{Decay for solutions of the wave equation on Kerr exterior space-times
		I-II: The cases of $|a|\ll M$ or axisymmetry,\/}
	preprint, \arXiv{1010.5132}.

\bibitem[DaRo11]{da-ro-boundedness} Mihalis Dafermos and Igor Rodnianski,
	\emph{A proof of the uniform boundedness of solutions to the wave
              equation on slowly rotating Kerr backgrounds,\/}
    Invent. Math. \textbf{185}(2011), no.~3, 467--559.

\bibitem[DaRo12]{da-ro-proceedings} Mihalis Dafermos and Igor Rodnianski,
	\emph{The black hole stability problem for linear scalar perturbations,\/}
	Proceedings of the 12 Marcel Grossmann Meeting, 2012;
	\arXiv{1010.5137}.
	
\bibitem[Da]{da-smoothing} Kiril Datchev,
	\emph{Local smoothing for scattering manifolds with hyperbolic trapped sets,\/}
	Comm. Math. Phys. \textbf{286}(2009), no.~3, 837--850.

\bibitem[DaVa]{da-va1} Kiril Datchev and Andr\'as Vasy,
	\emph{Gluing Semiclassical Resolvent Estimates via Propagation of Singularities,\/}
	Int. Math. Res. Not. IMRN 2012, no.~23, 5409--5443.

\bibitem[DoScSo11]{dss} Roland Donninger, Wilhelm Schlag, and Avy Soffer,
	\emph{A proof of {P}rice's law on {S}chwarzschild black hole
              manifolds for all angular momenta,\/}
	Adv. Math. \textbf{226}(2011), no.~1, 484--540.

\bibitem[DoScSo12]{dss1} Roland Donninger, Wilhelm Schlag, and Avy Soffer,
	\emph{On pointwise decay of linear waves on a {S}chwarzschild black
              hole background,\/}
    Comm. Math. Phys. \textbf{309}(2012), no.~1, 51--86.

\bibitem[Dy11a]{skds} Semyon Dyatlov,
	\emph{Quasi-normal modes and exponential energy decay for the Kerr--de Sitter black hole,\/}
	Comm. Math. Phys. \textbf{306}(2011), 119--163.

\bibitem[Dy11b]{xpd} Semyon Dyatlov,
	\emph{Exponential energy decay for Kerr--de Sitter black holes beyond event horizons,\/}
	Math. Res. Lett. \textbf{18}(2011), 1023--1035.

\bibitem[Dy12]{zeeman} Semyon Dyatlov,
	\emph{Asymptotic distribution of quasi-normal modes for Kerr--de Sitter black holes,\/}
	Annales Henri Poincar\'e \textbf{13}(2012), 1101--1166.

\bibitem[Dy13]{nhp} Semyon Dyatlov,
	\emph{Resonance projectors and asymptotics for $r$-normally hyperbolic trapped sets,\/}
	preprint, \arXiv{1301.5633v2}.

\bibitem[DyGu]{qeefun} Semyon Dyatlov and Colin Guillarmou,
	\emph{Microlocal limits of plane waves and Eisenstein functions,\/}
	to appear in Annales de l'ENS, \arXiv{1204.1305}.	

\bibitem[FKSY]{fksy} Felix Finster, Niky Kamran, Joel Smoller, and Shing-Tung Yau,
	\emph{Decay of solutions of the wave equation in the Kerr geometry,\/}
	Comm. Math. Phys. \textbf{264}(2006), no.~2, 465--503.

\bibitem[FKSYErr]{fksy:erratum} Felix Finster, Niky Kamran, Joel Smoller, and Shing-Tung Yau,
	\emph{Erratum: ``{D}ecay of solutions of the wave equation in the
              {K}err geometry'' [{C}omm. {M}ath. {P}hys. \textbf{264} (2006),
              no. 2, 465--503],\/}
     Comm. Math. Phys. \textbf{280}(2008), no.~2, 563--573.

\bibitem[Ga]{oran} Oran Gannot,
	\emph{Quasinormal modes for AdS--Schwarzschild black holes: exponential convergence to the real axis,\/}
	preprint, \arXiv{1212.1907}.

\bibitem[Ho]{hod} Shahar Hod,
	\emph{Resonance spectrum of near-extremal Kerr black holes in the eikonal limit,\/}
	preprint, \arXiv{1207.5282}.

\bibitem[HoSm11]{ho-sm} Gustav Holzegel and Jacques Smulevici,
	\emph{Decay properties of Klein-Gordon fields on Kerr-AdS spacetimes,\/}
	preprint, \arXiv{1110.6794}.
	
\bibitem[HoSm13]{ho-sm-new} Gustav Holzegel and Jacques Smulevici,
	\emph{Quasimodes and a lower bound on the uniform energy decay rate for Kerr--AdS spacetimes,\/}
	preprint, \arXiv{1303.5944}.

\bibitem[H\"oIII]{ho3} Lars~H\"ormander,
	\emph{The Analysis of Linear Partial Differential Operators. III. Pseudo-Differential Operators,\/}
	Springer Verlag, 1994.

\bibitem[HiPuSh]{hps} Morris W. Hirsch, Charles C. Pugh, and Michael Shub,
	\emph{Invariant manifolds,\/}
	Lecture Notes in Mathematics \textbf{583}, Springer Verlag, 1977.

\bibitem[KaWa]{kay-wald} Bernard S. Kay and Robert M. Wald,
	\emph{Linear stability of Schwarzschild under perturbations which are non-vanishing on the bifurcation 2-sphere,\/}
	Class. Quantum Grav. \textbf{4}(1987), 893.
	
\bibitem[KoSc]{ko-sc} Kostas D. Kokkotas and Bernd Schmidt,
	\emph{Quasi-normal modes of stars and black holes,\/}
	Living Rev. Relativity \textbf{2}(1999), 2.
	
\bibitem[Lu10a]{luk1} Jonathan Luk,
	\emph{Improved decay for solutions to the linear wave equation on a
              {S}chwarzschild black hole,\/}
    Ann. Henri Poincar\'e \textbf{11}(2010), no.~5, 805--880.

\bibitem[Lu10b]{luk2} Jonathan Luk,
	\emph{A vector field method approach to improved decay for solutions to the wave equation on a slowly rotating Kerr black hole,\/}
	preprint, \arXiv{1009.0671}.

\bibitem[MMTT]{mmtt} Jeremy Marzuola, Jason Metcalfe, Daniel Tataru, and Mihai Tohaneanu,
	\emph{Strichartz estimates on Schwarzschild black hole backgrounds,\/}
	Comm. Math. Phys. \textbf{293}(2010), no.~1, 37--83.

\bibitem[Me]{me-scattering} Richard Melrose,
	\emph{Spectral and scattering theory for the Laplacian on asymptotically Euclidean spaces,\/}
	Lecture Notes in Pure and Appl. Math. \textbf{161}, Dekker, New York, 1994.

\bibitem[MeSBVa]{me-sb-va} Richard Melrose, Ant\^onio S\'a Barreto, and Andr\'as Vasy,
	\emph{Asymptotics of solutions of the wave equation on de Sitter--Schwarzschild space,\/}
	preprint, \arXiv{0811.2229}.
	
\bibitem[MeTaTo]{me-ta-to} Jason Metcalfe, Daniel Tataru, and Mihai Tohaneanu,
	\emph{Price's law on nonstationary space-times,\/}
	Adv. Math. \textbf{230}(2012), no.~3, 995--1028.

\bibitem[RHMWBCCGHNSZ]{nature} G.~Risaliti, F.A.~Harrison, K.K.~Madsen, D.J.~Walton,
							   S.E.~Boggs,  F.E.~Christensen, W.W.~Craig, B.W.~Grefenstette,
							   C.J.~Hailey,	E.~Nardini, Daniel Stern, and W.W.~Zhang,
	\emph{A rapidly spinning supermassive black hole at the centre of NGC 1365,\/}
	Nature \textbf{494}(2013), 449--451.						   

\bibitem[SBZw]{sb-zw} Ant\^onio S\'a~Barreto and Maciej Zworski,
	\emph{Distribution of resonances for spherical black holes,\/}
	Math. Res. Lett. \textbf{4}(1997), 103--121.

\bibitem[Sh1]{ysr1} Yakov Shlapentokh-Rothman,
	\emph{Exponentially growing finite energy solutions for the Klein--Gordon equation on sub-extremal Kerr spacetimes,\/}
	preprint, \arXiv{1302.3448}.
	
\bibitem[Sh2]{ysr2} Yakov Shlapentokh-Rothman,
	\emph{Quantitative Mode Stability for the Wave Equation on the Kerr Spacetime,\/}
	preprint, \arXiv{1302.6902}.

\bibitem[Ta]{ta-price} Daniel Tataru,
	\emph{Local decay of waves on asymptotically flat stationary space-times,\/}
	to appear in Amer. J. Math., \arXiv{0910.5290}.

\bibitem[TaTo]{ta-to} Daniel Tataru and Mihai Tohaneanu,
	\emph{A local energy estimate on Kerr black hole backgrounds,\/}
	Int. Math. Res. Not. IMRN 2011, no.~2, 248--292.

\bibitem[Tay]{tay1} Michael Taylor,
	\emph{Partial differential equations I. Basic theory,\/}
	Springer, 2011.

\bibitem[To]{tohaneanu} Mihai Tohaneanu,
	\emph{Strichartz estimates on Kerr black hole backgrounds,\/}
	Trans. Amer. Math. Soc. \textbf{364}(2012), no.~2, 689--702.

\bibitem[Va10]{vasy1} Andr\'as~Vasy,
	\emph{Microlocal analysis of asymptotically hyperbolic and Kerr--de Sitter spaces,\/}
	to appear in Invent. Math., \arXiv{1012.4391v2}.
	
\bibitem[VaZw]{va-zw} Andr\'as Vasy and Maciej Zworski,
	\emph{Semiclassical estimates in asymptotically Euclidean scattering,\/}
	Comm. Math. Phys. \textbf{212}(2000), no.~1, 205--217.

\bibitem[YNZZZC]{ynzzzc} Huan Yang, David Nichols, Fan Zhang, Aaron Zimmerman, Zhongyang Zhang, and Yanbei Chen,
	\emph{Quasinormal-mode spectrum of Kerr black holes and its geometric interpretation,\/}
	Phys. Rev. D \textbf{86}(2012), 104006.
	
\bibitem[YZZNBC]{yzznbc} Huan Yang, Fan Zhang, Aaron Zimmerman, Davis Nichols, Emanuele Berti, and Yanbei Chen,
	\emph{Branching of quasinormal modes for nearly extremal Kerr black holes,\/}
	Phys. Rev. D \textbf{87}(2013), 041502.

\bibitem[Wa]{wald} Robert M. Wald,
	\emph{Note on the stability of the Schwarzschild metric,\/}
	J. Math. Phys. \textbf{20}(1979), 1056.

\bibitem[WuZw]{wu-zw} Jared Wunsch and Maciej Zworski,
	\emph{Resolvent estimates for normally hyperbolic trapped sets,\/}
	Ann. Henri Poincar\'e \textbf{12}(2011), no.~7, 1349--1385.

\bibitem[Zw]{e-z} Maciej Zworski,
	\emph{Semiclassical analysis,\/}
	Graduate Studies in Mathematics \textbf{138} AMS, 2012.

\end{thebibliography}
\end{document}